\newcommand{\remove}[1]{}
\DeclarePairedDelimiter\floor{\lfloor}{\rfloor}
\newcommand\epsi{\epsilon^{-i}}
\newcommand\frho{{\rho^{-1}}}
\newcommand\eps[1]{\epsilon^{-{#1}}}
\newcommand\nfrac{n^{1+{1}/{\kappa}}}
\newcommand\congest{{\sf CONGEST}}
\newcommand\local{{\sf LOCAL}}
\newcommand\congestmo{{\sf CONGEST} model}
\newcommand\localmo{{\sf LOCAL} model}
\newcommand\bbeta{\left(
	\frac{O\left(\log\kappa\rho +\rho^{-1}\right)}{\rho\cdot \epsilon}
	\right)^{\log\kappa\rho +\rho^{-1}+O(1)}}
\newcommand{\removelatexerror}{\let\@latex@error\@gobble}
\newtheorem{claim-subsection}{Claim}[subsection]
\newtheorem{theorem}{Theorem}[section]
\newtheorem{corollary}[theorem]{Corollary}
\newtheorem{lemma}[theorem]{Lemma}
\titleformat{\subsection}[runin]
{\normalfont\large\bfseries}{\thesubsection}{1em}{}
\titleformat{\subsubsection}[runin]
{\normalfont\normalsize\bfseries}{\thesubsubsection}{1em}{}
\title{Near-Additive Spanners In Low Polynomial Deterministic CONGEST Time}
\author{Michael Elkin$^1$ and Shaked Matar$^1$}
\date{$^1$Department of Computer Science, Ben-Gurion University of the Negev, Beer-Sheva, Israel.\\
Email: \texttt{elkinm@cs.bgu.ac.il, matars@post.bgu.ac.il}}
\begin{document}
\maketitle

\begin{abstract}
	Given a pair of parameters $\alpha\geq 1,\beta\geq 0$, a subgraph $G'=(V,H)$ of an $n$-vertex unweighted undirected graph $G=(V,E)$ is called an
	\textit{$(\alpha,\beta)$-spanner} if for every pair $u,v\in V$ of vertices, we have $d_{G'}(u,v)\leq \alpha d_{G}(u,v)+\beta$. If $\beta=0$ the spanner
	is called a \textit{multiplicative} $\alpha$-spanner, and if $\alpha = 1+\epsilon$, for an arbitrarily small $\epsilon>0$, the spanner is said to be a
	\textit{near-additive} one.
	
	Graph spanners \cite{Awerbuch85,PelegS89} are a fundamental and extremely well-studied combinatorial construct, with a multitude of applications in 
	distributed computing and in other areas. Near-additive spanners, introduced in \cite{ElkinP01}, preserve large distances much more faithfully than the more 
	traditional multiplicative spanners. Also, recent lower bounds \cite{AbboudB15} ruled out the existence of arbitrarily sparse purely additive spanners 
	(i.e., spanners with $\alpha=1$), and therefore showed that essentially near-additive spanners provide the best approximation of distances that one 
	can hope for. 
	
	Numerous distributed algorithms, for constructing sparse near-additive spanners, were devised in \cite{Elkin01, ElkinZ06, DerbelGPV09, Pettie10,
		ElkinN17}. In particular, there are now known efficient randomized algorithms in the \congestmo\ that construct such spanners \cite{ElkinN17},
	and also there are efficient deterministic algorithms in the \localmo\ \cite{DerbelGPV09}. However, the only known deterministic \congest-model 
	algorithm for the problem \cite{Elkin01} requires super-linear time in $n$. In this paper we remedy the situation and devise an efficient 
	deterministic \congest-model algorithm for constructing arbitrarily sparse near-additive spanners. 
	
	The running time of our algorithm is \textit{low polynomial}, i.e., roughly $O(\beta \cdot n^\rho)$, where $\rho > 0$ is an arbitrarily small positive constant that affects the
	additive term $\beta$. In general, the parameters of our new algorithm and of the resulting spanner are at the same ballpark as the respective parameters of  the state-of-the-art
	randomized algorithm for the problem due to \cite{ElkinN17}.
\end{abstract}

\section{Introduction}

Graph \textit{spanners} were introduced in the context of distributed algorithms by Awerbuch, Peleg and their co-authors \cite{Awerbuch85,PelegU87_first_spanner,PelegS89,PelegU89,AwerbuchP90a} in the end of the  eighties, and were extensively studied since then. See, e.g., \cite{DerbelGPV09,GhaffariK18,GrossmanP17,AbboudBP17,AbboudB15,ElkinN17,ElkinP01,baswana2007simple,BaswanaKMP10}, and the references therein. Numerous applications of spanners in Distributed Algorithms and other related areas were discovered in 
\cite{Awerbuch85,PelegU87_first_spanner,AingworthCIM99,Shis99,DorHZ00,ElkinP01,Elkin05,ElkinZ06,HenzingerKN16,ThorupZ01_distance_oracles,RodittyTZ05_distance_oracles}.
Also, spanners are a fundamental combinatorial construct, and their existential properties are being extensively explored  \cite{AlthoferDDJS93,PelegS89,ElkinP01,BaswanaKMP10,ThorupZ06,Pettie09,AbboudB15,ElkinNS15,AbboudBP17}. 

In this paper, we focus on \textit{unweighted}, undirected graphs. Given a graph $G=(V,E)$, a subgraph $G'=(V,H)$, $H\subseteq E$, is said to be a (multiplicative) \textit{$t$-spanner} of $G$, if for every pair $u,v\in V$ of vertices, we have $d_{G'}(u,v)\leq t\cdot d_{G}(u,v)$, for a parameter $t\geq 1$, where $d_G$ (respectively, $d_{G'}$) stands for the distance in the graph $G$ (respectively, in the subgraph $G'$). 

A fundamental tradeoff for multiplicative spanners is that for any $\kappa=1,2,\dots$, and any $n$-vertex graph $G= (V,E)$, there exists a $(2\kappa-1)$-spanner with $O(\nfrac)$ edges \cite{AlthoferDDJS93,PelegS89}. Assuming Erd\"os-Simonovits girth conjecture (see e.g., \cite{ErdosS82}), this tradeoff is optimal. 
Efficient distributed algorithms for constructing multiplicative spanners that (nearly) realize this tradeoff were given in \cite{AwerbuchBCP93,Cohen98,baswana2007simple,Elkin07,derbelGPV08,ElkinN17,DerbelMZ10,GrossmanP17,GhaffariK18}.

A different variety of spanners, called \textit{near-additive spanners}, was presented by Elkin and Peleg in \cite{ElkinP01}. They showed that for every $\epsilon>0$ and $\kappa=1,2,\dots$, there exists $\beta=\beta(\kappa,\epsilon)$, such that for every $n$-vertex graph $G=(V,E)$ there exists a $(1+\epsilon,\beta)$-spanner $G'=(V,E)$, $H\subseteq E$, with $O_{\epsilon,\kappa}(\nfrac)$ edges. (The subgraph $G'$ is called a \textit{$(1+\epsilon,\beta)$-spanner} of $G$ if for every pair $u,v\in V$ of vertices, it holds that $d_{G'}(u,v)\leq (1+\epsilon)d_G(u,v)+\beta$.)

In \cite{ElkinP01}, the additive error $\beta$ is given by $\beta_{EP}=\left(\frac{{\log \kappa}}{\epsilon}\right)^{{\log \kappa}-1}$, and it remains the state-of-the-art. These spanners approximate large distances much more faithfully than multiplicative spanners. Recent lower bounds of Abboud and Bodwin \cite{AbboudB15}, based on previous results of Coppersmith and Elkin \cite{CoppersmithE05}, showed that this type of spanners is essentially the best one can hope for, as in general arbitrarily sparse purely additive spanners (i.e., with $\epsilon=0$) do not exist.
Abboud et al. \cite{AbboudBP17} showed that the dependence 
on $\epsilon$ in \cite{ElkinP01} is nearly tight. Specifically, they showed a lower bound of $\beta=\Omega\left(\frac{1}{\epsilon{\log k}}\right)^{{\log \kappa}-1}$.

Near-additive spanners were extensively studied in the past two decades \cite{ElkinP01,Elkin05,ElkinZ06,ThorupZ06,DerbelGP07,DerbelGPV09,Pettie09,Pettie10,AbboudB15,AbboudBP17,ElkinN17}, both in the context of distributed algorithms and in other settings. (In particular, they were used in the streaming setting \cite{ElkinZ06,ElkinN17}, and for dynamic algorithms \cite{BernsteinR11,HenzingerKN16}.) 
Efficient distributed deterministic constructions of such spanners using large messages (the so-called \localmo; see Section \ref{sec Computational Model} for relevant definitions) were given by Derbel et al. \cite{DerbelGPV09}. They devised two constructions. In the first one, the running time is $O(\beta)$, and the additive term $\beta$ is is roughly $\eps{(\kappa-2)}$, i.e., exponentially larger than $\beta_{EP}$. In the second construction of \cite{DerbelGPV09}, the running time is $O(\beta\cdot 2^{O(\sqrt{{\log n}})})$, and $\beta$ is roughly the same as in \cite{ElkinP01}, i.e., $\beta\approx \beta_{EP}$.

Derbel et al. \cite{DerbelGPV09} explicitly raised the question of bulding $(1+\epsilon,\beta)$-spanners in the distributed \congestmo\ (i.e., using short messages; see Section \ref{sec Computational Model} for the definition), and this question remained open since their work. Efficient distributed randomized algorithms (in the \congestmo) for constructing near-additive spanners were given in \cite{ElkinZ06,Pettie10,ElkinN17}. The state-of-the-art algorithm is due to \cite{ElkinN17}. They devised a randomized algorithm that constructs $(1+\epsilon,\beta)$-spanners with $O_{\epsilon,\rho,\kappa}(\nfrac)$ edges, with a \textit{low polynomial} running time, that is, running time ${O}_{\epsilon,\kappa,\rho}(n^\rho)$, for arbitrarily small $\epsilon,\rho,{1}/{\kappa}>0$, where $\beta_{EN}= \beta(\epsilon,\rho,\kappa)=O\left(
\frac{{\log \kappa\rho}+\rho^{-1}}{\epsilon}
\right)^{{\log \kappa\rho} +\rho^{-1}+ O(1)}$. Note that the additive term $\beta_{EN}$ of \cite{ElkinN17} is at the same ballpark as the existential bound $\beta_{EP}$ of \cite{ElkinP01}.

The importance of devising deterministic distributed algorithms that work in the \congestmo\ for constructing spanners was recently articulated by Barenboim et al. \cite{BarenboimEG15}, Grossman and Parter \cite{GrossmanP17} and Ghaffari and Kuhn \cite{GhaffariK18}. All these authors devised algorithms for constructing multiplicative spanners. The only deterministic \congest-model algorithm for building $(1+\epsilon,\beta)$-spanners with $\tilde{O}_{\epsilon,\kappa,\rho}(\nfrac)$ edges is due to \cite{Elkin05}. That algorithm suffers however from a superlinear in $n$ running time, $O(n^{1+\frac{1}{2\kappa}})$, and from additive term
$\beta_E=\left(\frac{\kappa}{\epsilon}\right)^{{\log \kappa}}\cdot (\rho^{-1})^{\rho^{-1}}$, which is significantly larger than $\beta_{EN}$ (the additive term of \cite{ElkinN17}). Also, the number of edges in the resulting spanner is by a $polylog(n)$ factor larger than the desired bound of $O_{\epsilon,\kappa,\rho}(\nfrac)$ (see \cite{ElkinP01,Pettie10,ElkinN17,AbboudBP17}). 

In the current paper we devise a distributed \textit{deterministic} algorithm that works in the \congestmo, and for any $\epsilon>0,\rho>0$ and $\kappa=1,2,\dots,$ and for any $n$-vertex graph, it constructs a $(1+\epsilon,\beta)$-spanner with  $O_{\epsilon,\kappa,\rho}(\nfrac)$ edges in low polynomial, specifically, $O_{\epsilon,\kappa,\rho}(n^\rho)$ time, and its additive term $\beta$ is at the same ballpark as $\beta_{EN}$. Specifically, our additive term satisfies:
\begin{equation}\label{eq beta}
\beta = \bbeta.
\end{equation}

See Table \ref{table near-additive c d} for a concise comparison of our algorithm with the only previously-existing distributed deterministic \congest-model algorithm for constructing near-additive spanners  \cite{Elkin01}. See also Table \ref{table near-additive} in Appendix \ref{sec append prev res} for a concise overview of distributed algorithms, both deterministic and randomized, in both \local\ and \congest\ models, for computing near-additive spanners.

\subsection{Technical Overview and Related Work}

Our algorithm builds upon Elkin-Peleg's superclustering-and-interconnection approach (see \cite{ElkinP01,ElkinN16,ElkinN17} for an elaborate discussion of this technique). Its randomized distributed implementation by Elkin and Neiman \cite{ElkinN17} relies on random sampling of clusters. Those sampled clusters join nearby unsampled clusters to create superclusters, and this is iteratively repeated.

Our basic idea is to replace the random sampling by constructing ruling sets (see Section \ref{sec def ruling setes} for definition) using a deterministic procedure by Schneider et al. \cite{sew}. Recently, Ghaffari and Kuhn \cite{GhaffariK18} used network decompositions for derandomizing algorithms that construct multiplicative spanners, hitting and dominating sets. 
All existing deterministic constructions of network decompositions \cite{AwerbuchGLP89,PanconesiS96,BarenboimEG15} employ ruling sets. On the other hand,
 our approach avoids constructing network decomposition, but rather \textit{directly} uses ruling sets for derandomization.

\begin{table}[H]
	\centering
	\begin{adjustbox}{width=1\textwidth}
		
		\begin{tabular}{|l|l|l|l|}
			\hline
			\textbf{authors}&
			\textbf{stretch}	& 	
			\textbf{size} &	
			\textbf{running time}\\ 
			\hline

			\hline
			\cite{Elkin05}	&
			$(1+\epsilon,\beta_E)$,
			$\beta_E= \left(\frac{\kappa}{\epsilon}\right)^{O({\log \kappa})}\cdot
			(\rho^{-1})^{\rho^{-1}} $
			& 	
			$\tilde{O}(\beta_E\cdot \nfrac)$&	
			$O(n^{1+\frac{1}{2\kappa}})$\\

			\hline
			\hline
			\textbf{New}&
			$\left(1+\epsilon,\beta\right)$, 	$\beta=			 
			\bbeta $	& 	
			$ O\left(\beta\cdot\nfrac\right)$ &	
			$O\left(\beta\cdot n^\rho\cdot\rho^{-1}\right)$\\

			\hline
		\end{tabular}
	\end{adjustbox}
	\caption{Comparison of existing and new deterministic \congest-model algorithms for constructing near-additive spanners.}
		 \label{table near-additive c d}	
\end{table}

\subsection{Outline}

Section \ref{sec preliminaries} provides necessary definitions for understanding this paper. Section \ref{sec construction of spanners} contains our spanner construction along with an analysis of the running time, size and stretch of the spanner. 
Appendix \ref{section Detecting popular clusters}  contains  a variant of the Bellman-Ford algorithm that is used by the construction in Section \ref{sec construction of spanners}. Finally, Appendix \ref{sec append prev res} contains a table 
that summarizes known results concerning near-additive spanners.


\subsection{Preliminaries }\label{sec preliminaries}

\subsubsection{The Computational Model}\label{sec Computational Model}
In the distributed model \cite{Peleg00book} we have processors residing in vertices of the graph. The processors communicate with their graph neighbors in synchronous rounds. In the \congestmo, messages are limited to $O(1)$ words, i.e., $O(1)$ edge weights or ID numbers. In the \localmo, the length of messages is unbounded. 
The running time of an algorithm in the distributed model is the worst case number of communication rounds that the algorithm requires.
Throughout this paper, we assume that all vertices have unique IDs such that for all $v,\ v.ID\in [n]$, and all vertices know their ID. Moreover, we assume that all vertices know the number of vertices $n$. In fact, our results apply even if vertices know an estimate $\tilde{n}$ for $n$, where $n\leq \tilde{n}\leq poly(n)$.

\subsubsection{Ruling Sets}\label{sec def ruling setes}

Given a graph $G= (V,E)$, a set of vertices $W\subseteq V$ and parameters $\zeta,\eta \geq 0$, a set of vertices $A\subseteq W$ is said to be
a \textit{$(\zeta,\eta)$-ruling set} for $A$ if for every pair of vertices $u,v\in A$, the distance between them in $G$ is at least $\zeta$, and for every $u\in W$ there exists a representative $v\in A$ such that the distance between $u,v$ is at most $\eta$.

\section{A Deterministic Distributed Construction of Near-Additive Spanners}\label{sec construction of spanners}
\subsection{Overview}
Let $G=(V,E)$ be an unweighted, undirected graph on $n$ vertices, and let $\epsilon>0$, $\kappa=1,2,\ldots$ and ${1}/{\kappa}\leq \rho<{1}/{2}$. 
The algorithm constructs a sparse $(1+\epsilon,\beta)-spanner $ $H=(V,E_H)$ of $G$ where $\beta(\kappa,\epsilon,\rho)$ is a function of the parameters $\kappa,\epsilon,\rho$, given by \cref{eq beta},
and $|H|=O(\beta \nfrac) $, in deterministic time ${O}_{\epsilon,\kappa,\rho}(n^\rho)$ in the CONGEST model. 

The overall structure of our algorithm is reminiscent of that in \cite{ElkinN17}. 
However, unlike the algorithm of \cite{ElkinN17}, the current algorithm does not use any randomization. We first provide a high-level overview of the algorithm.

The algorithm begins by initializing $E_H$ as an empty set, and proceeds in phases. It begins by partitioning $V$ into singletons clusters $P_0= \{\{v\} \ | \ v\in V\}$. Each phase $i$, for $i=0,\ldots,\ell$, receives as input a collection of clusters $P_i$, and distance and degree threshold parameters $\delta_i,\deg_i$. The parameters $\delta_i$ and $ deg_i$ will be specified later. 
We set the maximum index of a phase $\ell$ by $\ell= \floor*{{\log \kappa\rho}}+ \lceil\frac{\kappa+1}{\kappa\rho}\rceil-1$, as in \cite{ElkinN17}.

Each of the clusters constructed by our algorithm will have a designated \textit{center} vertex.
Throughout the algorithm, we denote by $r_C$ the center of the cluster $C$ and say that $C$ is \textit{centered around} $r_C$. For a cluster $C$, define $Rad(C)= max\{d_H(r_C,v)\ | \ v\in C\}$, and for a set of clusters $P_i$, define 
$Rad(P_i)=max\{Rad(C)\ | \ C\in P_i\}$. For a collection $P_i$, we denote by $S_i$ the set of centers of $P_i$, i.e., $S_i= \{r_C\ |\ C\in P_i\}$.

Intuitively, in each phase we wish to add paths between pairs of cluster centers that are close to one another. However, if a center has many centers close to it, i.e., it is \emph{popular}, this can add too many edges to the spanner. 
Two cluster centers $r_C,r_{C'}$ are said to be \emph{close}, if $d_G(r_C,r_{C'})\leq \delta_i$. A cluster $C$ and its center $r_C$ are said to be \emph{popular} if $r_C$ has at least $deg_i$ cluster centers that are close to it. In order to avoid adding too many edges to the spanner, each phase consists of two steps, the \textit{superclustering} step and the \textit{interconnection} step.

\begin{figure}
	\begin{center}
		\includegraphics[scale=0.25]{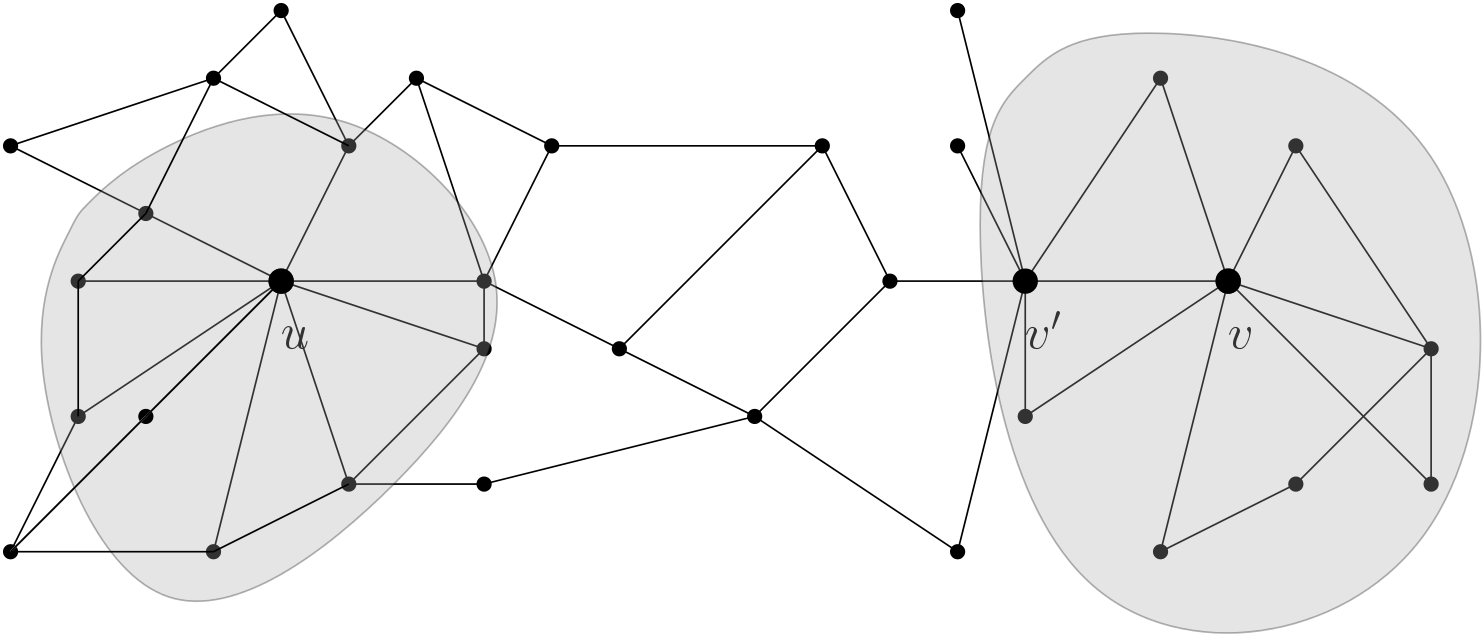}
		\caption{Superclusters centered at the chosen popular cluster centers $u,v$ are grown. (Vertices in the gray area of $u,v$ will join the superclusters of $u,v$, respectively). The popular cluster center $v'$ is covered by the supercluster centered at $v$.}
		\label{fig form clusters}
	\end{center}
	
\end{figure}

The \emph{superclustering} step of phase $i$ detects popular clusters, and builds larger clusters around them (see Figure \ref{fig form clusters}
for an illustration). For each new cluster $C$, a BFS tree of the cluster $C$ is added to the spanner $H$ (see Figure \ref{fig create bfs}).
 The collection of new clusters is the input for phase $i+1$. This allows us to defer the work on highly dense areas in the graph to later phases of the algorithm.

\begin{figure}
	
	\begin{center}
		\includegraphics[scale=0.25]{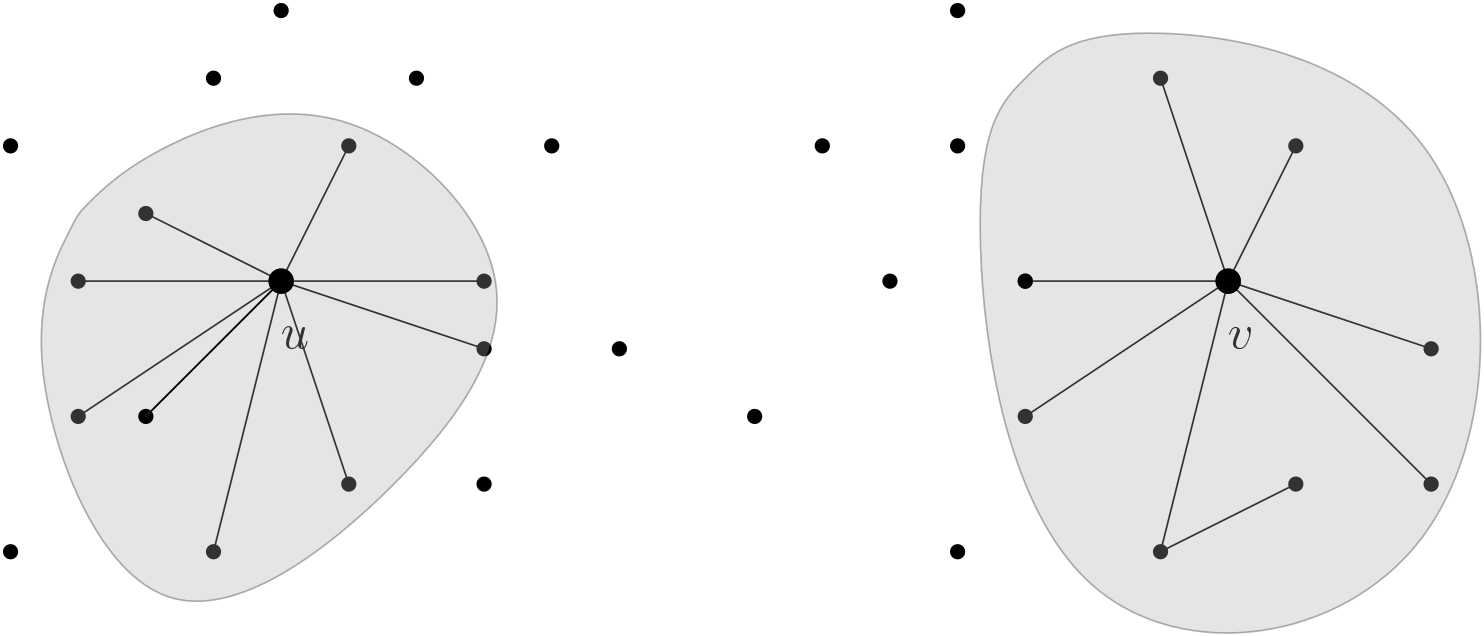}
		\caption{BFS trees of the new superclusters, rooted at the centers of the superclusters, are added to $H$.}
		\label{fig create bfs}
	\end{center}
\end{figure}

In the \textit{interconnection} step of phase $i$, clusters that have not been superclustered in this phase are connected to one another. 
For each center $r_C$ that is not superclustered in this phase, paths are added to all centers of clusters that are close to it. As the center $r_C$ is not superclustered, it is not \textit{popular}, and so we will not add too many paths to the resulting spanner $H$. 

In the last phase $\ell$, the superclustering step is skipped and we move directly to the interconnection step. 
We will show that the number of clusters in the final collection of clusters $P_\ell$ is small. Specifically, we show that even if every pair of clusters in $P_\ell$ is interconnected by a path, the number of added edges will still be relatively small. Thus the superclustering step of phase $\ell$ can be safely skipped.
This concludes the high-level overview of the algorithm.

The distance parameters are defined as follows. Define

\begin{equation}\label{def_ri+1}
R_0=0 \ \textit{ and } \ R_{i+1} = {2\epsi}/{\rho}+({5}/{\rho})R_i.
\end{equation}

We will show (see Lemma \ref{ri_bounds_pi lemma} below) that $R_i$ is an upper bound on the radius of clusters in $ P_i$, that is, $Rad( P_i)\leq R_i$ for all $0\leq i\leq \ell$. Also, the distance threshold parameter $\delta_i$ is given by: 
\begin{equation}\label{eq def deltai}
\delta_i = \epsi+2R_i.
\end{equation}

The degree threshold $deg_i$ affects the number of clusters in each phase, and thus the number of phases of the algorithm. It also affects the number of edges added to the spanner by the interconnection step. Ideally, we would like to set $deg_i= n^\frac{2^i}{\kappa}$ for all $i$.
However, the algorithm requires $\Omega(deg_i)$ time to execute phase $i$. Thus, we must keep $deg_i\leq n^\rho$ for all phases, as we aim at running time of roughly $O(n^\rho)$.

For this reason, we partition phases $0,1,\dots,\ell-1$ into two stages, the \emph{exponential growth stage} and the \emph{fixed growth stage}. In the \textit{exponential growth stage}, that consists of phases 
$ 0,\ldots,i_0=\floor*{{\log (\kappa\rho)}}$, we set $deg_i = n^\frac{2^i}{\kappa}$. In the \textit{fixed growth stage}, which consists of phases $ i_0+1,\ldots,i_1 = i_0+\lceil\frac{\kappa+1}{\kappa\rho}\rceil-2=\ell-1$, we set $deg_i = n^\rho$. Observe that for every index $i$, we have $deg_i\leq n^\rho$. The concluding phase $\ell$ is not a part of either these stages, as the number of clusters in $P_{\ell}$ is at most $n^\rho$. For notational purposes, we define $deg_\ell=n^\rho$. However, we note that as there are at most $n^\rho$ clusters in $P_\ell$, there are no popular clusters in this phase.

We will now discuss the differences between the current algorithm and the algorithm of \cite{ElkinN17}. In \cite{ElkinN17}, the superclustering step uses a randomized selection of vertices to cover the popular cluster centers. The current algorithm replaces this selection with a deterministic procedure that computes a ruling set that covers these centers. For this aim, we use the algorithm of \cite{sew,KuhnMW18} that computes ruling sets efficiently in the \congestmo.
As a result of the deterministic procedure, the distance parameter (and as a result, the radii of clusters) for each phase in the current algorithm is larger than in \cite{ElkinN17}. For this reason, the additive term that the current algorithm provides is slightly inferior to the additive term of \cite{ElkinN17}. The number of phases and the degree parameter sequence remain as in \cite{ElkinN17}. 

The deterministic detection of popular clusters' centers requires vertices to acquire information regarding their $\delta_i$-neighborhood. This information is later utilized by our algorithm in the interconnection step as well. Thus, while in the algorithm of \cite{ElkinN17} the interconnection step executes Bellman-Ford explorations, the current algorithm relies on already acquired information. Therefore, the current execution of the interconnection step is simpler than the execution of the interconnection step in \cite{ElkinN17}. 


\subsection{Superclustering}

This section provides details of the execution of the superclustering steps for all phases $i\in [ 0,i_1= \ell-1 ]$, i.e., all phases other than the (concluding) phase $\ell$ on which there is no superclustering step. 
Recall that for the exponential growth stage, we have $deg_i=n^\frac{2^i}{\kappa}$. For the fixed growth stage, we have $deg_i = n^{\rho}$.
The input to phase $i$ is a set of clusters $P_i$. The phase begins by detecting popular clusters. To do so, we employ Algorithm \ref{Alg number of near neighbors}, described in Appendix \ref{section Detecting popular clusters}, with the input $(G,P_i,\deg_i,\delta_i)$.
The following theorem, which is proven in Appendix \ref{section Detecting popular clusters}, summarizes the properties of the returned vertex set $W_i$. 

\begin{theorem}\label{theorem popular} 
	Given a graph $G=(V,E)$, a collection of clusters $P_i$ centered around cluster centers $S_i$ and parameters $\delta_i,\ \deg_i$, Algorithm \ref{Alg number of near neighbors} returns a set $W_i$ in $O(deg_i\cdot\delta_i)$ time such that: 
	\begin{enumerate}
		\item $W_i$ is the set of all centers of popular clusters from $P_i$.
		\item Every cluster center $r_C\in S_i$ that did not join $W_i$ knows the identities of all the centers $r_{C'}\in S_i$ such that $d_G(r_C,r_{C'})\leq \delta_i$. Furthermore, for each pair of such centers $r_C,r_{C'}$, there is a shortest path $\pi$ between them such that all vertices on $\pi$ know their distance from $r_{C'}$.
	\end{enumerate} 
	
\end{theorem}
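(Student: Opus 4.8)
The plan is to realize Algorithm~\ref{Alg number of near neighbors} as a \emph{truncated multi-source Bellman--Ford} exploration of depth $\delta_i$, and to establish the three assertions — the running-time bound, Property~1, and Property~2 — in that order, with the correctness of the truncation rule as the technical core. In the implementation each center $r_C\in S_i$ launches an exploration carrying its own ID, and every vertex $v$ maintains a list $L_v$ of at most $\deg_i$ pairs $(\mathit{id},d)$, where $d$ is $v$'s current distance estimate to the center $\mathit{id}$; $v$ retains only centers it has heard of at distance at most $\delta_i$, and among those the $\deg_i$ pairs smallest in the lexicographic order on $(d,\mathit{id})$ (ties broken by ID), each with a parent pointer to the neighbor that supplied its best estimate. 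The exploration runs for $O(\delta_i)$ synchronous iterations; in each iteration every vertex transmits the contents of $L_v$, one $O(1)$-word pair per round — hence $O(\deg_i)$ rounds per iteration — and then rebuilds $L_v$ by merging its old list with the received lists (incrementing each received distance by one, discarding distances exceeding $\delta_i$, and re-truncating to the $\deg_i$ lexicographically smallest pairs). A center $r_C$ is placed in $W_i$ precisely when $|L_{r_C}|\ge\deg_i$ at termination. The running-time bound is then immediate: $O(\delta_i)$ iterations times $O(\deg_i)$ rounds each is $O(\deg_i\cdot\delta_i)$ in total, with all messages of size $O(1)$ words.

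For Property~1 I would prove, by induction on the iteration count $t$, the invariant that after iteration $t$ the list $L_v$ consists exactly of the pairs $(r_{C'},d_G(v,r_{C'}))$ for the (at most $\deg_i$) centers $r_{C'}$ that are lexicographically smallest in $(d_G(v,\cdot),\mathit{id})$ among all centers lying within distance $\min(t,\delta_i)$ of $v$. The one delicate point is that truncation never discards a center it should keep: if a center $x$ at distance $d\le\delta_i$ from $v$ fails to appear in $L_v$, then some neighbor $u$ of $v$ on a shortest $v$--$x$ path must have dropped $x$, so by the inductive hypothesis $u$'s list held $\deg_i$ centers, each at distance at most $d-1$ from $u$ and each lexicographically below $x$ at $u$; all of these lie at distance at most $d$ from $v$ and, after the $+1$, remain lexicographically below $x$ at $v$, so $x$ genuinely is not among $v$'s $\deg_i$ nearest centers. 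Instantiating the invariant at the final iteration yields $|L_{r_C}|=\min\bigl(\deg_i,\,|\{r_{C'}\in S_i : d_G(r_C,r_{C'})\le\delta_i\}|\bigr)$, so $|L_{r_C}|\ge\deg_i$ holds exactly when $r_C$ is popular, which is Property~1.

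Property~2 follows from the same triangle-inequality bookkeeping. If $r_C\notin W_i$, then $r_C$ has fewer than $\deg_i$ centers within distance $\delta_i$, so by the invariant no truncation at $r_C$ ever removes any of them, and $L_{r_C}$ records all of them with correct distances — the identities part. For the shortest-path part, fix $r_{C'}$ with $d_G(r_C,r_{C'})\le\delta_i$ and follow parent pointers from $r_C$ back toward $r_{C'}$; this traces a shortest $r_C$--$r_{C'}$ path $\pi$. Every vertex $w$ on $\pi$ satisfies $d_G(w,r_{C'})\le d_G(r_C,r_{C'})\le\delta_i$, so $w$ must hold the pair $(r_{C'},d_G(w,r_{C'}))$ unless it was truncated at $w$; but truncation at $w$ would force $w$ to hold $\deg_i$ centers each at distance at most $d_G(w,r_{C'})$ from $w$, hence each at distance at most $d_G(r_C,w)+d_G(w,r_{C'})=d_G(r_C,r_{C'})\le\delta_i$ from $r_C$, contradicting that $r_C$ is not popular. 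The same argument shows no such pair is ever evicted during the run, so the parent-pointer chain is consistent with the final lists; hence every vertex of $\pi$ knows its distance to $r_{C'}$, as claimed.

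I expect the main obstacle to be the inductive correctness of the truncated Bellman--Ford — making rigorous that the truncation rule is ``safe'', i.e.\ never discards a pair that the final list must contain — together with arranging things so that this single observation about a vertex's list drives both the popularity test (Property~1) and the lossless-neighborhood-plus-shortest-path statement (Property~2). Some care is also needed so that the lexicographic tie-breaking on $(d,\mathit{id})$ is globally consistent, so that different vertices never reach incompatible truncation decisions along a shared path.
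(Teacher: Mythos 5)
Your proposal is correct but takes a genuinely different route from the paper's proof. The paper's Algorithm~\ref{Alg number of near neighbors} is a hop-count-synchronized BFS in which a vertex that receives messages about more than $\deg_i$ centers at a given hop count forwards an \emph{arbitrary} $\deg_i$ of them; the paper then proves only the weaker, one-sided invariant (Lemma~\ref{lemma induc know neighbors}) that after $j$ phases every vertex knows \emph{at least} $\min\{\deg_i,\,|\Gamma^{(j)}(u)\cap S_i|\}$ centers within distance $j$. That suffices because the theorem is purely a count-threshold statement, not a ranking statement: a popular center sees $\geq \deg_i$ names, an unpopular center cannot possibly see more than $|\Gamma^{(\delta_i)}(r_C)\cap S_i|<\deg_i$ names, so the lower bound is tight, and for the shortest-path half of Property~2 the paper simply observes that the message about $r_{C'}$ that reached $r_C$ did so along a shortest path and was recorded by every relay vertex on the way. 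You instead impose a lexicographic $(d,\mathit{id})$ truncation rule and prove the much stronger two-sided invariant that each list is \emph{exactly} the $\deg_i$ lexicographically smallest reachable centers. This buys you a clean triangle-inequality contradiction for the shortest-path claim (if some $w$ on $\pi$ ever evicted $r_{C'}$, the $\deg_i$ blocking centers would all lie within $\delta_i$ of $r_C$, making $r_C$ popular), at the cost of having to check that truncation is safe across neighboring vertices whose lex orders $(d_G(v,\cdot),\mathit{id})$ differ --- a point you flagged yourself, and which does go through by a case split on whether a blocking center beats $r_{C'}$ by distance or only by ID. In short, your lexicographic derandomization is more machinery than the paper strictly needs (arbitrary truncation plus the one-sided counting lemma already suffice), but it is valid and makes the second half of Property~2 considerably more explicit than the paper's one-sentence argument.
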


Next, we wish to select a subset of vertices from $W_i$ to grow large clusters around them. On the one hand, the number of clusters in $P_{i+1}$ must be significantly smaller than the number of clusters in $P_{i}$. On the other hand, all cluster centers in $W_i$ must be superclustered. In the centralized version of this algorithm, Elkin and Peleg \cite{ElkinP01} run multiple consecutive scans to detect popular clusters and build superclusters around them. In the distributed model of computation, this requires too much time. In the distributed randomized version of this algorithm, Elkin and Neiman \cite{ElkinN17} randomly select centers to grow superclusters around. In this deterministic distributed version, we select vertices by constructing a \emph{ruling set} for the set $W_i$. We use the algorithm given in \cite{sew,KuhnMW18},
on the set $W_i$ with parameters $q=2\delta_i,c=\rho^{-1}$.
The following theorem summarizes the properties of the returned ruling set $RS_i$. 

\begin{theorem}
	\label{theorem ruling set}
	{\normalfont \cite{sew,KuhnMW18}}
	Given a graph $G=(V,E)$, a set of vertices $W_i\subseteq V$ and parameters $q\in\{1,2,\ldots \},c>1$, 
	one can compute a $(q+1,cq)$-ruling subset for $W_i$ in $O(q\cdot c\cdot n^\frac{1}{c})$ deterministic time, in the \congestmo. 
\end{theorem}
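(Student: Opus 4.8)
The plan is to prove this by the classical deterministic ruling-set scheme: write the IDs in a base $b=\lceil n^{1/c}\rceil$, so that every ID in $[n]$ is a string of at most $c$ digits from $\{0,\dots,b-1\}$ (this uses $b^{c}\ge n$), and then peel off one digit per phase, maintaining a shrinking chain of candidate sets $W=C_0\supseteq C_1\supseteq\cdots\supseteq C_c=:A$. The two invariants I would carry through phase $j$ are: (i) \emph{domination} --- every vertex of $C_{j-1}$ lies within $G$-distance $q$ of some vertex of $C_j$; and (ii) \emph{separation} --- any two vertices of $C_j$ at $G$-distance at most $q$ agree on their first $j$ digits. Composing (i) shows that every vertex of $W=C_0$ is within $G$-distance $jq$ of $C_j$, so at the end every vertex of $W$ has a representative in $A=C_c$ at distance at most $cq$. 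And after $c$ phases (ii) forces any two vertices of $A$ at distance at most $q$ to share all $c$ digits, hence be equal, so distinct vertices of $A$ are at $G$-distance at least $q+1$. Together with $A\subseteq W$, this is exactly a $(q+1,cq)$-ruling subset of $W$.

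I would implement phase $j$ in $b$ sequential sub-phases $t=0,1,\dots,b-1$, the sub-phase $t$ handling the candidates of $C_{j-1}$ whose $j$-th digit equals $t$: such a candidate $v$ is admitted into $C_j$ unless some vertex admitted into $C_j$ during an earlier sub-phase lies within $G$-distance $q$ of $v$. To evaluate this test one floods, for $q$ rounds in $G$, from the already-admitted vertices, each vertex forwarding the smallest distance it has heard of (and, optionally, the ID of the nearest admitted vertex, to fix representatives); this costs $O(q)$ rounds per sub-phase, with only $O(1)$ machine words sent per edge per round, so it stays within the \congestmo, and non-$W$ vertices simply relay messages. Checking the invariants is then short: for (i), a candidate $v$ with $j$-th digit $t$ either enters $C_j$ itself or is blocked, and being blocked means some already-committed vertex of $C_j$ is within distance $q$ of it; for (ii), if $u,v\in C_j$ had different $j$-th digits, the one with the larger digit --- say $v$ --- would, during its sub-phase, see the already-committed vertex $u$ within distance at most $q$ and hence be blocked, contradicting $v\in C_j$, so their $j$-th digits agree, and agreement on digits $1,\dots,j-1$ is inherited from membership in $C_{j-1}$. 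Summing $c$ phases, each with $b=\lceil n^{1/c}\rceil$ sub-phases of $O(q)$ rounds, gives a deterministic running time of $O(q\cdot c\cdot n^{1/c})$.

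The point I would be most careful about is precisely why domination survives a phase, because the tempting shortcut ``in every $q$-ball keep only the candidates attaining the minimum $j$-th digit'' does \emph{not} work: a candidate blocked by a lower-digit neighbour has no guarantee that that neighbour is itself kept, and chasing the chain of lower-digit blockers inflates the covering radius beyond $q$ per phase. The sequential sub-phase structure is exactly what repairs this, since a blocked candidate is blocked by a vertex already \emph{committed} to $C_j$, whose status (and the representative it provides) is final. A secondary bookkeeping point is the base representation --- $b=\lceil n^{1/c}\rceil$ guarantees $c$ digits suffice, and the argument is unchanged if vertices know only a polynomial estimate $\tilde n\ge n$, replacing $n^{1/c}$ by $\tilde n^{1/c}$ throughout. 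Everything else (that $A\subseteq W$, that only $G$-distances are used, that intermediate vertices merely relay) is immediate from the construction; alternatively, one may simply invoke the construction of \cite{sew,KuhnMW18} verbatim, the above being its self-contained distilled form.
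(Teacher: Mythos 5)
The paper does not prove this theorem; it simply cites it from \cite{sew,KuhnMW18}, so there is no ``paper's proof'' to compare against. Your reconstruction is a correct, self-contained rendering of the digit-peeling argument that underlies those works (going back to Awerbuch--Goldberg--Luby--Plotkin): with base $b=\lceil n^{1/c}\rceil$ and $c$ digits, the two invariants you state are exactly the right ones, and you have put your finger on the one genuinely subtle point --- that domination must be argued through the sequential sub-phase structure, where a blocker is already \emph{committed}, rather than through a ``local minimum digit'' rule, since chasing a chain of lower-digit blockers would inflate the covering radius well beyond $q$ per phase (potentially by a factor of $b$). Separation compounding to $q+1$ after all $c$ digits are fixed, the $A\subseteq W$ inclusion, and the $O(q\cdot c\cdot n^{1/c})$ round count from $c$ phases $\times$ $b$ sub-phases $\times$ a depth-$q$ flood (with $O(1)$ words per edge and non-$W$ vertices relaying) all check out for the \congestmo. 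The only imprecision, inherited from the theorem statement itself, is that $c>1$ is allowed to be non-integral: the construction really uses $\lceil c\rceil$ digits, giving a $(q+1,\lceil c\rceil q)$-ruling set in $O(q\lceil c\rceil n^{1/\lceil c\rceil})$ time, which matches the stated bounds only up to rounding $c$ up; this costs at most a constant factor and is immaterial to the paper's use of the theorem.
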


By Theorem \ref{theorem ruling set}, the returned subset $RS_i$ is a $(2\delta_i+1,({2}/{\rho})\cdot\delta_i)$\textit{-ruling set} for the set $W_i$.

The ruling set $RS_i$ is  $(2\delta_i+1)$-separated. This is done in order to guarantee that 
the sets of vertices in radius $\delta_i$ around each vertex in $RS_i$ are pairwise disjoint (for an illustration, see Figure \ref{figure o-o}).
This allows us to bound the size of the ruling set $RS_i$, and eventually the size of the collection $P_{i+1}$.

\begin{wrapfigure}{r}{0.4\textwidth}
	
	\begin{center}
		\includegraphics[scale=0.16]{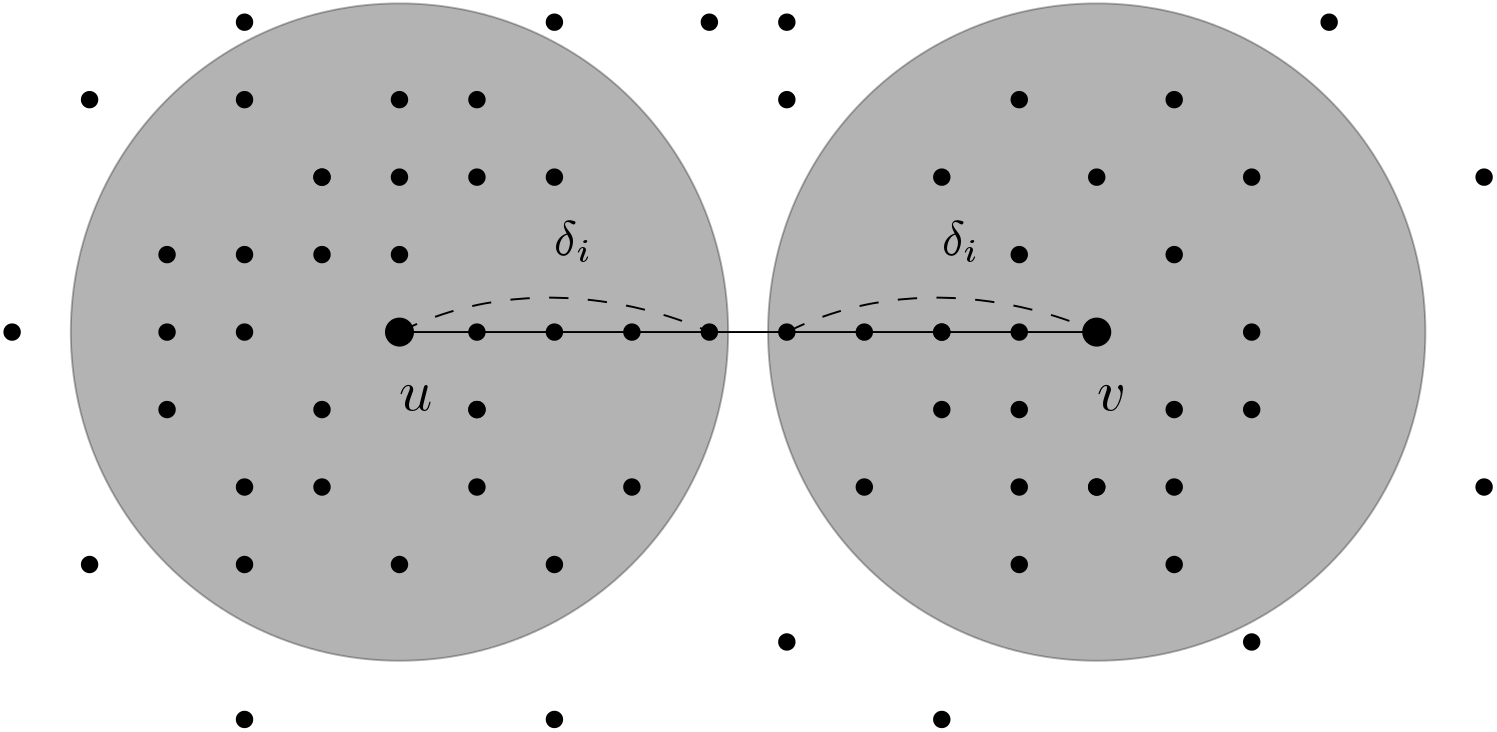}
		\caption{For every pair of vertices in $RS_i$, their $\delta_i$-neighborhoods are disjoint. In the figure, $u,v$ are two cluster centers from $RS_i$. The gray areas around them represent their respective disjoint $\delta_i$-neighborhoods.}
		\label{figure o-o}	
	\end{center}
\end{wrapfigure}

We are now ready to create large \textit{superclusters}. A BFS exploration rooted at the set $RS_i$ is executed to depth $({2}/{\rho})\cdot\delta_i$ in $G$. As a result a forest $F_i$ is constructed, rooted at vertices of $RS_i$.

For a cluster center $r_{C'}\in S_i\backslash RS_i$ that is spanned by $F_i$, let $r_C$ be the root of the forest tree of $F_i$ to which $r_{C'}$ belongs. The cluster $C'$ now becomes \textit{superclustered} in the cluster $\widehat{C}$ centered around $C$. 

The center $r_C$ of $C$ becomes the new cluster center of $\widehat{C}$, i.e., $ r_{\widehat{C}}\gets r_C$. The vertex set of the new supercluster $\widehat{C}$ is the union of the vertex set of the original cluster $C$, with the vertex sets of all clusters $C'$ which are superclustered into $\widehat{C}$. We denote by $V({C})$ the vertex set of a cluster ${C}$.

For every cluster center $r_{C'}$ that is spanned by the tree in $F_i$ rooted at $r_C$, the path in $F_i$ from $r_C$ to $r_{C'}$ is added to the spanner $H$ (see Figure \ref{figure superclustering}).
Note that the path itself is not added to the cluster $\widehat{C}$. Recall that $H$ is initialized as an empty set.

We define $\widehat{P_{i}}$ as the set of new superclusters, $ \widehat{C}$, that were built by the superclustering step of phase $i$. We set $P_{i+1} = \widehat{P_{i}}$. Note that the set $S_{i+1}$ of cluster centers of $P_{i+1}$ is given by $S_{i+1}=RS_i$.


Next we show that $R_i$ is an upper bound on $Rad(P_i)$, the radius of clusters in phase $i$.

\begin{figure}[h]
	
	\begin{center}
		\includegraphics[scale=0.3]{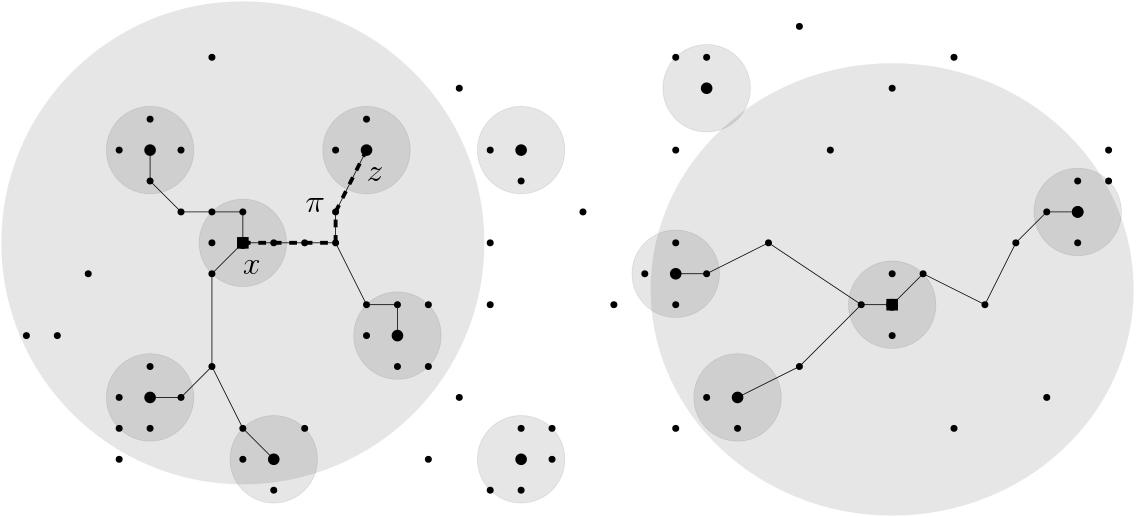}
		\caption{For every $r_{C'}$ that is spanned by the tree in $F_i$ rooted at $r_C$, the path in $F_i$ from $r_C$ to $r_{C'}$ is added to the spanner $H$. For example, in the figure, the $x-z$ path $\pi$ depicted by a thick dashed line is added to $H$.}
		\label{figure superclustering}
	\end{center}
\end{figure}

\begin{lemma}\label{ri_bounds_pi lemma}
	For all integer $0\leq i \leq \ell$, we have $Rad(P_i)\leq R_i$.
\end{lemma}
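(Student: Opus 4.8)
The plan is to prove the bound $Rad(P_i) \leq R_i$ by induction on the phase index $i$. The base case $i=0$ is immediate: $P_0$ consists of singleton clusters, so $Rad(P_0) = 0 = R_0$. For the inductive step, assume $Rad(P_i) \leq R_i$ and consider a supercluster $\widehat{C} \in P_{i+1}$ with center $r_C = r_{\widehat{C}}$. By construction, $\widehat{C}$ is the union of a ``core'' cluster $C \in P_i$ (whose center $r_C$ was placed in the ruling set $RS_i$) together with all clusters $C' \in P_i$ whose centers $r_{C'}$ were spanned by the tree of $F_i$ rooted at $r_C$. I must bound $d_H(r_C, v)$ for an arbitrary $v \in V(\widehat{C})$.

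There are two cases for such a $v$. If $v \in V(C)$, then by the inductive hypothesis $d_H(r_C, v) \leq Rad(C) \leq Rad(P_i) \leq R_i$, which is certainly at most $R_{i+1}$. If $v$ lies in some clustered $C'$ superclustered into $\widehat{C}$, I bound $d_H(r_C, v)$ by the triangle inequality: $d_H(r_C, v) \leq d_H(r_C, r_{C'}) + d_H(r_{C'}, v)$. The second term is at most $Rad(C') \leq R_i$ by the inductive hypothesis. For the first term, I use the key facts that (a) the BFS forest $F_i$ is grown to depth $(2/\rho)\cdot\delta_i$, so the $F_i$-path from $r_C$ to $r_{C'}$ has length at most $(2/\rho)\cdot\delta_i$, and (b) this entire path is added to $H$ by the superclustering step, so $d_H(r_C, r_{C'}) \leq (2/\rho)\cdot\delta_i$. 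Combining, $d_H(r_C, v) \leq (2/\rho)\cdot\delta_i + R_i$. Now I substitute $\delta_i = \eps{i} + 2R_i$ from \cref{eq def deltai}, obtaining $d_H(r_C, v) \leq (2/\rho)(\eps{i} + 2R_i) + R_i = (2\eps{i})/\rho + (4/\rho)R_i + R_i = (2\eps{i})/\rho + ((4/\rho)+1)R_i$, and since $\rho < 1/2$ we have $(4/\rho)+1 \leq 5/\rho$, so this is at most $(2\eps{i})/\rho + (5/\rho)R_i = R_{i+1}$ by \cref{def_ri+1}. Taking the maximum over all $v$ and all $\widehat{C} \in P_{i+1}$ gives $Rad(P_{i+1}) \leq R_{i+1}$.

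One should also check the degenerate phases. For the concluding phase $\ell$ there is no superclustering step, so $P_\ell = P_{i_1+1}$ is produced by the superclustering step of phase $i_1 = \ell - 1$, and the argument above already covers $i+1 = \ell$. The claim is thus established for all $0 \leq i \leq \ell$.

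The main obstacle is making sure the accounting of $d_H(r_C, r_{C'})$ is tight enough: one must confirm both that the BFS exploration depth is exactly $(2/\rho)\cdot\delta_i$ (equivalently, that $RS_i$ is a $((2/\rho)\cdot\delta_i)$-ruling set for $W_i$, which follows from \cref{theorem ruling set} with $q = 2\delta_i$, $c = \rho^{-1}$) and that every tree path in $F_i$ from a root to a spanned center is genuinely added to $H$ (stated explicitly in the construction). The only arithmetic subtlety is the step $(4/\rho) + 1 \leq 5/\rho$, which holds precisely because $\rho < 1/2 < 1$; this is where the hypothesis $\rho < 1/2$ enters, and it is worth flagging so the reader sees why the recurrence \cref{def_ri+1} has the coefficient $5/\rho$ rather than $(4/\rho)+1$.
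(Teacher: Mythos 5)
Your proof is correct and follows essentially the same inductive argument as the paper: base case $R_0 = 0 = Rad(P_0)$, then split on whether $v$ lies in the core cluster $C$ or in a superclustered $C'$, and in the latter case bound $d_H(r_C, v) \leq d_H(r_C, r_{C'}) + d_H(r_{C'}, v) \leq (2/\rho)\delta_i + R_i$ before unfolding $\delta_i = \eps{i} + 2R_i$. One small remark: the paper only invokes $\rho < 1$ for the final step $4/\rho + 1 \leq 5/\rho$, which is exactly $\rho \leq 1$; citing $\rho < 1/2$ is sufficient but slightly overshoots the hypothesis actually needed there.
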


\begin{proof}
	
	We will prove the lemma by induction on the index of the phase $i$. 
	For $i=0$, observe that $Rad(P_0)= 0$, and $R_0= 0$ by definition. Thus the induction base case holds. For the inductive step, recall that by definition
	$R_{i+1}= {2\epsi}/{\rho}+\left({5}/{\rho}\right)R_i$. 

	For analyzing $Rad(\widehat{P_{i}})$, consider a vertex $u\in \widehat{C}$. If $u\in C$, by the induction hypothesis, we have $d_H(r_C,u)\leq R_i$, and since $\rho<1$ we have that $R_i\leq R_{i+1}$. Otherwise, $u\in C'$ for some $C'\in P_i$ such that $C'$ is clustered into $\widehat{C}$ in the superclustering step of phase $i$. By the induction hypothesis, we have $d_H(r_{C'},u)\leq R_i$. Since $C'$ is clustered into $\widehat{C}$, we have $d_G(r_C,r_{C'})\leq \frac{2}{\rho}\delta_i$. Moreover, a shortest $r_C-r_{C'}$ path was added to $H$, and by 
	\cref{eq def deltai}, we have
	$d_H(r_C,r_{C'})\leq \frac{2}{\rho}\delta_i = \rho^{-1}\left[2\epsi+ 4R_i\right]$. Since  $\rho< 1$, we have $
	d_H(r_C,u) \ 
	\leq\ R_{i+1}$.
	Therefore, $ Rad(P_{i+1}) = Rad(\widehat{P_i})\leq R_{i+1}$. 
\end{proof}

\begin{lemma}\label{popular are clustered}
	All \textit{popular} clusters in phase $i$ are superclustered into clusters of $\widehat{P}_i$.
\end{lemma}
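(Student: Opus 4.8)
The plan is to show that every popular cluster center $r_{C'} \in W_i$ is spanned by the forest $F_i$, since by construction a cluster $C'$ becomes superclustered precisely when its center $r_{C'}$ is reached by the BFS exploration rooted at $RS_i$. So the task reduces to proving that the BFS exploration to depth $(2/\rho)\cdot\delta_i$ in $G$, rooted at $RS_i$, reaches every vertex of $W_i$.

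First I would invoke Theorem~\ref{theorem ruling set} (as specialized right after its statement): $RS_i$ is a $(2\delta_i+1,\,(2/\rho)\cdot\delta_i)$-ruling set for $W_i$. The relevant half of the ruling-set guarantee is the covering property: for every $u \in W_i$ there exists a representative $v \in RS_i$ with $d_G(u,v) \le (2/\rho)\cdot\delta_i$. Next I would observe that the BFS forest $F_i$ is grown to depth exactly $(2/\rho)\cdot\delta_i$ from the set $RS_i$, so any vertex within distance $(2/\rho)\cdot\delta_i$ in $G$ of some root in $RS_i$ is spanned by $F_i$. Combining these two facts, every $r_{C'} \in W_i$ is spanned by $F_i$. (One should also note the degenerate case $r_{C'} \in RS_i$, which is trivially spanned by $F_i$, being a root of one of its trees; but since $W_i$ and $RS_i$ are handled uniformly by the covering property with representative distance $0$, no separate argument is needed.)

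Then I would close the loop with the construction: by the definition of the superclustering step, once $r_{C'} \in S_i \setminus RS_i$ is spanned by $F_i$, the cluster $C'$ is superclustered into the supercluster $\widehat{C} \in \widehat{P}_i$ centered at the root $r_C$ of its forest tree, and $\widehat{P}_i = P_{i+1}$ by definition; and if $r_{C'} \in RS_i$ then $C'$ itself is (the core of) a supercluster in $\widehat{P}_i$. Either way $C'$ is superclustered into a cluster of $\widehat{P}_i$, which is exactly the claim. Since $W_i$ is, by Theorem~\ref{theorem popular}, precisely the set of centers of popular clusters of $P_i$, this establishes the lemma.

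I do not expect a serious obstacle here — the lemma is essentially a bookkeeping consequence of the ruling-set covering radius matching the BFS exploration depth. The one subtlety worth stating explicitly is that the covering radius $(2/\rho)\cdot\delta_i$ of the ruling set and the BFS depth must be literally the same quantity (not off by a constant), which they are by design; this is the only place where the specific choice of parameters $q = 2\delta_i$, $c = \rho^{-1}$ fed into Theorem~\ref{theorem ruling set} is used, and it is worth a sentence in the proof to make the match transparent.
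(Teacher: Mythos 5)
Your proof is correct and follows essentially the same route as the paper's: identify $W_i$ as the set of popular centers via Theorem~\ref{theorem popular}, invoke the covering radius $(2/\rho)\cdot\delta_i$ of the ruling set $RS_i$, and observe that the BFS depth matches this radius so every center of $W_i$ is spanned by $F_i$ and hence superclustered. Your explicit handling of the degenerate case $r_{C'}\in RS_i$ is a minor extra nicety not spelled out in the paper, but the core argument is identical.
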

%

\begin{proof}
	Let $C'$ be a popular cluster. Then, $r_{C'}$ belongs to the set $W_i$ returned by Algorithm \ref{Alg number of near neighbors}.
	There is a vertex $r_C$ in the ruling subset $RS_i$ computed for the set $W_i$, such that $d_G(r_C,r_{C'})\leq \frac{2}{\rho}\delta_i$. Hence the BFS exploration that is executed from the set $RS_i$ to depth $\frac{2}{\rho}\delta_i$ reaches $r_{C'}$. Thus $r_{C'}$ is spanned by the forest $F_i$, and it is superclustered into some supercluster in $\widehat{P}_i$. 
\end{proof}

This concludes the analysis of the superclustering step of phase $i$.


\subsection{Interconnection}

\begin{figure}[h]
	\begin{center}
		
		\includegraphics[scale=0.25]{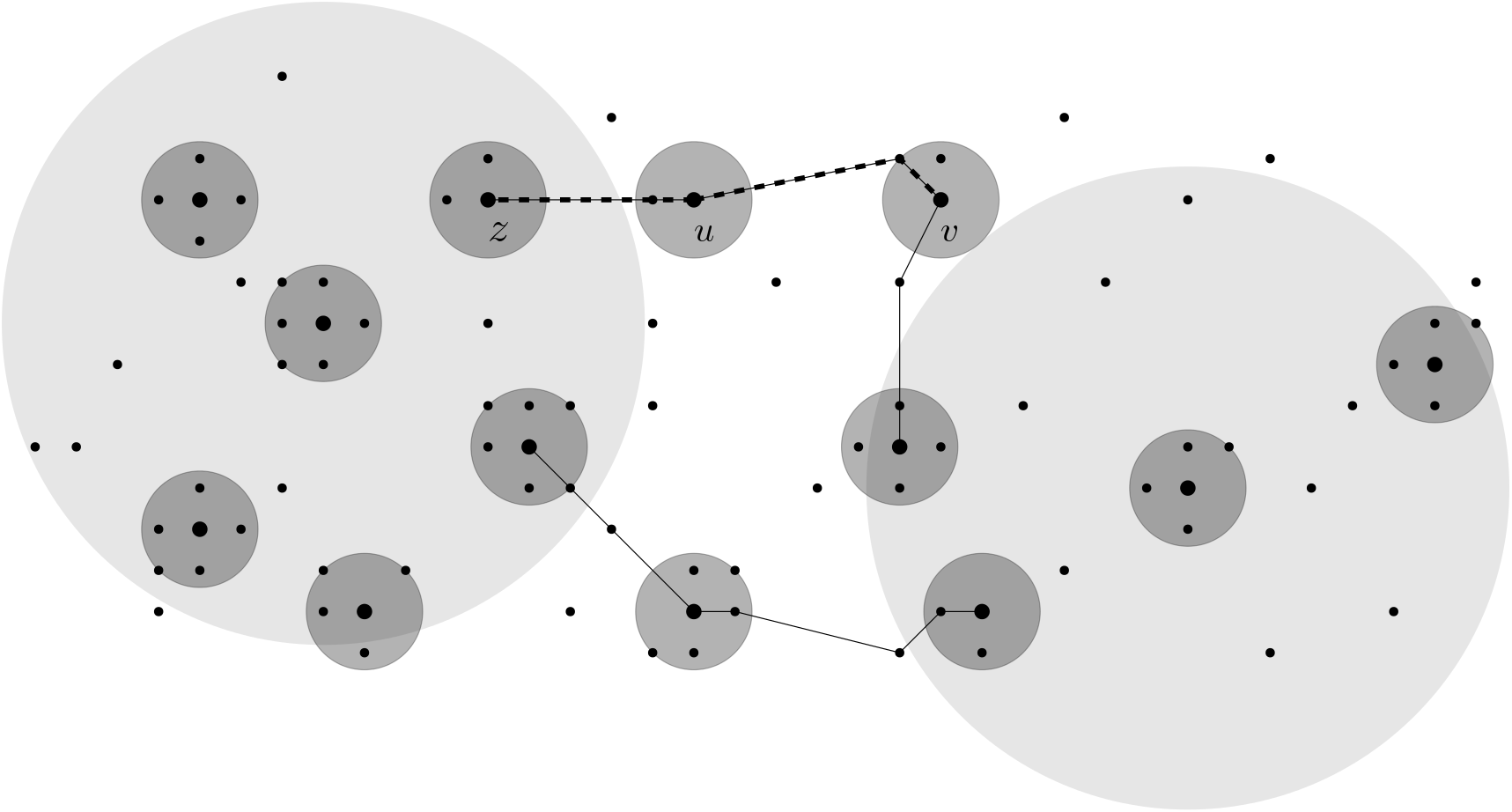}
		\caption{Each cluster center $r_C$ such that $C\in U_i$ is connected to all cluster centers $r_{C'}$, such that $C'\in P_i$ that are close to it. For example, in the figure, the paths $u-v,u-z$ depicted by a thick dashed line are added to $H$.}
		\label{figure interconnection}
		
	\end{center}
\end{figure}
Next we provide the details of the execution of the interconnection step.
Let $i\in [ 0, \ell]$. 
Denote by $U_i$ the set of clusters of $P_i$ which were not superclustered into clusters of $\widehat{P}_i$. For phase $\ell$, the superclustering step is skipped. Therefore, we set $U_\ell = P_\ell$.

In the interconnection step for $i\geq 0$, we wish to connect clusters $C\in U_i$ to all the clusters $C'\in P_i$ that are close to them, i.e., such that $d_G(r_C,r_{C'})\leq \delta_i$. Note that we will connect a cluster $C\in U_i$ to all clusters $C'\in P_i$ that are close to it. 
This is regardless of whether $C'$ has been superclustered in this phase or not (see Figure \ref{figure interconnection}).
However, by Lemma \ref{popular are clustered}, every cluster in $ U_i$ is not popular. As $C\in U_i$, it has at most $deg_i$ clusters close to it. Thus we can connect it to these other nearby clusters without adding too many edges to the spanner. 
By Theorem \ref{theorem popular}, the center $r_C$ of each such cluster $C$ knows all other centers $r_{C'}$ such that $C'\in P_i$ and $d_G(r_C,r_{C'})\leq \delta_i$, and the distances to them. Thus, each center $r_C$ of a cluster $C\in U_i$ already knows which are the clusters it needs to connect to. For each $C'\in P_i$ such that $d_G(r_C,r_{C'})\leq \delta_i$, the center $r_C$ traces back the message that informed $r_C$ regarding $r_{C'}$ (see Theorem \ref{theorem popular}), and a shortest path between $r_C$ and $r_{C'}$ is added to the spanner $H$.

The interconnection of the concluding phase $\ell$ is slightly different. As the superclustering step of this phase is skipped, we employ Algorithm \ref{Alg number of near neighbors}, described in Appendix \ref{section Detecting popular clusters}, with the input $(G,P_\ell,deg_\ell,\delta_\ell)$. The output
set $W_i$ is an empty set\footnote{
	Algorithm \ref{Alg number of near neighbors} does not explicitly \textit{return} the set $W_i$. Rather each vertex $v\in V$ knows whether it belongs to the output $W_i$ or not. In the execution of Algorithm \ref{Alg number of near neighbors} in phase $\ell$, all vertices know that they do not belong to the output $W_i$.},
 but the execution of the algorithm allows vertices to acquire the required information for the interconnection step. This completes the description of the interconnection step.

Denote by $U^{(i)}$ the union of all sets $U_0,U_1,\dots,U_i$, i.e., $U^{(i)} = \bigcup_{j=0}^iU_j$. 
The following corollary
shows that the set $U^{(\ell)}$ is a partition of $V$. 

\begin{corollary}\label{coro partition}
The set $U^{(\ell)}$ defined by $U^{(\ell)}= \bigcup_{i=0}^{\ell}U_i$ is a partition of $V$.
\end{corollary}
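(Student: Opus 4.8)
The plan is to prove Corollary~\ref{coro partition} by establishing that the sets $U_0,U_1,\dots,U_\ell$ are pairwise disjoint and that their union is all of $V$. Since each $U_i$ consists of clusters from $P_i$, and a cluster is a set of vertices, I will interpret $\bigcup_{i=0}^\ell U_i$ as the collection of all vertices that lie in some cluster $C\in U_i$ for some $i$; what we need is that every vertex of $V$ lies in exactly one such cluster.

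First I would track the fate of a cluster through the phases. Each phase $i$ partitions $P_i$ into two groups: clusters that get superclustered into clusters of $\widehat{P_i}$, and clusters that do not---the latter being precisely $U_i$ by definition. The clusters that do get superclustered become part of the new clusters in $\widehat{P_i}=P_{i+1}$, and crucially the vertex set of each new supercluster is exactly the union of the vertex sets of the original clusters it absorbed (as stated in the construction). So at the level of vertices, $P_i$ is a partition of $V$ (this should be proved by induction: $P_0$ partitions $V$ into singletons, and if $P_i$ partitions $V$ then so does $P_{i+1}$, since the superclustering step redistributes exactly the vertices of the superclustered clusters into disjoint new superclusters, while the clusters of $U_i$ are simply set aside). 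Then the vertices retired into $U_i$ and the vertices carried forward into $P_{i+1}$ together form a partition of $V$, so after phase $\ell$ (where $U_\ell=P_\ell$ by definition, absorbing everything that remains), the accumulated retired clusters $U_0,\dots,U_\ell$ form a partition of $V$.

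Concretely, the key steps in order are: (1) prove by induction on $i$ that $P_i$ is a partition of $V$, using the construction's description of supercluster vertex sets as unions of absorbed clusters' vertex sets; (2) observe that for each $i<\ell$, the clusters of $P_i$ split disjointly into those that are superclustered (whose vertices form exactly $\bigcup_{C\in P_{i+1}} V(C)$) and those in $U_i$, so $\bigcup_{C\in U_i}V(C)$ and $\bigcup_{C\in P_{i+1}}V(C)$ partition $\bigcup_{C\in P_i}V(C)=V$; (3) iterate this telescoping from $i=0$ up to $i=\ell-1$, and use $U_\ell=P_\ell$ to conclude that $\bigcup_{j=0}^{\ell}\bigcup_{C\in U_j}V(C)=V$ with the union being disjoint; (4) note disjointness across different $U_j$'s follows because a cluster placed in $U_j$ is removed from the collection and never reappears, and within a single $U_j$ disjointness is inherited from $P_j$ being a partition.

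I do not expect a serious obstacle here; the statement is essentially bookkeeping about how the cluster hierarchy evolves. The one point requiring mild care is the induction in step~(1): one must invoke the fact that the superclustering forest $F_i$ assigns each superclustered cluster center to a unique tree rooted at a vertex of $RS_i$, so the new superclusters are genuinely disjoint and their vertex sets really do partition the vertices of the superclustered clusters. Everything else is a direct unwinding of definitions, so the proof can be kept short.
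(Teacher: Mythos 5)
Your overall strategy — track the vertex set covered by $P_i$, telescope the disjoint split into $U_i$ and $P_{i+1}$, and accumulate — is essentially the same as the paper's (the paper sets $VP_i = \bigcup_{C\in P_i}V(C)$, $VU_i=\bigcup_{C\in U_i}V(C)$, and proves $VU_{i-1}\sqcup VP_i = VP_{i-1}$, then telescopes). However, your step (1) contains a genuine error: $P_i$ does \emph{not} partition $V$ for $i\geq 1$. Once a cluster is retired into $U_i$, its vertices vanish from all later $P_j$, so $\bigcup_{C\in P_{i+1}}V(C)$ is a strict subset of $\bigcup_{C\in P_i}V(C)$ whenever $U_i\neq\emptyset$. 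The paper makes this explicit: $V=VP_0\supseteq VP_1\supseteq\dots\supseteq VP_\ell\supseteq VP_{\ell+1}=\emptyset$, a strictly shrinking chain (in fact all the way to empty, since $U_\ell=P_\ell$). Your step (2) inherits the same error when it asserts $\bigcup_{C\in P_i}V(C)=V$; if both steps (1) and (2) were true as stated they would force every $U_i$ to be empty, which is absurd.

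Fortunately the error is not load-bearing: the telescoping in your step (3) only needs (a) the clusters of each $P_i$ are pairwise disjoint (true, and provable exactly by the induction you describe, via the forest $F_i$ assigning each superclustered center to a unique tree), (b) $\bigcup_{C\in U_i}V(C)$ and $\bigcup_{C\in P_{i+1}}V(C)$ disjointly partition $\bigcup_{C\in P_i}V(C)$ for each $i<\ell$, and (c) the base case $\bigcup_{C\in P_0}V(C)=V$. With these, the telescope gives $V=\bigsqcup_{j=0}^{\ell-1}\bigcup_{C\in U_j}V(C)\,\sqcup\,\bigcup_{C\in P_\ell}V(C)$, and $U_\ell=P_\ell$ finishes it. So the fix is simply to weaken step (1) to ``the clusters of $P_i$ are pairwise disjoint and their union is $VP_i$,'' drop the ``$=V$'' from step (2), and let the telescope do the work. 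With that correction your argument matches the paper's.
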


\newcommand{\proofpartition}{
	
	The following definitions will be used in the proof of Lemma \ref{lemma ui partition} below. 
	Denote by $VU_i$ the set of vertices $v$ such that there is a cluster $C\in U_i$ that contains $v$, i.e., $VU_i = \{ v\ | \ \exists C\in U_i \textit{ such that } v\in C \}$.
	Denote by $VP_i$ the set of vertices $v$ such that there is a cluster $C\in P_i$ that contains $v$, i.e., $VP_i = \{ v\ | \ \exists C\in P_i \textit{ such that } v\in C \}$.
	Denote $VU^{(i)}$ the union of all sets $VU_0,VU_1,\dots,VU_i$,. i.e., $VU^{(i)}= \{ v\ | \ \exists C\in U^{(i)}, \textit{ such that } v\in C \}$. For notational convenience, we will also define $P_{\ell+1}$, $VP_{\ell+1}$ and $VU_{\ell+1}$ to be empty sets.
	
	\begin{lemma}\label{lemma ui partition}
		
		For every index $i\in [0,\ell]$, the set $U^{(i)}$ is a partition of $VU^{(i)}$. 
	\end{lemma}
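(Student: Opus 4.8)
The plan is to prove Corollary~\ref{coro partition} via the auxiliary Lemma~\ref{lemma ui partition}, exactly as the author's notation ($VU_i$, $VP_i$, $VU^{(i)}$) suggests. The key observation is that every vertex belongs to exactly one cluster at every phase $i$ for which clusters still exist, and that $U_i$ consists precisely of those clusters of $P_i$ that are not carried over into $P_{i+1}$. Thus membership in $U^{(\ell)}$ should be equivalent to membership in $\bigcup_i P_i$, which in turn should coincide with $V$ because $P_0$ already covers $V$ and the superclustering step only merges clusters, never discards vertices.

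First I would establish, as a preliminary fact, that for each $i$ the collection $P_i$ is a partition of $V$. This is an easy induction: $P_0 = \{\{v\} : v \in V\}$ partitions $V$; and in the superclustering step, $P_{i+1} = \widehat{P_i}$, where each supercluster $\widehat{C}$ is the disjoint union of a cluster $C \in RS_i$-rooted forest-tree together with all clusters $C'$ whose centers hang off that same tree. Since the BFS forest $F_i$ assigns each spanned center to exactly one root, the superclusters are pairwise disjoint, and every cluster of $P_i$ whose center is spanned by $F_i$ goes into exactly one supercluster — so $VP_{i+1} = \{v : v \text{ lies in some cluster of } P_i \text{ whose center is spanned by } F_i\}$. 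But actually the cleaner statement, which I would prove instead, is the one the author sets up: $U^{(i)}$ partitions $VU^{(i)}$, together with the complementary fact that $VU^{(i)} \cup VP_{i+1} = V$ and $VU^{(i)} \cap VP_{i+1} = \emptyset$.

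The core of Lemma~\ref{lemma ui partition} is the inductive step. I would argue: $U^{(i)} = U^{(i-1)} \cup U_i$, and by the induction hypothesis $U^{(i-1)}$ partitions $VU^{(i-1)}$. The clusters in $U_i$ are by definition the clusters of $P_i$ not superclustered in phase $i$; each is a subset of $VP_i$, and (by the preliminary partition fact for $P_i$) the clusters in $U_i$ are pairwise disjoint among themselves. So it remains to check that $VU_i$ is disjoint from $VU^{(i-1)}$. This follows because $VU^{(i-1)} = V \setminus VP_i$: a vertex that was placed in some $U_j$ for $j < i$ lies in a cluster that was ``finalized'' at phase $j$ and is not part of any cluster of $P_i$, whereas $VU_i \subseteq VP_i$. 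Formally one proves $VU^{(i)} = V \setminus VP_{i+1}$ by induction simultaneously: $VU^{(i)} = VU^{(i-1)} \cup VU_i = (V \setminus VP_i) \cup VU_i$, and $VP_i = VU_i \,\dot\cup\, VP_{i+1}$ (the clusters of $P_i$ split into those not superclustered, giving $U_i$, and those superclustered, whose vertices form exactly $VP_{i+1}$), so $VU^{(i)} = V \setminus VP_{i+1}$. Then at $i = \ell$, since the superclustering step of phase $\ell$ is skipped and $U_\ell = P_\ell$, we have $VP_{\ell+1} = \emptyset$, hence $VU^{(\ell)} = V$, and combined with pairwise disjointness this gives that $U^{(\ell)}$ partitions $V$, which is Corollary~\ref{coro partition}.

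The main obstacle I anticipate is bookkeeping the relationship $VP_i = VU_i \,\dot\cup\, VP_{i+1}$ carefully — i.e., being precise that the superclustering step partitions the clusters of $P_i$ into ``superclustered'' and ``not superclustered'' ones, that the vertex sets of the superclusters of $\widehat{P_i} = P_{i+1}$ are exactly the union of the vertex sets of the superclustered clusters of $P_i$ (no vertices lost, none duplicated, none invented), and that the BFS-forest-based assignment of centers to roots makes the superclusters genuinely disjoint. All of this is conceptually routine given how the superclustering step is described, but it needs to be spelled out to make the disjointness in the induction step airtight; the distance parameters $R_i, \delta_i$ play no role here, only the combinatorial structure of the clustering does.
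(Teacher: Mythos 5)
Your proposal is essentially the same as the paper's proof: both proceed by induction with the central point being that $VU^{(i-1)}$ is disjoint from $VP_i$ (equivalently $VU^{(i-1)} = V \setminus VP_i$), derived from the fact that each cluster of $P_{i-1}$ either joins $U_{i-1}$ or gets superclustered into $P_i$, never both, together with the inclusion chain $V = VP_0 \supseteq VP_1 \supseteq \cdots \supseteq VP_{\ell+1} = \emptyset$. The one misstep is your opening claim that ``for each $i$ the collection $P_i$ is a partition of $V$'' --- this is false for $i \geq 1$ (vertices in unpopular, un-superclustered clusters are dropped from $P_{i+1}$); what is true, and all you actually use, is that $P_i$ is a collection of pairwise-disjoint clusters (a partition of $VP_i$, not of $V$). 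You correctly back off from this in the next sentence, and your packaging of the inductive invariant as the single equality $VU^{(i)} = V \setminus VP_{i+1}$ is, if anything, a slightly cleaner way to bundle the disjointness needed for the lemma and the coverage needed for Corollary~\ref{coro partition} than the paper's two-stage argument.
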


\begin{proof}

	The proof is by induction on the index of the phase $i$. 
	
	For $i=0$, the set $U_0$ is the set of all clusters that were not superclustered in phase $0$. As all these clusters are singletons, we have that $U_0$ is a partition of $VU^{(0)}$ to singletons. 
	Note that every vertex $v\in V$ that does not belong to $VU^{(0)}$, belongs to some cluster $C\in P_1$.

	Assume that $U^{(i-1)}$ is a partition of $VU^{(i-1)}$, for some index $i$ in the range $ [1,\ell]$. We will prove that $U^{(i)}$ is a partition of $VU^{(i)}$.

	For the induction step, first observe that for every vertex $v$ that belongs to a cluster $\widehat{C}\in P_{i}$, there is a cluster $C\in P_{i-1}$ such that $v$ belongs to $C$. It follows that 
\begin{equation}\label{eq supersets}
	V=VP_0\supseteq VP_1\supseteq\dots \supseteq VP_\ell\supseteq VP_{\ell+1}=\emptyset.
	\end{equation}

	Moreover, each vertex can belong to at most one cluster in $P_{i-1}$. For each cluster $C\in P_{i-1}$, either $C$ has been superclustered into a cluster of $P_{i}$, or it has joined $U_{i-1}$ (note that this condition also applies for $i=\ell+1$, as $P_{\ell+1}= \emptyset$, and $U_\ell= P_\ell$). These two cases are mutually exclusive. Thus $ VU_{i-1} \cup VP_{i}= VP_{i-1} $, and $ VU_{i-1} \cap VP_{i}= \emptyset$.
	By \cref{eq supersets}, it follows that 
\begin{equation}\label{eq empty cap}
	VU^{({i-1})}\cap VP_{i}= \emptyset.
	\end{equation}

	Let $v$ be a vertex in $VU^{(i)}$. We will consider now two complementary cases. 
	
	\textbf{Case 1:} $v$ does not belong to $VP_i$. Then, $v$ does not belong to $VU_i$. Thus it belongs to $VU^{(i-1)}$, and by the induction hypothesis there is exactly one cluster $C\in U^{(i-1)}$ such that $v\in C$. Moreover, since $v$ does not belong to $VP_i$, there is no cluster $C'\in P_i$ such that $v\in C$. Therefore $v\notin VU_i$. It follows that $v$ belongs to exactly one cluster $C\in U^{(i)}$. 
	
	\textbf{Case 2:} $v$ belongs to $VP_i$. Then by \cref{eq empty cap}, $v$ does not belong to $VU^{(i-1)}$. By the induction hypothesis, since $U^{(i-1)}$ is a partition of $VU^{(i-1)}$, there is no cluster $C\in U^{(i-1)}$ such that $v$ belongs to $C$. Since $v$ does belong to $VU^{(i)}$, it follows that there is exactly one cluster $C'\in U_i$ such that $v\in C'$. Therefore, $v$ belongs to exactly one cluster $C\in U^{(i)}$. 
	
	Hence $U^{(i)}$ us a partition of $VU^{(i)}$.

\end{proof}

Consider the set $VU^{(\ell)}$. Note that $VP_0= V$, and that for every index $i\in[0,\ell]$, each vertex $v$ that belongs to $VP_i$ either belongs to $VU_i$ or to $VP_{i+1}$. Since $VP_{\ell+1}= \emptyset$ it follows that $VU^{(\ell)} = V$. As a corollary, we conclude \\

{\bf Corollary \ref{coro partition}}
	The set $U^{(\ell)}$ defined by $U^{(\ell)}= \bigcup_{i=0}^{\ell}U_i$ is a partition of $V$.

}

\proofpartition

\subsection{Analysis of the Construction}

In this section, we analyze the running time, the size and the stretch of our construction. We begin by analyzing the radii of clusters' collections $P_i$ for $i\in [\ell] $. By Lemma \ref{ri_bounds_pi lemma}, these radii are upper bounded by $R_i$'s. In the next lemma, we derive an explicit expression for these upper bounds. This upper bound is later used to bound $\delta_i$, which is used in Sections \ref{sec running time}, \ref{sec size} and \ref{sec stretch} to analyze the running time, size and stretch of the spanner, respectively. Finally, in Section \ref{sec rescale}, we rescale $\epsilon$ to derive our ultimate result.

\begin{lemma} \label{lemma ri bound not explicit}
	For $i\in [\ell] $, we have
$R_i = \sum_{j=0}^{i-1}
	{2}/{\rho}\cdot\epsilon^{-j}\cdot \left({5}/{\rho}\right)^{i-1-j}.
$
\end{lemma}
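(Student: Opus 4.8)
The plan is to unroll the recurrence $R_0 = 0$, $R_{i+1} = (2/\rho)\epsilon^{-i} + (5/\rho)R_i$ from \cref{def_ri+1} by a straightforward induction on $i$, and show the closed form matches. This is a standard linear first-order recurrence with a non-constant (geometrically growing, because of the $\epsilon^{-i}$) forcing term, so the solution is a sum over the contributions of each forcing term, each multiplied by the appropriate power of the homogeneous multiplier $5/\rho$.

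First I would set up the induction. For the base case $i=1$: by definition $R_1 = (2/\rho)\epsilon^{0} + (5/\rho)R_0 = 2/\rho$, and the claimed formula $\sum_{j=0}^{0} (2/\rho)\epsilon^{-j}(5/\rho)^{-j} = (2/\rho)\cdot 1 \cdot 1 = 2/\rho$ agrees (here the exponent $i-1-j = 0$). Then for the inductive step, assume $R_i = \sum_{j=0}^{i-1} (2/\rho)\epsilon^{-j}(5/\rho)^{i-1-j}$. Plug into the recurrence:
\[
R_{i+1} = \frac{2}{\rho}\epsilon^{-i} + \frac{5}{\rho}\sum_{j=0}^{i-1}\frac{2}{\rho}\epsilon^{-j}\left(\frac{5}{\rho}\right)^{i-1-j}
= \frac{2}{\rho}\epsilon^{-i} + \sum_{j=0}^{i-1}\frac{2}{\rho}\epsilon^{-j}\left(\frac{5}{\rho}\right)^{i-j}.
\]
The first term is exactly the $j=i$ term of the target sum $\sum_{j=0}^{i}(2/\rho)\epsilon^{-j}(5/\rho)^{i-j}$ (since then $i - j = 0$), and the remaining sum is the $j=0,\dots,i-1$ part of that same target sum. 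Combining gives $R_{i+1} = \sum_{j=0}^{i}(2/\rho)\epsilon^{-j}(5/\rho)^{i-j}$, which is the claimed formula with $i$ replaced by $i+1$. This closes the induction.

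There is no real obstacle here — the only thing to be careful about is bookkeeping of the exponent of $5/\rho$ (it is $i-1-j$ in the formula for $R_i$, hence becomes $i-j$ after multiplying by the extra factor $5/\rho$, which lines up perfectly with the exponent needed for $R_{i+1}$), and making sure the index range $j\in[0,i-1]$ in the statement is interpreted correctly for $i=1$ (a single term). I would also note in passing that the lemma is stated for $i\in[\ell]$, i.e. $i\geq 1$, which is why $R_0$ is handled separately by definition rather than by the formula. I expect the write-up to be only a few lines.
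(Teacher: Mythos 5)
Your proof is correct and follows essentially the same route as the paper: induction on $i$ with base case $i=1$, then absorbing the factor $5/\rho$ into the sum and recognizing the new $(2/\rho)\epsilon^{-i}$ term as the $j=i$ summand. Nothing is missing.
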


\begin{proof}
	The proof is by induction on the index $i$. 
	
	Recall that	$R_0 = 0$ and $R_i$ is given by (see \cref{def_ri+1})
	$R_{i+1} = \frac{2}{\rho}\epsi+\left({5}/{\rho}\right)R_i$.
	
	For $i=1$, it is easy to verify that
	$\sum_{j=0}^{i-1}
	{2}/{\rho}\cdot\epsilon^{-j}\cdot \left({5}/{\rho}\right)^{i-1-j} =2/\rho$ and also $ \frac{2}{\rho}\eps{0}+\left({5}/{\rho}\right)R_{0} = 2/\rho$. So the base case holds. For the induction step, by the induction hypothesis we have: 
	
\begin{equation*}
	\begin{array}{clllll}
	R_{i+1} &=& ({2}/{\rho})\cdot\epsi+\left({5}/{\rho}\right) \sum_{j=0}^{i-1}
	\frac{2}{\rho}\left(\frac{1}{\epsilon}\right)^j\left({5}/{\rho}\right)^{i-1-j}
	&=& \sum_{j=0}^{i}
	\frac{2}{\rho}\left(\frac{1}{\epsilon}\right)^j\left({5}/{\rho}\right)^{i-j}.
	\end{array}
	\end{equation*}
\end{proof}

Assume that $\rho\geq 10\epsilon$. (This assumption will not affect our ultimate result. See Section \ref{sec rescale}.) 
Observe that Lemma \ref{lemma ri bound not explicit} implies that
$R_i \leq \frac{2}{\rho-5\epsilon}\cdot\eps{(i-1)}$.

In particular, we have
$\frac{2}{\rho-5\epsilon}\cdot\eps{(i-1)}\leq\frac{4}{\rho}\cdot\eps{(i-1)}.$
It follows that: 

\begin{equation}\label{eq bound ri}
\begin{array}{cllllllllllll}
R_i &\leq& \frac{4}{\rho}\cdot\eps{(i-1)}.
\end{array}
\end{equation}

Recall that $P_i = \widehat{P}_{i-1}$. Hence $Rad(P_i) = Rad(\widehat{P}_{i-1})$. By Lemma \ref{ri_bounds_pi lemma}, we have for all $i\in[0,\ell]$, (assuming $\rho\geq 10\epsilon$),
\begin{equation}\label{eq 4 rho}
\begin{array}{cllllllll}
Rad(P_i) & = & Rad(\widehat{P}_{i-1}) & \leq & R_{i} &\leq & \frac{4}{\rho}\cdot\eps{(i-1)}.
\end{array}
\end{equation} 

Recall that $\delta_i=\epsi+2R_i$, and that $R_0=0$, and $\delta_0= 1$. For $i\in [\ell]$, by \cref{eq bound ri}, we obtain: $\delta_i \leq \epsi+ \frac{8}{\rho}\cdot\eps{(i-1)}$. Assume that $\rho\geq 10\epsilon$. Then $\frac{8}{\rho}\eps{(i-1)}\leq \epsi$. It follows that for all $i\in [0,\ell]$:

\begin{equation}\label{eq bound deltai}
\delta_i \leq O\left( (1/\epsilon)^i \right).
\end{equation}

\subsubsection{Analysis of the Running Time}\label{sec running time}

In this section, we analyze the running time of the entire algorithm. We begin with the following lemma which analyses the running time of a single phase of the algorithm.

\begin{lemma}\label{lemma rt per phase}
	For all $i\in [0,\ell]$, the running time of phase $i$ is $O\left(\rho^{-1}\cdot \delta_i\cdot n^\rho\right)$.
	
\end{lemma}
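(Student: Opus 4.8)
The plan is to break phase $i$ into its constituent subroutines and bound each one's running time separately, then observe that the phase time is the sum (or maximum) of these. Specifically, a phase consists of: (a) detecting popular clusters via Algorithm~\ref{Alg number of near neighbors}, (b) constructing the ruling set $RS_i$, (c) growing the superclustering forest $F_i$ by a BFS to depth $(2/\rho)\delta_i$, and adding the forest paths to $H$, and (d) the interconnection step. I would argue that the sum of these is $O(\rho^{-1}\cdot\delta_i\cdot n^\rho)$.

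First I would handle (a): by Theorem~\ref{theorem popular}, Algorithm~\ref{Alg number of near neighbors} runs in $O(\deg_i\cdot\delta_i)$ time, and since $\deg_i\leq n^\rho$ for every phase (as established in the overview), this is $O(\delta_i\cdot n^\rho)$. Next, (b): by Theorem~\ref{theorem ruling set} applied with $q=2\delta_i$ and $c=\rho^{-1}$, computing the $(q+1,cq)$-ruling set takes $O(q\cdot c\cdot n^{1/c}) = O(2\delta_i\cdot\rho^{-1}\cdot n^{\rho}) = O(\rho^{-1}\cdot\delta_i\cdot n^\rho)$ time. For (c), the BFS to depth $(2/\rho)\delta_i$ in the \congestmo\ runs in $O((2/\rho)\delta_i) = O(\rho^{-1}\delta_i)$ rounds since BFS from a set of sources takes time proportional to the exploration depth; retracing the forest paths from each $r_{C'}$ back to its root and writing them into $H$ is done along the same tree and costs another $O(\rho^{-1}\delta_i)$ rounds (each vertex learns and reports its tree-parent locally, so there is no congestion beyond the depth). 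Finally, (d): by Theorem~\ref{theorem popular}(2), every center $r_C$ of an unsuperclustered cluster already knows all centers $r_{C'}$ within $\delta_i$ and has a shortest path to each with distance labels along it, so the interconnection just retraces those already-discovered paths; the paths have length at most $\delta_i$, and although a vertex may lie on several such paths, the relevant bound is that this retracing is dominated by the Bellman-Ford-type exploration already performed in (a), hence $O(\delta_i\cdot n^\rho)$ or less. For phase $\ell$ the superclustering steps (b),(c) are skipped and only (a) and (d) remain, which are subsumed by the same bound.

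Summing the four contributions gives $O(\delta_i\cdot n^\rho) + O(\rho^{-1}\delta_i\cdot n^\rho) + O(\rho^{-1}\delta_i) + O(\delta_i\cdot n^\rho) = O(\rho^{-1}\cdot\delta_i\cdot n^\rho)$, which is the claimed bound.

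The main obstacle I anticipate is the congestion analysis for the path-retracing in steps (c) and (d): in the \congestmo\ one must be careful that when many short paths are added to $H$ simultaneously, no edge is overloaded, or that the paths can be pipelined so the total time stays proportional to the path length times at most $n^\rho$. I expect this is handled by the fact that the forest $F_i$ is literally a BFS forest (so each vertex has a unique parent and retracing is contention-free along tree edges), and that in the interconnection step the number of paths through any single vertex is controlled because each relevant center has at most $\deg_i\leq n^\rho$ nearby centers; these are exactly the structural facts already baked into Theorems~\ref{theorem popular} and~\ref{theorem ruling set}, so the lemma should follow by assembling them without any genuinely new argument. A secondary subtlety is simply confirming that $\deg_i \leq n^\rho$ uniformly across both stages and the concluding phase, which the overview already states explicitly.
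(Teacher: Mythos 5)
Your decomposition into (a) popular-cluster detection, (b) ruling set computation, (c) BFS superclustering, and (d) interconnection, each bounded via Theorems~\ref{theorem popular} and~\ref{theorem ruling set} and then summed using $\deg_i\leq n^\rho$, is exactly the paper's argument. The congestion concerns you raise are also addressed in the same way the paper handles them (tree structure for the superclustering BFS, and the fact that interconnection retracing is subsumed by the $O(\deg_i\cdot\delta_i)$ bound from the detection algorithm).
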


\newcommand{\proofrt}{
\begin{proof}
	
	The superclustering step consists of running Algorithm \ref{Alg number of near neighbors}, which requires $O(deg_i\cdot \delta_i)$ time 
	(see Theorem \ref{theorem popular}).
	Constructing a $(2\delta_i+1,\frac{2}{\rho}\delta_i)$-ruling set requires $O(\rho^{-1}\cdot\delta_i\cdot n^{\rho}) $ time 
	(see Theorem \ref{theorem ruling set}). The BFS exploration to depth $\frac{2}{\rho}\delta_i$ that builds superclusters requires $ O(\rho^{-1}\delta_i)$ time. (Note that no congestion occurs on this step.) In total, the superclustering step of phase $i$ requires $O(deg_i\cdot\delta_i+\rho^{-1} \delta_i\cdot n^\rho+\rho^{-1}\delta_i)$ time. 
	Since for all $i$, $deg_i\leq n^\rho$, it follows that the superclustering step of phase $i$ requires $O(\rho^{-1}\cdot \delta_i\cdot n^\rho) $ time.
	
	As for the interconnection step, note that each cluster that needs to add a path to the spanner $H$ in the interconnection step of phase $i$ knows all the clusters it needs to add a path to. Moreover, by Theorem \ref{theorem popular}, each cluster center $r_C$, $C\in U_i$, can trace back the shortest path that a message regarding a nearby cluster center $r_{C'}$ took to reach $r_C$, and add the edges along this route to $H$. By Theorem \ref{theorem popular}, this requires $O(deg_i\cdot \delta_i)$ time. Thus the running time of the interconnection step of a given phase is dominated by the running time of the superclustering step of this phase. 
	An exception to that is the concluding phase, in which there is no superclustering step. In this phase, we execute Algorithm \ref{Alg number of near neighbors} with parameters $n^\rho,\delta_\ell$. By Theorem \ref{theorem popular} this requires $O(n^\rho\cdot\delta_\ell)$ time. We then trace back the shortest paths in $O(n^\rho\cdot\delta_\ell)$ time. Thus the running time of the interconnection step of each phase $i\geq 0$ is dominated by $O(\rho^{-1}\cdot \delta_i\cdot n^\rho) $, and so the running time of a single phase of the algorithm is $O(\rho^{-1}\cdot \delta_i\cdot n^\rho) $.

\end{proof}
}
\proofrt

By Lemma \ref{lemma rt per phase} and \cref{eq bound deltai}, the running time of the entire algorithm is bounded by: 
\begin{equation}\label{eq bound running time}
\begin{array}{ccccccc}
\sum_{i=0}^\ell O\left(\rho^{-1}\cdot \delta_i\cdot n^\rho\right)
&\leq&
n^\rho\cdot\rho^{-1}\sum_{i=0}^\ell O\left( \epsi\right)
&\leq&
O\left(n^\rho\cdot \rho^{-1}\cdot \eps{\ell}\right).
\end{array}
\end{equation}
%
%
%
%

\begin{corollary}\label{coro rt}
	The running time of the algorithm is bounded by $
	O\left({n^\rho}{\rho^{-1}}\eps{\ell}\right)$.
\end{corollary}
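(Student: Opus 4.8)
The goal is Corollary \ref{coro rt}, asserting the total running time is $O(n^\rho \rho^{-1} \eps{\ell})$. The plan is to combine Lemma \ref{lemma rt per phase}, which gives a per-phase bound of $O(\rho^{-1} \delta_i n^\rho)$, with the bound $\delta_i \leq O((1/\epsilon)^i)$ from \cref{eq bound deltai}. First I would sum the per-phase costs over $i = 0, 1, \ldots, \ell$, pulling the common factor $n^\rho \rho^{-1}$ out of the sum, which leaves $\sum_{i=0}^\ell O(\epsi)$. Then I would observe that this is a geometric sum dominated (up to constants, since $1/\epsilon \geq 2$, or at least bounded away from $1$) by its largest term $\eps{\ell}$, so the whole sum is $O(\eps{\ell})$. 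Plugging back in yields the claimed bound. This is exactly the chain of inequalities already displayed in \cref{eq bound running time}, so the corollary follows immediately.

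The main (and only) subtlety is the geometric-series step: one needs $1/\epsilon$ bounded away from $1$ so that $\sum_{i=0}^\ell \epsi = \eps{\ell} \cdot \frac{1 - \epsilon^{\ell+1}}{1-\epsilon} = O(\eps{\ell})$. Since the algorithm operates under the standing assumption $\rho \geq 10\epsilon$ with $\rho < 1/2$, we have $\epsilon < 1/20$, so $1/(1-\epsilon) < 2$ and the geometric sum is indeed within a constant factor of its top term. There is no real obstacle here — this is a one-line consequence of the preceding lemma and the explicit bound on $\delta_i$. Below is the proof.

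\begin{proof}
	By Lemma \ref{lemma rt per phase}, phase $i$ runs in $O(\rho^{-1} \delta_i n^\rho)$ time, so the total running time is $\sum_{i=0}^\ell O(\rho^{-1} \delta_i n^\rho)$. By \cref{eq bound deltai}, $\delta_i \leq O((1/\epsilon)^i)$ for every $i \in [0,\ell]$. Hence, as computed in \cref{eq bound running time},
	\[
	\sum_{i=0}^\ell O\left(\rho^{-1} \delta_i n^\rho\right) \leq n^\rho \rho^{-1} \sum_{i=0}^\ell O\left(\epsi\right) \leq O\left(n^\rho \rho^{-1} \eps{\ell}\right),
	\]
	where the last inequality uses that $\epsilon < 1/20$ (which follows from $\rho \geq 10\epsilon$ and $\rho < 1/2$), so that $\sum_{i=0}^\ell \epsi \leq \frac{1}{1-\epsilon}\cdot \eps{\ell} = O(\eps{\ell})$. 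This proves the corollary.
\end{proof}
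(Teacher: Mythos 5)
Your proof is correct and follows exactly the same route as the paper: it applies Lemma \ref{lemma rt per phase} and the bound on $\delta_i$ from \cref{eq bound deltai}, then sums over phases as in \cref{eq bound running time}. The extra justification you supply for the geometric-series step (using $\rho \geq 10\epsilon$ and $\rho < 1/2$ to get $\epsilon < 1/20$) is a valid, if unstated, part of what the paper is implicitly relying on.
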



\subsubsection{Analysis of the Number of Edges}\label{sec size}
In this section, we analyze the number of edges added to the spanner $H$. First, observe that in the superclustering step of phase $i$, the edges that are added to $H$ are a subset of the BFS forest $F_i$. Thus in each phase $O(n)$ edges are added to $H$ by the superclustering step. 

We will now analyze the number of edges added by the interconnection step. In the interconnection step of phase $i$, a path is added to the spanner from each cluster $C$ in $U_i$ to clusters in $P_i$ that are close to it. As $C$ belongs to $U_i$, it is not popular. Thus it has at most $deg_i$ other clusters in $P_i$ that are close to it. To bound the size of $U_i$, we bound the size of $P_i$, as $U_i\subseteq P_i$. The following lemma provides an upper bound on the size of $P_i$ in the exponential growth stage. 

The next two lemmas
provide an upper bound on the size of the cluster collection $P_i$ in the exponential growth and the fixed growth stages.

\begin{lemma}\label{inter_rt_lm2}
	For $i\in [0, i_0+1= \floor*{\log(\kappa\rho)}+1]$, we have 
	$| {P}_i| \leq n^{1-\frac{2^i-1}{\kappa}}.$
\end{lemma}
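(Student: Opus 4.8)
The plan is to prove the bound $|P_i| \le n^{1 - \frac{2^i - 1}{\kappa}}$ by induction on the phase index $i$, using the separation property of the ruling set $RS_i$ to show that each phase shrinks the number of clusters by roughly the degree threshold factor $\deg_i$. First I would handle the base case $i = 0$: by construction $P_0$ consists of $n$ singleton clusters, so $|P_0| = n = n^{1 - \frac{2^0 - 1}{\kappa}} = n^{1-0}$, which holds trivially.

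For the inductive step, suppose $|P_i| \le n^{1 - \frac{2^i - 1}{\kappa}}$ for some $i$ in the exponential growth stage, i.e., $0 \le i \le i_0 = \floor{\log(\kappa\rho)}$. Recall $P_{i+1} = \widehat{P_i}$ and $|P_{i+1}| = |\widehat{P_i}| = |RS_i|$, since each supercluster in $\widehat{P_i}$ is centered at exactly one vertex of the ruling set $RS_i$. So it suffices to bound $|RS_i|$. The key geometric fact is that $RS_i$ is $(2\delta_i+1)$-separated as a subset of $W_i \subseteq S_i$, hence for every pair $r_C, r_{C'} \in RS_i$ the balls of radius $\delta_i$ in $G$ around $r_C$ and $r_{C'}$ are vertex-disjoint (this is exactly the content of Figure \ref{figure o-o}). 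Now every vertex of $RS_i$ is, by definition, a center of a \emph{popular} cluster in $P_i$: that means each $r_C \in RS_i$ has at least $\deg_i$ cluster centers from $S_i$ within distance $\delta_i$ in $G$. Since $|S_i| = |P_i|$ and the $\delta_i$-balls around distinct ruling-set vertices are disjoint, a simple double-counting argument gives $|RS_i| \cdot \deg_i \le |S_i| = |P_i|$, so $|RS_i| \le |P_i| / \deg_i$.

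Plugging in the inductive hypothesis and the value $\deg_i = n^{2^i/\kappa}$ for the exponential growth stage, we get
\[
|P_{i+1}| = |RS_i| \le \frac{|P_i|}{\deg_i} \le \frac{n^{1 - \frac{2^i - 1}{\kappa}}}{n^{2^i/\kappa}} = n^{1 - \frac{2^i - 1}{\kappa} - \frac{2^i}{\kappa}} = n^{1 - \frac{2^{i+1} - 1}{\kappa}},
\]
which is exactly the claimed bound for index $i+1$. The induction carries through as long as $\deg_i = n^{2^i/\kappa}$, i.e., for $i \le i_0$, giving the bound for all $i \le i_0 + 1$ as stated.

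The main obstacle I anticipate is making the disjointness-plus-counting step fully rigorous. I need to be careful that the $\delta_i$-balls being disjoint really follows from $2\delta_i + 1$ separation (triangle inequality: if a vertex $w$ were within $\delta_i$ of both $r_C$ and $r_{C'}$, then $d_G(r_C, r_{C'}) \le 2\delta_i < 2\delta_i + 1$, contradiction), and that "popular" is defined with the threshold counted the right way — whether the $\deg_i$ nearby centers include $r_C$ itself or not only changes the bound by a harmless constant/off-by-one, and I should state which convention I use (consistent with Theorem \ref{theorem popular} and the definition of popular in the overview). I also want to double-check the edge case where $\delta_i$-balls around ruling-set centers might overlap outside of $S_i$; but since I only count \emph{centers} (elements of $S_i$) inside each ball, and each center of $S_i$ lies in at most one such ball by disjointness, the double counting is valid. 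Everything else is a routine exponent computation.
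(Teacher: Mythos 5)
Your proof is correct and follows essentially the same approach as the paper's: induction with the trivial base case, then using the $(2\delta_i+1)$-separation of $RS_i$ to get disjoint $\delta_i$-balls, the popularity of every $r_C \in RS_i \subseteq W_i$ to get at least $\deg_i$ centers of $S_i$ per ball, and the double-count $|RS_i| \cdot \deg_i \leq |P_i|$ together with $\deg_i = n^{2^i/\kappa}$ to close the induction. The only cosmetic difference is that you spell out the double-counting inequality explicitly and flag the (harmless) off-by-one in whether a popular center counts itself, which the paper leaves implicit.
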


\newcommand{\proofboundismall}{
	\begin{proof}
	We will prove the lemma by induction on the index of the phase $i$.

	For $i=0$, the right-hand side is $n^{1-\frac{2^0-1}{\kappa}} = n$. Thus the claim is trivial. 
	
	Recall that for each index $i$, the set $RS_i$ is the ruling set computed in phase $i$, and that $S_i$ is the set of centers of clusters in $P_i$. Since for each phase $i\geq 1$, the clusters of the collection $P_i$ are centered around vertices of $RS_{i-1}$, for $i\geq 1$, we have $S_{i}= RS_{i-1}$. The set $RS_i$ is a $(2\delta_i+1,\frac{2}{\rho}\delta_i)$-ruling set for $W_i$. By Theorem \ref{theorem popular}, all vertices in $W_i$ are popular cluster centers. Thus, for every $r_C\in W_i$, it holds that
	$ |\Gamma^{(\delta_i)}(r_C)\cap S_i| \geq deg_i$. 
	
	By Theorem \ref{theorem ruling set}, the set $RS_i$ is $(2\delta_i+1)$-separated, i.e., for every pair of distinct cluster centers $ r_C,r_{C'}\in RS_i$ we have $ d_G(r_C,r_{C'})\geq 2\delta_i+1$.
	Thus, for every pair of distinct centers $r_C,r_{C'}\in RS_i$, their $\delta_i$-neighborhoods are disjoint, i.e., $
	\Gamma^{(\delta_i)}(r_C)\cap \Gamma^{(\delta_i)}(r_{C'}) = \emptyset$.
	
	Together with the induction hypothesis, and since for i, $0\leq i \leq i_0$, $deg_i = n^{\frac{2^i}{\kappa}}$, this implies that $
	| \widehat{P}_i|
	\leq \frac{| {P}_{i}|}{deg_{i}} 
	\leq  {n^{1-\frac{2^{i}-1}{\kappa}}}/{n^\frac{2^{i}}{\kappa}}
	= n^{1-\frac{2^{i+1}-1}{\kappa} }$.
	
	As $P_{i+1}= \widehat{P}_i$, we conclude that $| {P}_{i+1}| \leq n^{1-\frac{2^{i+1}-1}{\kappa}}$.
\end{proof}
}

\proofboundismall

Observe that by Lemma \ref{inter_rt_lm2}, for the phase ${i_0+1}$, we have that:

\begin{equation}\label{Bound P i 0+1}
| {P}_{i_0+1}| \quad\leq\quad n^{1-\frac{2^{i_0+1}-1}{\kappa}}.
\end{equation}

\begin{lemma}\label{inter_rt_lm5}
	For $i_0+1\leq i \leq \ell$, it holds that
	$| {P}_i| \quad \leq\quad n^{1+\frac{1}{\kappa}-(i-i_0)\rho}. $
\end{lemma}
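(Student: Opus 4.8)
The plan is to prove Lemma~\ref{inter_rt_lm5} by induction on $i$, in close analogy with the proof of Lemma~\ref{inter_rt_lm2}, but using the fixed degree parameter $deg_i = n^\rho$ of the fixed growth stage instead of the doubling parameter. The base case $i = i_0 + 1$ is exactly \cref{Bound P i 0+1}: we have $|P_{i_0+1}| \leq n^{1 - \frac{2^{i_0+1} - 1}{\kappa}}$, and I would check that this is at most $n^{1 + \frac{1}{\kappa} - \rho}$, i.e. that $1 - \frac{2^{i_0+1}-1}{\kappa} \leq 1 + \frac{1}{\kappa} - \rho$, equivalently $\rho \leq \frac{2^{i_0+1}}{\kappa}$. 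Since $i_0 = \floor*{\log(\kappa\rho)}$, we have $2^{i_0} \geq \kappa\rho/2$ — wait, more carefully $2^{i_0+1} \geq \kappa\rho$ because $i_0 + 1 > \log(\kappa\rho)$, so $\frac{2^{i_0+1}}{\kappa} \geq \rho$, which is exactly what is needed. (When $i - i_0 = 1$ the two sides match up to this inequality.)

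For the inductive step, assume $|P_i| \leq n^{1 + \frac{1}{\kappa} - (i - i_0)\rho}$ for some $i$ with $i_0 + 1 \leq i \leq \ell - 1$. As in Lemma~\ref{inter_rt_lm2}, I would use that $S_{i} = RS_{i-1}$, that every vertex in $W_i$ is a popular cluster center with at least $deg_i$ cluster centers of $S_i$ in its $\delta_i$-neighborhood (Theorem~\ref{theorem popular}), and that $RS_i$ is $(2\delta_i+1)$-separated (Theorem~\ref{theorem ruling set}), so the $\delta_i$-neighborhoods of distinct centers of $RS_i$ are pairwise disjoint. Hence $|\widehat{P}_i| \leq |P_i| / deg_i$. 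Since we are in the fixed growth stage, $deg_i = n^\rho$, so $|P_{i+1}| = |\widehat{P}_i| \leq n^{1 + \frac{1}{\kappa} - (i - i_0)\rho} / n^\rho = n^{1 + \frac{1}{\kappa} - (i+1 - i_0)\rho}$, which is the claimed bound for index $i+1$.

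I do not expect a serious obstacle here: the argument is structurally identical to Lemma~\ref{inter_rt_lm2}, with the only change being the value of $deg_i$, which now contributes a fixed $n^\rho$ reduction per phase rather than a doubling one. The one point that needs a small check is the base case inequality $\rho \leq \frac{2^{i_0+1}}{\kappa}$, which follows from the definition $i_0 = \floor*{\log(\kappa\rho)}$; this is the only place where the transition between the two stages enters. Everything else (disjointness of neighborhoods, the ruling-set separation, the popularity lower bound) is quoted verbatim from the already-established Theorems~\ref{theorem popular} and~\ref{theorem ruling set}.

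\begin{proof}
We prove the bound by induction on $i$. For the base case $i = i_0 + 1$, by \cref{Bound P i 0+1} we have $|P_{i_0+1}| \leq n^{1 - \frac{2^{i_0+1} - 1}{\kappa}}$. Since $i_0 = \floor*{\log(\kappa\rho)}$, we have $i_0 + 1 \geq \log(\kappa\rho)$, hence $2^{i_0+1} \geq \kappa\rho$, i.e. $\frac{2^{i_0+1}}{\kappa} \geq \rho$. Therefore
\[
1 - \frac{2^{i_0+1} - 1}{\kappa} = 1 + \frac{1}{\kappa} - \frac{2^{i_0+1}}{\kappa} \leq 1 + \frac{1}{\kappa} - \rho = n\text{-exponent for } i = i_0 + 1,
\]
so $|P_{i_0+1}| \leq n^{1 + \frac{1}{\kappa} - \rho}$, establishing the base case.

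Now assume $|P_i| \leq n^{1 + \frac{1}{\kappa} - (i - i_0)\rho}$ for some $i$ with $i_0 + 1 \leq i \leq \ell - 1$; we prove the bound for $i + 1$. Recall that $S_i = RS_{i-1}$, and that $RS_i$ is a $(2\delta_i + 1, \frac{2}{\rho}\delta_i)$-ruling set for $W_i$. By Theorem~\ref{theorem popular}, every $r_C \in W_i$ is a popular cluster center, so $|\Gamma^{(\delta_i)}(r_C) \cap S_i| \geq deg_i$. By Theorem~\ref{theorem ruling set}, $RS_i$ is $(2\delta_i + 1)$-separated, so for distinct $r_C, r_{C'} \in RS_i$ the $\delta_i$-neighborhoods are disjoint, $\Gamma^{(\delta_i)}(r_C) \cap \Gamma^{(\delta_i)}(r_{C'}) = \emptyset$. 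Since $i_0 + 1 \leq i$, we are in the fixed growth stage, so $deg_i = n^\rho$. Combining these facts with the induction hypothesis,
\[
|\widehat{P}_i| \leq \frac{|P_i|}{deg_i} \leq \frac{n^{1 + \frac{1}{\kappa} - (i - i_0)\rho}}{n^\rho} = n^{1 + \frac{1}{\kappa} - (i + 1 - i_0)\rho}.
\]
As $P_{i+1} = \widehat{P}_i$, we conclude $|P_{i+1}| \leq n^{1 + \frac{1}{\kappa} - (i + 1 - i_0)\rho}$, completing the induction.
\end{proof}
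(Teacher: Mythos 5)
Your proof is correct and follows essentially the same approach as the paper: base case via \cref{Bound P i 0+1} and the floor bound $2^{i_0+1}\geq \kappa\rho$, then an inductive step using the popularity lower bound, the $(2\delta_i+1)$-separation of $RS_i$, and $deg_i=n^\rho$ to get $|\widehat{P}_i|\leq |P_i|/n^\rho$. The only difference is cosmetic: you spell out the arithmetic showing $1-\frac{2^{i_0+1}-1}{\kappa}\leq 1+\frac{1}{\kappa}-\rho$, which the paper leaves implicit.
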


\newcommand{\proofboundibig}{

\begin{proof}
	The proof is by induction on the index of the phase $i$. 
	
	For the base case, by \cref{Bound P i 0+1},
\begin{equation*}
	| {P}_{i_0+1}| \quad\leq \quad n^{1-\frac{2^{i_0+1}-1}{\kappa}}\quad =\quad 
	n^{1-\frac{2^{\floor*{{\log \kappa\rho}}+1}-1}{\kappa}} \quad \leq \quad 
	n^{1+\frac{1}{\kappa}-\rho}\quad =\quad 
	n^{1+\frac{1}{\kappa}-(i_0+1-i_0)\rho}.
	\end{equation*}

	For $i>i_0+1$, we have $S_{i}= RS_{i-1}$. The set $RS_i$ is a $(2\delta_i+1,\frac{2}{\rho}\delta_i)$-ruling set for $W_i$. By Theorem \ref{theorem popular}, all vertices in $W_i$ are popular, i.e., 
	for every cluster center $r_C\in W_i$, it holds that $ |\Gamma^{(\delta_i)}(r_C)\cap S_i| \geq n^\rho.$
	
	By Theorem \ref{theorem ruling set}, the set $RS_i$ is $(\delta_i+1)$-separated, i.e., for every pair of distinct cluster centers $ r_C,r_{C'}\in RS_i$, we have $ d_G(r_C,r_{C'})\geq 2\delta_i+1.$
	
	Thus, for every pair of centers $r_C,r_{C'}\in RS_i$,
	$
	\Gamma^{(\delta_i)}(r_C)\cap \Gamma^{(\delta_i)}(r_{C'}) = \emptyset$.
	Together with the induction hypothesis, this implies that

\begin{equation*}
	| \widehat{P}_i|
	\quad \leq\quad {| {P}_{i}|}/{n^\rho} 
	\quad \leq\quad { n^{1+\frac{1}{\kappa}-(i-i_0)\rho-\rho} }
	\quad = \quad n^{1+\frac{1}{\kappa}-(i+1-i_0)\rho}.
	\end{equation*}
	
	Since $P_{i+1}=\widehat{P}_i$, it follows that $| {P}_{i+1}|\leq n^{1+\frac{1}{\kappa}-(i+1-i_0)\rho}$.
\end{proof}
}
\proofboundibig

The next lemma
provides an upper bound on the number of edges added to the spanner by each phase $i\in [0,\ell]$.

\begin{lemma}\label{lemma number of edges}

For all $i\in [0,\ell]$, in each phase $i$, $O(\nfrac\cdot \delta_i)$ edges are added to the spanner $H$.
\end{lemma}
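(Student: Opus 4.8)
The plan is to bound, phase by phase, the two sources of edges added to $H$: the superclustering step and the interconnection step, and to show that the interconnection step dominates with a bound of $O(\nfrac \cdot \delta_i)$ edges per phase. For the superclustering step, the observation already made in the text suffices: the edges added are a subset of the BFS forest $F_i$, which is a forest on $V$, hence at most $n-1 = O(n) \le O(\nfrac \cdot \delta_i)$ edges. So the whole burden is on the interconnection step.

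For the interconnection step of phase $i$, I would argue as follows. Each cluster $C \in U_i$ contributes, for every cluster $C' \in P_i$ with $d_G(r_C, r_{C'}) \le \delta_i$, one shortest path between $r_C$ and $r_{C'}$; this path has length at most $\delta_i$, hence contributes at most $\delta_i$ edges. Since $C \in U_i$ is not popular (Lemma \ref{popular are clustered}), it has at most $deg_i$ such nearby clusters $C'$. Therefore the total number of edges added in the interconnection step of phase $i$ is at most $|U_i| \cdot deg_i \cdot \delta_i \le |P_i| \cdot deg_i \cdot \delta_i$, using $U_i \subseteq P_i$. It now remains to check that $|P_i| \cdot deg_i \le O(\nfrac)$ in every phase.

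I would split this into the three regimes, invoking the size bounds already proved. In the exponential growth stage, $0 \le i \le i_0$, we have $deg_i = n^{2^i/\kappa}$ and, by Lemma \ref{inter_rt_lm2}, $|P_i| \le n^{1 - (2^i - 1)/\kappa}$; multiplying gives $|P_i| \cdot deg_i \le n^{1 - (2^i-1)/\kappa + 2^i/\kappa} = n^{1 + 1/\kappa} = \nfrac$, exactly on target. In the fixed growth stage, $i_0+1 \le i \le \ell - 1$, we have $deg_i = n^\rho$ and, by Lemma \ref{inter_rt_lm5}, $|P_i| \le n^{1 + 1/\kappa - (i - i_0)\rho} \le n^{1 + 1/\kappa - \rho}$ (since $i - i_0 \ge 1$), so $|P_i| \cdot deg_i \le n^{1 + 1/\kappa} = \nfrac$ again. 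For the concluding phase $i = \ell$, $U_\ell = P_\ell$ and every pair of clusters may be interconnected, so the number of added edges is at most $|P_\ell|^2 \cdot \delta_\ell$; by Lemma \ref{inter_rt_lm5} with $i = \ell = i_0 + \lceil (\kappa+1)/(\kappa\rho)\rceil - 1$, we get $|P_\ell| \le n^{1 + 1/\kappa - (\ell - i_0)\rho} \le n^{1/\kappa}$ (the exponent arithmetic: $(\ell - i_0)\rho = (\lceil (\kappa+1)/(\kappa\rho)\rceil - 1)\rho \ge (\kappa+1)/\kappa - \rho$, so $1 + 1/\kappa - (\ell-i_0)\rho \le 1/\kappa + \rho$; one then uses $\rho < 1/2$ or, more cleanly, notes $|P_\ell| \le n^\rho$ as asserted in the overview), whence $|P_\ell|^2 \le n^{2/\kappa} \le n^{1 + 1/\kappa}$ since $\kappa \ge 2$, giving $O(\nfrac \cdot \delta_\ell)$ edges.

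The main obstacle is the concluding-phase bookkeeping: one must be careful that $|P_\ell|^2 \le O(\nfrac)$, which requires combining the size bound of Lemma \ref{inter_rt_lm5} (or the $|P_\ell| \le n^\rho$ bound mentioned in the overview) with the fact that $\kappa \ge 2$ and $\rho < 1/2$; the exponent inequalities involving the ceilings in the definition of $\ell$ and $i_0$ are the only genuinely fiddly part. Everything else is a direct substitution of the already-established bounds on $|P_i|$ and $deg_i$ and the trivial remark that each interconnection path has at most $\delta_i$ edges. I would therefore structure the proof as: (1) superclustering contributes $O(n)$; (2) interconnection of phase $i < \ell$ contributes at most $|P_i| \cdot deg_i \cdot \delta_i$, then case-split on the two growth stages to get $|P_i| \cdot deg_i \le \nfrac$; (3) interconnection of phase $\ell$ contributes at most $|P_\ell|^2 \cdot \delta_\ell \le O(\nfrac \cdot \delta_\ell)$; and conclude.
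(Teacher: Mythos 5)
Your proof takes essentially the same route as the paper's: bound the superclustering contribution by $O(n)$ via the forest $F_i$, bound the interconnection contribution for $i<\ell$ by $|P_i|\cdot deg_i\cdot\delta_i$ and case-split on the exponential/fixed growth stages using Lemmas~\ref{inter_rt_lm2} and~\ref{inter_rt_lm5}, then treat the concluding phase via $|P_\ell|^2\cdot\delta_\ell$. One small slip in the concluding-phase bookkeeping: your claim $|P_\ell|\le n^{1/\kappa}$ does not follow from the exponent arithmetic you give (the exponent $1+1/\kappa-(\ell-i_0)\rho$ is bounded above by $\rho$, not by $1/\kappa$, and since $\rho\ge 1/\kappa$ this is the wrong direction); the clean route you also mention is the correct one and is the one the paper uses, namely $|P_\ell|\le n^\rho$ together with $\rho<1/2$ gives $|P_\ell|^2\le n^{2\rho}\le n\le\nfrac$. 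With that fix, the argument is correct and matches the paper.
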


\newcommand{\proofsize}{
\begin{proof}
	
	In the interconnection step of phase $0$, we add all edges adjacent to \textit{unpopular} clusters. Note that all vertices $v\in U_0$, have at most $n^{1/\kappa}$ neighbors, i.e., $|\Gamma(v)|<n^{1/\kappa}$. Hence the number of edges added by the interconnection step of phase $0$ is $O(n^{1+\frac{1}{\kappa}})$. 
	
	Let $i\in [0, i_1]$ be an index (recall that $i_1 = i_0+\lceil\frac{\kappa+1}{\kappa\rho}\rceil-2= \ell-1$). Each cluster center $r_C$, such that $C\in U_i$, initiates a BFS exploration to depth $\delta_i$. For each cluster center $r_{C'}$, $C'\in P_i$, that is discovered by this exploration, a shortest $r_C,r_{C'} $ path is added to the spanner $H$. Since $C\in U_i$, the cluster center $r_C$ has at most $deg_i$ other cluster centers $r_{C'}\in S_i$ within distance $\delta_i$ from it. So, the number of paths that are added to the spanner in the interconnection step of phase $i$ is at most: $|U_i|\cdot deg_i\leq |P_i|\cdot deg_i$.
	
	For $i\in [0,i_0= \floor*{\log(\kappa\rho)}]$, by Lemma \ref{inter_rt_lm2}, we have $
	|P_i|\cdot deg_i \leq n^{1-\frac{2^i-1}{\kappa}}\cdot n^\frac{2^i}{\kappa} = n^{1+\frac{1}{\kappa}}$.

	For $i\in [i_0+1,i_1]$, by Lemma \ref{inter_rt_lm5}, we have $
	|P_i|\cdot deg_i \leq n^{1+\frac{1}{\kappa}-(i-i_0)\rho}\cdot n^\rho
	\leq n^{1+\frac{1}{\kappa}}$.
	
	So in each phase $i\in [0,i_1]$, the number of edges added to the spanner $H$ by the interconnection step is $O(n^{1+\frac{1}{\kappa}}\cdot\delta_i)$, and the total 
	number of edges added to $H$ by the superclustering and the interconnection steps of phase $i$ is: 

\begin{equation*}
	O(n+\nfrac\cdot\delta_i).
	\end{equation*}

	We will now discuss the number of edges added to $H$ by phase $\ell$, i.e., by the concluding phase. In this phase, we skip the superclustering step and execute the interconnection step. We set $U_\ell= P_\ell$. Note that by Lemma \ref{inter_rt_lm5}, for the last ($\ell=i_0+\lceil\frac{\kappa+1}{\kappa\rho}\rceil-1$) phase, the size of the input collection $P_\ell$ is bounded by $
	| {P}_{\ell}|\leq n^{1+\frac{1}{\kappa}-(\ell-i_0)\rho} =
	n^{1+\frac{1}{\kappa}-(i_0+\lceil\frac{\kappa+1}{\kappa\rho}\rceil-1-i_0)\rho}\leq
	n^{1+\frac{1}{\kappa}-(\frac{\kappa+1}{\kappa}-\rho)}\leq n^\rho$.
	
	Therefore, even if every pair of clusters in $P_\ell$ is connected by the interconnection step of phase $\ell$, at most $O(n^{2\rho}\cdot\delta_\ell)$ edges are added to $H$ by the concluding phase. As we set $\rho\leq \frac{1}{2}$, the concluding phase adds at most $O(n\cdot\delta_\ell)$ edges to the spanner $H$. 
	
	Hence, the total number of edges added to the spanner $H$ by a phase $i\in [0,\ell]$ is $
	O(\nfrac\cdot \delta_i)$.
	
\end{proof}

}
\proofsize

By Lemma \ref{lemma number of edges} and \cref{eq bound deltai}, the number of edges added to the spanner $H$ by all phases of the algorithm is bounded by: 
\begin{equation}\label{eq bound size}
\begin{array}{ccccccc}
\sum_{i=0}^\ell 
O\left(\nfrac\cdot \delta_i\right)
&\leq&
\sum_{i=0}^\ell 
O\left(\nfrac\cdot\epsi\right)
&\leq&
O\left(\nfrac\cdot\eps{\ell}\right)
\end{array}
\end{equation}
%

\begin{corollary}\label{coro size}
	The size of the spanner $H$ is bounded by
$
	|E_H|  =  O\left(\nfrac\cdot\eps{\ell}\right)$.
\end{corollary}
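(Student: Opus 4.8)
The plan is to combine the per-phase bound of Lemma~\ref{lemma number of edges} with the bound on $\delta_i$ established in \cref{eq bound deltai}, and then observe that the resulting geometric-type sum over $i \in [0,\ell]$ is dominated by its last term. Concretely, first I would recall that by Lemma~\ref{lemma number of edges}, each phase $i$ contributes at most $O(n^{1+1/\kappa}\cdot\delta_i)$ edges to $H$, and that every edge of $H$ is added during some phase (the spanner $H$ is initialized as the empty set and only ever grows in the superclustering and interconnection steps of phases $0,1,\dots,\ell$). Hence $|E_H| \le \sum_{i=0}^{\ell} O(n^{1+1/\kappa}\cdot\delta_i)$.

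Next I would substitute the bound $\delta_i \le O((1/\epsilon)^i) = O(\eps{i})$ from \cref{eq bound deltai}, valid for all $i\in[0,\ell]$ under the standing assumption $\rho \ge 10\epsilon$, to get $|E_H| \le \sum_{i=0}^{\ell} O(n^{1+1/\kappa}\cdot\eps{i})$. Then the key elementary observation is that $\sum_{i=0}^{\ell}\eps{i}$ is a geometric series with ratio $\epsilon^{-1}>1$ (recall $\epsilon$ is small, typically $\epsilon<1$), so the sum is $\le \frac{\eps{(\ell+1)}}{\eps{} - 1} = O(\eps{\ell})$; absorbing this into the $O(\cdot)$ notation gives $|E_H| = O(n^{1+1/\kappa}\cdot\eps{\ell}) = O(\nfrac\cdot\eps{\ell})$. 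This is exactly the chain of inequalities displayed in \cref{eq bound size}, so the corollary is just the conclusion of that computation. I should also note in passing that the $O(n)$ term coming from the BFS forests $F_i$ in each superclustering step — which appears in the intermediate bound $O(n + \nfrac\cdot\delta_i)$ inside the proof of Lemma~\ref{lemma number of edges} — is already subsumed into $O(\nfrac\cdot\delta_i)$ since $\nfrac \ge n$ and $\delta_i \ge 1$, and summing $\ell+1$ such terms still only costs a geometric-series factor that is swallowed by the final $O(\eps{\ell})$.

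There is essentially no serious obstacle here: all the real work has already been done in Lemma~\ref{lemma number of edges} (which handles the delicate size bounds on $P_i$ in both the exponential and fixed growth stages, as well as the special treatment of the concluding phase $\ell$) and in the derivation of \cref{eq bound deltai}. The only thing to be mildly careful about is the geometric summation: one must confirm that the ratio $\epsilon^{-1}$ is bounded away from $1$ so that the sum is genuinely dominated by the last term rather than incurring an extra $\ell$ factor — but since $\ell = O(\log(\kappa\rho) + 1/(\kappa\rho))$ is anyway only polylogarithmic/polynomial in the parameters and independent of $n$, even a stray factor of $\ell$ would be harmless and could be folded into the $O_{\epsilon,\kappa,\rho}(\cdot)$ notation. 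Thus the proof is a two-line appeal to \cref{eq bound size}: the total number of edges added over all $\ell+1$ phases is $O(\nfrac\cdot\eps{\ell})$, which is precisely the claimed bound on $|E_H|$.
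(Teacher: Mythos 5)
Your proposal is correct and follows essentially the same route as the paper: the paper derives the corollary directly from Lemma~\ref{lemma number of edges} and the bound $\delta_i = O(\epsilon^{-i})$ of \cref{eq bound deltai}, summing the resulting geometric series as in \cref{eq bound size}, which is precisely your chain of inequalities. Your side remarks (the $O(n)$ term being absorbed, and the ratio $\epsilon^{-1}$ being bounded away from $1$ — indeed $\epsilon \leq 1/10$ under the standing assumptions, so the series is genuinely dominated by its last term) are accurate and just make explicit what the paper leaves implicit.
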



\subsubsection{Analysis of the Stretch}\label{sec stretch}

In this section, we analyze the stretch of the spanner $H$. The following two lemmas
  provide the necessary tools for Lemma \ref{lemma stretch}, in which we analyze the stretch of the spanner $H$.

\begin{lemma}\label{str1}
	For all $i\in [0,\ell]$, for every pair of clusters $C\in {U}_i$, $C'\in {P}_i$ at distance at most $\left(\frac{1}{\epsilon}\right)^i$ from one another, a shortest path between the clusters centers $r_C,r_{C'}$ was added to the spanner $H$. 
\end{lemma}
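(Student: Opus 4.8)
The plan is to show that if $C \in U_i$ and $C' \in P_i$ satisfy $d_G(r_C, r_{C'}) \le (1/\epsilon)^i$, then $r_{C'}$ is among the cluster centers that $r_C$ connects to in the interconnection step of phase $i$, so that a shortest $r_C$–$r_{C'}$ path is indeed added to $H$. The key observation is that the interconnection step of phase $i$ connects $r_C$ to every center $r_{C'}$ with $d_G(r_C, r_{C'}) \le \delta_i$, and that $\delta_i \ge (1/\epsilon)^i$. Indeed, by \cref{eq def deltai} we have $\delta_i = \epsilon^{-i} + 2R_i \ge \epsilon^{-i} = (1/\epsilon)^i$, using $R_i \ge 0$ (which follows from \cref{def_ri+1} by an immediate induction, or directly from the explicit formula in Lemma \ref{lemma ri bound not explicit}). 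Hence the hypothesis $d_G(r_C, r_{C'}) \le (1/\epsilon)^i$ implies $d_G(r_C, r_{C'}) \le \delta_i$, i.e.\ $C$ and $C'$ are \emph{close} in the sense of phase $i$.

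Next I would invoke Theorem \ref{theorem popular} together with Lemma \ref{popular are clustered}: since $C \in U_i$, the cluster $C$ was not superclustered in phase $i$, so by Lemma \ref{popular are clustered} it is not popular, and hence $r_C \notin W_i$. Therefore, by part 2 of Theorem \ref{theorem popular}, the center $r_C$ knows the identity of every center $r_{C'} \in S_i$ with $d_G(r_C, r_{C'}) \le \delta_i$, and for each such pair there is a shortest $r_C$–$r_{C'}$ path all of whose vertices know their distance to $r_{C'}$. Consequently, in the interconnection step of phase $i$ the center $r_C$ traces back exactly this shortest path and adds it to $H$. This handles all phases $i \in [0, \ell-1]$ uniformly; for the concluding phase $\ell$ the same argument applies, since Algorithm \ref{Alg number of near neighbors} is run with input $(G, P_\ell, \deg_\ell, \delta_\ell)$ and produces the same neighborhood information, while $U_\ell = P_\ell$ so every cluster participates in the interconnection step.

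One subtlety to address explicitly is phase $0$: there $P_0$ consists of singletons, $\delta_0 = 1$, and the interconnection step adds all edges incident to unpopular singletons; a pair at distance $\le (1/\epsilon)^0 = 1$ is either the same vertex or an adjacent pair, and in the latter case the edge is added precisely because the singleton $C \in U_0$ is unpopular. So the statement holds at $i=0$ as well. I do not expect a genuine obstacle here — the lemma is essentially a bookkeeping statement unwinding the definitions of $\delta_i$, of \emph{close}, and of the interconnection step — but the one place that needs care is making sure the inequality $\delta_i \ge (1/\epsilon)^i$ is stated for \emph{all} $i \in [0,\ell]$ (including $i=0$, where $\delta_0 = 1 = \epsilon^0$), and that the "knows about and traces back" guarantee of Theorem \ref{theorem popular} is correctly applied to members of $U_i$, which are guaranteed non-popular by Lemma \ref{popular are clustered}.
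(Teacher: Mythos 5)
Your proof misreads the hypothesis of the lemma. The statement says the two \emph{clusters} $C \in U_i$ and $C' \in P_i$ are at distance at most $\epsilon^{-i}$ from one another, i.e.\ $d_G(C, C') \le \epsilon^{-i}$, not that the cluster \emph{centers} are: you have replaced this with $d_G(r_C, r_{C'}) \le \epsilon^{-i}$, a strictly stronger assumption. With your reading, the observation that $\delta_i = \epsilon^{-i} + 2R_i \ge \epsilon^{-i}$ is all you need, and the $2R_i$ term in the definition of $\delta_i$ becomes useless slack; this should be a warning sign. The paper's proof instead appeals to Lemma \ref{ri_bounds_pi lemma} ($Rad(P_i) \le R_i$) to convert the hypothesis into a bound on center distance: since $r_C$ is within $R_i$ of every vertex of $C$, and likewise for $r_{C'}$, the hypothesis $d_G(C, C') \le \epsilon^{-i}$ gives $d_G(r_C, r_{C'}) \le \epsilon^{-i} + 2R_i = \delta_i$. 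This is precisely where the $2R_i$ in $\delta_i$ is spent, and it is exactly the case your argument does not cover: pairs with $\epsilon^{-i} < d_G(r_C, r_{C'}) \le \delta_i$. That case genuinely arises in the application (Lemma \ref{lemma stretch}), where the input is $d_G(z_l, z_r) \le \epsilon^{-i}$ for vertices $z_l \in C_l$, $z_r \in C_r$, not a bound on the center distance.

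The remainder of your argument is correct and matches the paper: $C \in U_i$ implies $C$ was not superclustered, hence by Lemma \ref{popular are clustered} it is not popular and $r_C \notin W_i$, so part (2) of Theorem \ref{theorem popular} guarantees $r_C$ learned of $r_{C'}$ and can trace back a shortest path. Your remarks on phase $0$ and the concluding phase $\ell$ are also sound. The fix is local: replace the step ``$d_G(r_C, r_{C'}) \le \epsilon^{-i} \le \delta_i$'' with ``$d_G(r_C, r_{C'}) \le d_G(C, C') + 2\,Rad(P_i) \le \epsilon^{-i} + 2R_i = \delta_i$ by Lemma \ref{ri_bounds_pi lemma}.''
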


\newcommand{\proofstretcho}{
\begin{proof}
	Let $i\in [0,\ell]$, and let $C\in {U}_i$, $C'\in {P}_i$ a pair of clusters at distance at most $\left(\frac{1}{\epsilon}\right)^i$ from one another. 
	Since $R_i$ is an upper bound on $Rad(P_i)$ (see Lemma \ref{ri_bounds_pi lemma}), we have that $d_G(r_C,r_{C'})\leq 2R_i+\epsi= \delta_i$. 
	Recall that by Lemma \ref{popular are clustered}, all popular clusters in phase $i$ are superclustered into clusters of $P_{i+1}$. Thus they do not belong to $U_i$. It follows that centers of clusters of $U_i$ do not belong to $W_i$. Thus, since $C\in {U}_i$, by Theorem \ref{theorem popular}, in the interconnection step of phase $i$ the center $r_C$ of $C$ knows the distance to $r_{C'}$. So $r_C$ adds a shortest $r_C-r_{C'}$ path to the spanner $H$. 	
\end{proof}
}

\proofstretcho

We now provide an upper bound on the distance in $H$ between neighboring clusters in $G$.

\begin{lemma}\label{3rj_1_ri}
	Consider a pair of indices $0\leq j<i\leq \ell$, and a pair of neighboring clusters $C\in U_j$ and $C'\in U_i$. Let $w$ be some vertex in $C$, and let $r_{C'}$ be the center of the cluster $C'$. Let $\rho<1$. Then, there is a path in $H$ between $w$ and $r_{C'}$ of length at most $3R_j+1+R_i$.
\end{lemma}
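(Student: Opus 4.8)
The plan is to build a path in $H$ from $w$ to $r_{C'}$ in two stages: first go from $w$ to the center $r_C$ of its own cluster $C\in U_j$ within $C$'s BFS tree, then hop from $r_C$ to $r_{C'}$ using an interconnection path added in phase $j$, and possibly a short detour inside $C'$. First I would note that $w\in C$ and $Rad(P_j)\le R_j$ by \cref{ri_bounds_pi lemma}, so $d_H(r_C,w)\le R_j$. Since $C$ and $C'$ are neighboring clusters in $G$, there is an edge of $G$ between some $x\in C$ and some $y\in C'$; hence
\[
d_G(r_C,r_{C'})\ \le\ d_G(r_C,x)+1+d_G(y,r_{C'})\ \le\ Rad(C)+1+Rad(C')\ \le\ R_j+1+R_i.
\]
The key observation is that $R_j+1+R_i\le (1/\epsilon)^j$: indeed $j<i\le\ell$ and $R_i\le \frac{4}{\rho}\eps{(i-1)}$ by \cref{eq bound ri}, but this is the dimension-counting step that I expect to need a little care — one has to check that even with $i$ as large as $\ell$, the quantity $R_i$ (which can be comparable to $\eps{(i-1)}$, much larger than $\eps j$) does not wreck the bound. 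This is exactly where the hypothesis structure matters: $C'\in U_i$ is \emph{neighboring} $C$, and the only way the lemma can be true is if the intended reading is that we route through $C'$'s center via a path added in phase $j$ to the cluster of $P_j$ \emph{containing} the relevant endpoint, not directly to $r_{C'}$. So the main obstacle is getting the indices of the interconnection step right.

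Assuming the reconciliation goes through as in the analogous argument of \cite{ElkinN17}, the argument finishes as follows. Since $C\in U_j$, by \cref{popular are clustered} the center $r_C$ is not in $W_j$, and $C'$ (or the ancestor cluster of $C'$ in $P_j$, call it $\tilde C$) lies within distance $\delta_j$ of $r_C$; then \cref{str1} gives that a shortest $r_C$–$r_{\tilde C}$ path of length at most $\delta_j=\epsi+2R_j\le 3R_j+\epsi$ was added to $H$ in phase $j$'s interconnection step. (Using $R_j\ge R_0=0$ and, for $j\ge 1$, $\epsi\le R_j$ would let us fold $\epsi$ into $R_j$.) Finally, from $r_{\tilde C}$ we descend in the BFS tree of the phase-$i$ cluster containing $r_{C'}$ — or simply note $d_H(r_{\tilde C},r_{C'})\le Rad(P_i)\le R_i$ — at cost $R_i$. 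Concatenating,
\[
d_H(w,r_{C'})\ \le\ d_H(w,r_C)+d_H(r_C,r_{\tilde C})+d_H(r_{\tilde C},r_{C'})\ \le\ R_j+(3R_j+1)+R_i
\]
after absorbing the $\epsi$ term; but more cleanly, $d_H(w,r_C)\le R_j$, the interconnection path contributes $\le 2R_j+\epsi$, and the descent to $r_{C'}$ contributes $\le R_i$, for a total of $3R_j+1+R_i$ once $\epsi$ is bounded by a constant relative to the "+1" slack or absorbed into $R_j$.

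The structure of the write-up, then, is: (1) bound $d_H(w,r_C)\le R_j$ via \cref{ri_bounds_pi lemma}; (2) use the neighboring edge and radius bounds to get $d_G(r_C,r_{C'})\le R_j+1+R_i$, and verify this is $\le(1/\epsilon)^j$ so that \cref{str1} applies to the phase-$j$ cluster pair; (3) invoke \cref{str1} to extract the interconnection path of length $\le\delta_j\le 3R_j+\epsi$ already in $H$; (4) add the final hop of length $\le R_i$ inside $C'$'s phase-$i$ cluster; (5) sum and use $\rho<1$ together with the $\epsi$-absorption to get $3R_j+1+R_i$. The delicate point — and the one I would spend the most words on — is step (2): making sure the cluster of $P_j$ that phase $j$'s interconnection actually links $r_C$ to is the one we can later reach $r_{C'}$ from, which forces the statement to quietly use the fact that $C'\in U_i\subseteq P_i\subseteq\cdots$ descends from a unique cluster of $P_j$.
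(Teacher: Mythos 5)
Your overall strategy is the same as the paper's: pick the edge $(z,z')$ between $C$ and $C'$, route $w \to r_C$ inside $C$, then take the interconnection path (added in phase $j$) from $r_C$ to the center $r_{C''}$ of the phase-$j$ cluster $C''$ containing $z'$, and finally note that $r_{C''}$ lies in $C'$ and so is within $R_i$ of $r_{C'}$. You also correctly identify the critical subtlety — that the interconnection must go to the center of $C'$'s \emph{ancestor in $P_j$}, not to $r_{C'}$ directly. However, your execution has a concrete quantitative gap.

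The gap is in how you bound the length of the interconnection path. You bound it by the distance threshold $\delta_j = \epsilon^{-j} + 2R_j$, giving a total of $R_j + (\epsilon^{-j} + 2R_j) + R_i = 3R_j + \epsilon^{-j} + R_i$, and then try to "absorb" the $\epsilon^{-j}$ term. This cannot work: under the paper's standing assumption $\rho \geq 10\epsilon$, one has $R_j \leq \frac{4}{\rho}\epsilon^{-(j-1)} \leq \frac{2}{5}\epsilon^{-j}$, so in fact $\epsilon^{-j} > R_j$ for $j \geq 1$, and $\epsilon^{-j}$ is also much larger than the "$+1$" slack. The correct move, which the paper uses, is to bound the interconnection path by the actual distance it realizes: since $C$ and $C''$ are both phase-$j$ clusters (radius $\leq R_j$) joined by a single edge, $d_G(r_C,r_{C''}) \leq 2R_j+1$. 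The inequality $2R_j + 1 \leq \delta_j$ is used only to verify that the interconnection step \emph{adds} the path (because the clusters are within range); the path itself, being a shortest $r_C$--$r_{C''}$ path, has length $\leq 2R_j+1$. Summing $R_j + (2R_j+1) + R_i$ gives exactly $3R_j + 1 + R_i$, with no term to absorb. A secondary confusion in your step (2) is the attempted verification that $d_G(r_C,r_{C'}) \leq (1/\epsilon)^j$; this is neither needed nor generally true (since $R_i$ can dwarf $\epsilon^{-j}$), and Lemma~\ref{str1} should be applied to the pair $(C, C'') \in U_j \times P_j$ at distance $1 \leq \epsilon^{-j}$, not to $(C,C')$.
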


\newcommand{\proofthreeradi}{
\begin{proof}
	
	Let $(z',z)\in E$ be an edge such that $z\in C$ and $z'\in C'$ (see Figure \ref{fig 3rad} for an illustration). We will first consider phase $j$. Let $C''\in P_j$ be the cluster such that $z'\in C''$, that is, $C''$ is the cluster of $z'$ in the $j$th phase. Observe that $C\in U_j$, and $d_G(C,C'') = 1$. Then by Lemma \ref{ri_bounds_pi lemma}, we have $d_G(r_C,r_{C''})\leq 2R_j+1$. Note that $ 2R_j+1\leq \eps{j}+2R_j= \delta_j$, for all $j\in [0,\ell]$. Thus, a shortest $r_C-r_{C''}$ path was added to $H$ in the interconnection step of phase $j$. By Lemma \ref{ri_bounds_pi lemma}, there is a path from $w$ to $r_C$ in $H$ of length at most $R_j$, and there is a path from $r_{C''}$ to $r_{C'}$ of length at most $R_i$. Note also that $\eps{j}\leq R_j$. Thus, 	
$
	d_H(w,r_{C'})
	 \leq  d_H(w,r_C)+d_H(r_C,r_{C''})+d_H(r_{C''},r_{C'})
	\leq 3R_j+1+ R_i$.
	
	\begin{figure}[H]
		\begin{center}
			\includegraphics[scale=0.3]{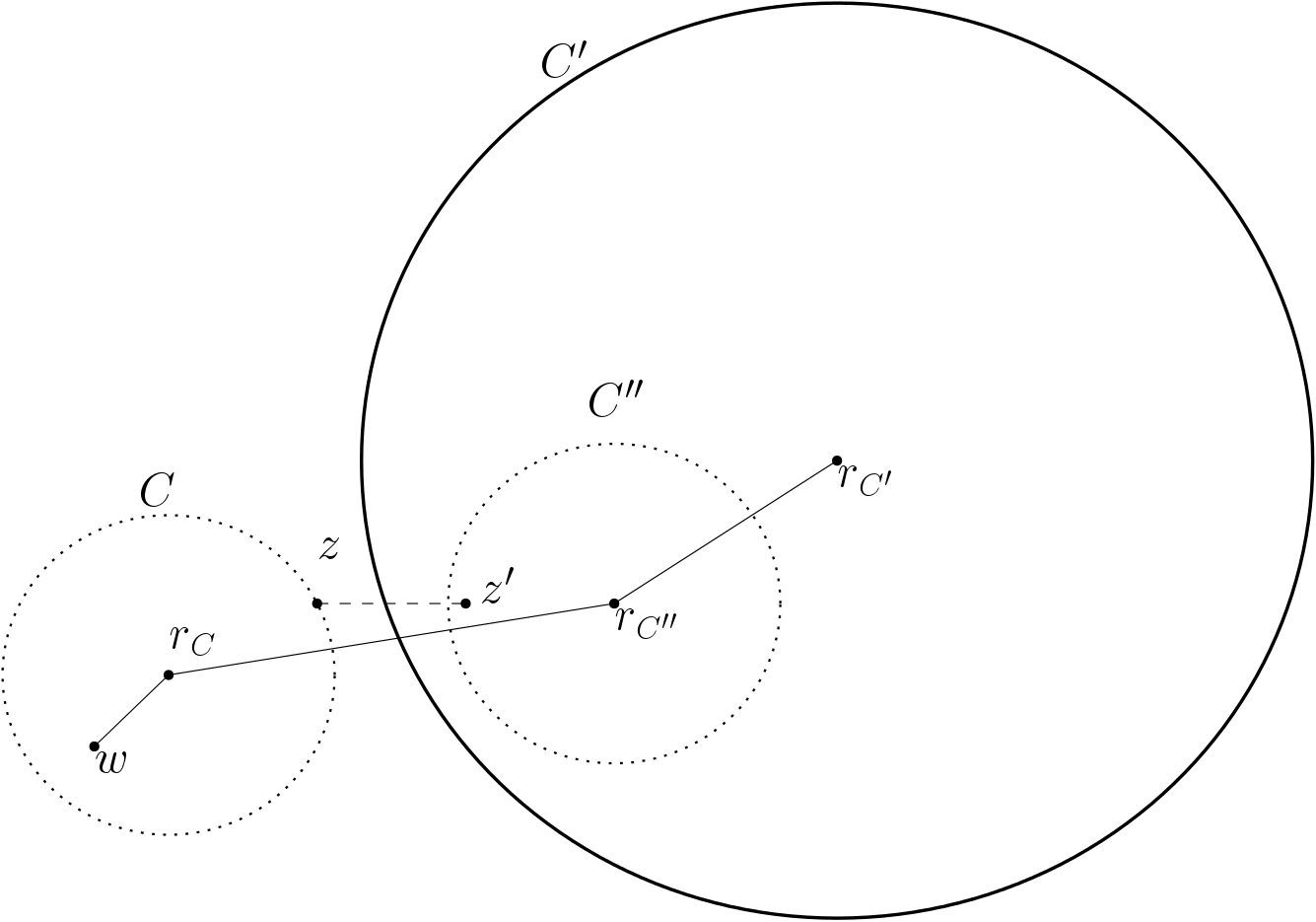}
			\caption{The path in the spanner $H$ from $w$ to $R_{C'}$. In the figure, the dotted ovals represent clusters in $U_j$, and the large oval represents a cluster in $U_i$. The dashed line represents an edge in $G$, and the solid lines represent the path in the spanner $H$.} 		
			\label{fig 3rad}
		\end{center}
	\end{figure}
	
\end{proof}
}
\proofthreeradi

Since $ R_{i+1} = \frac{2}{\rho}\epsi +\left({5}/{\rho}\right)R_i$, and since $\rho<1$ and $j<i$, we have $3R_j \leq R_i$. It follows that: 
\begin{equation}\label{3rjri_leq_2ri}
\begin{array}{cllll}
d_H(w,r_{C'})&\leq & 3R_j+1+R_i
& \leq & 2R_i +1.
\end{array}
\end{equation}

We are now ready to analyze the stretch of our spanner. Recall that for every index $i$, the set $U^{(i)}$ is the union of all sets $U_0,U_1,\dots,U_i$.

\begin{lemma}\label{lemma stretch}
	
	Assume $\epsilon\leq \frac{1}{10}$ and $\rho\geq 10\epsilon$. Consider a pair of vertices $u,v\in V$. Fix a shortest path $\pi(u,v)$ between them in $G$, and suppose that for some $i\leq \ell$ all the vertices of $\pi(u,v)$ are all clustered in the set $U^{(i)}$. Then,
\begin{equation}
	d_H(u,v) = \left(1+(30\cdot \epsilon \cdot i){\rho^{-1}}\right)d_G(u,v) + 
	6\sum_{j=0}^{i} R_j\cdot 2^{i-j}
	\end{equation}

\end{lemma}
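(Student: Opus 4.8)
The plan is to induct on $i$, the index of the last phase needed to cluster all vertices of the shortest path $\pi(u,v)$. For the base case $i=0$, all of $\pi(u,v)$ lies in singleton clusters of $U_0$, so consecutive vertices of $\pi$ are neighboring clusters; each adjacent pair is at distance $1 = (1/\epsilon)^0$, so by Lemma~\ref{str1} the single edge between them is added to $H$. Hence $d_H(u,v) = d_G(u,v)$, which satisfies the claimed bound since the multiplicative overhead $30\epsilon\cdot 0 \cdot \rho^{-1}=0$ and the additive term is nonnegative.

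For the inductive step, I would first dispose of the easy case: if every vertex of $\pi(u,v)$ actually lies in $U^{(i-1)}$, the induction hypothesis gives a bound with $i-1$ in place of $i$, which is weaker, so we are done. Otherwise, some vertices of $\pi$ are clustered only in $U_i$. The strategy is to walk along $\pi(u,v)$ from $u$ to $v$, and whenever we encounter a maximal subpath whose endpoints are "anchored" at vertices of $U_i$-clusters (or at $u,v$ themselves), replace that stretch of $\pi$ by a detour through $H$. Concretely: let $w_0=u, w_1, \dots, w_t = v$ be the vertices of $\pi$ that lie in clusters of $U_i$ (together with $u$ and $v$), in the order they appear along $\pi$. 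Between two consecutive such "marked" vertices $w_{s}$ and $w_{s+1}$, the portion of $\pi$ strictly between them has all its vertices in $U^{(i-1)}$, so the induction hypothesis bounds the $H$-distance of that sub-stretch. Then I use Lemma~\ref{3rj_1_ri} (via \cref{3rjri_leq_2ri}, $d_H(w, r_{C'}) \le 2R_i+1$, or the more refined $3R_j+1+R_i$) to connect each $w_s$ to the center of its own $U_i$-cluster, and if two consecutive marked vertices happen to be in the same or neighboring $U_i$-clusters we bridge through their centers, paying at most $O(R_i)$ or an interconnection path of length $\le \delta_i$.

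The bookkeeping that makes this work: there are at most $d_G(u,v)/\Theta((1/\epsilon)^i)$ "long" gaps in which $w_s$ and $w_{s+1}$ are far apart — wait, more carefully, the key counting observation (exactly as in the Elkin–Neiman analysis) is that if $w_s$ and $w_{s+1}$ are two consecutive marked vertices with the sub-stretch between them being "short", i.e. of $G$-length at most $(1/\epsilon)^i$, then their $U_i$-clusters are at distance $\le \delta_i$ and Lemma~\ref{str1} gives a direct interconnection path between the cluster centers; each such short segment contributes an additive $O(R_i)$ detour but does \emph{not} contribute to the multiplicative stretch (it is "absorbed"). Segments of $G$-length $> (1/\epsilon)^i$ are traversed essentially by the induction hypothesis applied to the $U^{(i-1)}$-interior, plus $O(R_i)$ at each of the two endpoints; since each such segment has length $>(1/\epsilon)^i$, the total number of them is $< \epsilon^i \cdot d_G(u,v)$, so the accumulated endpoint cost is $O(R_i)\cdot \epsilon^i d_G(u,v) = O(\epsilon^i R_i) d_G(u,v)$, which by \cref{eq bound ri} ($R_i \le (4/\rho)\epsilon^{-(i-1)}$) is $O(\epsilon/\rho)\cdot d_G(u,v)$ — this is where the $30\epsilon\rho^{-1}$ per-level term comes from. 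Summing the per-level contributions over the $i$ levels gives the $(1+30\epsilon i \rho^{-1})$ factor, and the additive terms accumulate geometrically as $6\sum_{j=0}^i R_j 2^{i-j}$ because each level can roughly double the additive contribution of the previous one (each detour through a level-$j$ cluster is charged $O(R_j)$ and gets "carried" through all subsequent levels).

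The main obstacle I anticipate is getting the recursion on the additive term exactly right — in particular justifying the factor $2^{i-j}$ and the constant $6$. The subtlety is that when we apply the induction hypothesis to a sub-stretch with vertices in $U^{(i-1)}$, its additive error is already $6\sum_{j=0}^{i-1}R_j 2^{i-1-j}$, and we may invoke the hypothesis on \emph{several} disjoint sub-stretches along $\pi$; one must argue that the additive errors do not simply add up over these sub-stretches but rather telescope, using that the sub-stretches partition a subset of $\pi$ and that the additive term, unlike the multiplicative term, should be charged once per "level transition" rather than once per segment. Handling the boundary cases — a marked vertex coinciding with $u$ or $v$, two consecutive marked vertices lying in the same $U_i$-cluster, and the concluding phase $\ell$ where $U_\ell = P_\ell$ and clusters may be large — will require care but no new ideas. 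I would also need $\epsilon \le 1/10$ and $\rho \ge 10\epsilon$ precisely to ensure $R_i \le (4/\rho)\epsilon^{-(i-1)}$ and that $3R_j + 1 \le R_i$ for $j<i$, both already established in the excerpt (\cref{eq bound ri}, \cref{3rjri_leq_2ri}).
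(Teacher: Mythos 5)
Your base case, the choice of induction variable, and the decision to bridge between $U_i$-clustered vertices via their cluster centers (Lemma \ref{str1} and \cref{3rjri_leq_2ri}) all match the paper. But your decomposition of the inductive step has a genuine gap, and it is not the "telescoping" subtlety you flagged at the end --- it is the treatment of the short gaps.

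You mark \emph{every} $U_i$-clustered vertex $w_0,\dots,w_t$ on $\pi(u,v)$ and propose to bridge each consecutive pair through cluster centers, paying an additive $O(R_i)$ per bridge. You correctly bound the number of \emph{long} gaps (of $G$-length $>\epsilon^{-i}$) by $\epsilon^i d_G(u,v)$, but you never bound the number of \emph{short} gaps, and there can be $\Theta(d_G(u,v))$ of them. In the extreme case where every vertex of $\pi(u,v)$ is $U_i$-clustered, your scheme inserts a detour of additive cost $\Theta(R_i)$ across every single edge of the path, accumulating to $\Theta(R_i)\cdot d_G(u,v)$. Since $R_i\approx(4/\rho)\epsilon^{-(i-1)}$ grows geometrically with $i$, this is a multiplicative overhead of order $R_i$, not the claimed $O(\epsilon/\rho)$ per level; the hand-waved claim that short gaps are "absorbed" is where the argument fails. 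The paper avoids this entirely by reversing the order of operations: it first chops $\pi(u,v)$ into \emph{fixed-length} segments of length exactly $\epsilon^{-i}$, and only then, \emph{within each such segment}, identifies the leftmost and rightmost $U_i$-clustered vertices $z_l,z_r$ and performs a \emph{single} cluster-center detour $w_l\to r_{C_l}\to r_{C_r}\to w_r$, applying the induction hypothesis to the two $U^{(i-1)}$-clustered tails $\pi(x,w_l)$ and $\pi(w_r,y)$. Any intermediate $U_i$-clustered vertices between $z_l$ and $z_r$ are simply skipped over by the interconnection path, which exists by Lemma \ref{str1} because $d_G(r_{C_l},r_{C_r})\le d_G(z_l,z_r)+2R_i\le\delta_i$. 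This yields exactly one $\Theta(R_i)$ detour per $\epsilon^{-i}$-length segment, i.e.\ an amortized per-unit cost of $\Theta(\epsilon^i R_i)=O(\epsilon/\rho)$, which is what feeds the multiplicative increment. The $2^{i-j}$ factor you were worried about then has a clean explanation --- the induction hypothesis is invoked exactly \emph{twice} per segment (once on each tail), so $2\cdot 6\sum_{j\le i-1}R_j 2^{(i-1)-j}+6R_i=6\sum_{j\le i}R_j 2^{i-j}$ --- and the $(\lfloor d_G(u,v)/\epsilon^{-i}\rfloor+1)$ copies of this additive term are then split: one copy stays additive, and the rest are absorbed into the multiplicative factor using $\epsilon^i\cdot 6\sum_{j\le i}R_j 2^{i-j}\le 30\epsilon/\rho$. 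There is no telescoping; the mechanism you were missing is the fixed-length segmentation that caps the number of level-$i$ detours at one per $\epsilon^{-i}$ units of $G$-distance.
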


\begin{proof}
	
	The proof is by induction on $i$. For the base case, $i=0$, all the vertices on 
	$\pi(u,v)$ are $U_0$-clustered, and so all the edges of the path are in the spanner, and $d_G(u,v)=d_H(u,v)$.

	For the induction step, we first consider a pair of vertices $x,y$ such that there is a path $\pi(x,y)$ in $G$ of length at most $ \epsi$, and such that all the vertices on the path are clustered in the set $U^{(i)}$. 
	
		\begin{figure}
	\begin{center}

				\includegraphics[scale=0.4]{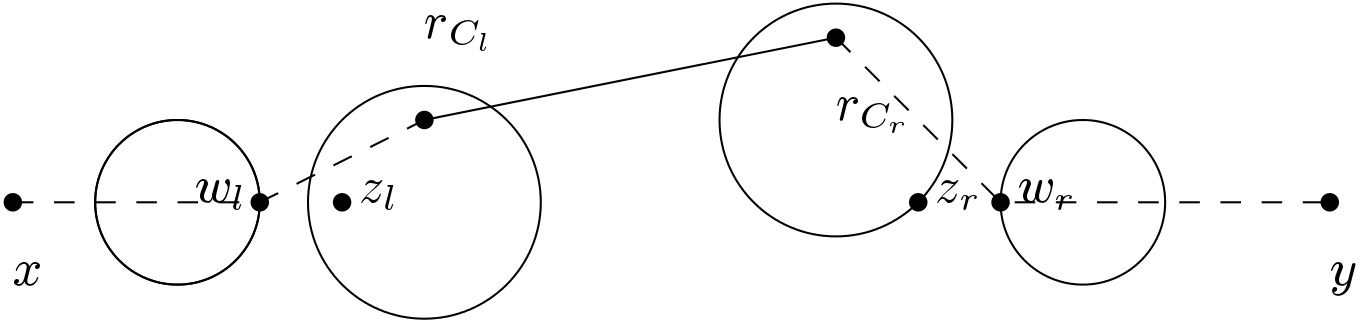}
				\caption{The path in the spanner $H$ from $x$ to $y$. In the figure, the dashed lines represent the path in the spanner $H$ that is not necessarily a shortest path in $G$. The solid line represents the path in $H$, that is also a shortest path in $G$.}
				\label{x y dist fig}	
		\end{center}
			
		\end{figure}
	
	To simplify presentation, we will imagine that the vertices of $\pi(x,y)$ appear from left to right (see Figure \ref{x y dist fig} 
	for an illustration), where $x$ (respectively, $y$) is the leftmost (respectively, rightmost) vertex of the path. Let $z_l,z_r$ be the leftmost and rightmost $U_i$-clustered vertices on $\pi(x,y)$, respectively, and let $C_l,C_r$ ($r_{C_l},r_{C_r}$) be their respective clusters (cluster centers).
	
	We note that it is possible that $z_l,z_r$ belong to the same cluster, i.e, $C_l=C_r$. However, this case is simpler, as in the analysis the path between the cluster centers $r_{C_l},r_{C_r}$ will simply be an empty path. We also note that it is possible that there is no $U_i$-clustered vertex on the path $\pi(x,y)$. In this case, we can take $i'$ to be the maximal index such that there is a vertex on $\pi(x,y)$ that is $U_{i'}$-clustered. Replacing the index $i$ with an index $i'$ such that $i'<i$ will only decrease the bound on the distance between $x,y$ in $H$. Therefore, without loss of generality, we assume that there exists a vertex on $\pi(x,y)$ that is $U_i$ clustered. 
	Let $w_l$ (respectively, $w_r$) be the neighbor of $z_l$ (respectively, $z_r$) on the sub-path $\pi(x,z_l)$ (respectively, $\pi(z_r,y)$). Observe that $w_l,w_r$ are $U^{(i-1)}$ clustered. By \cref{3rjri_leq_2ri}, $
	d_H(w_l,r_{C_l})\leq 2R_i+1 \ \text{ and }\ d_H(r_{C_r},w_r)\leq 2R_i+1$. 
	
	Note that as $|\pi(x,y)|\leq \epsi$, we have:
\begin{equation*}
	\begin{array}{lllllllllll}
	d_G(r_{C_l},r_{C_r})& \leq d_G(r_{C_l},z_l) + d_G(z_{l},z_r) +d_G(z_r,r_{C_r})
	&\leq 
	d_G(z_{l},z_r)+2R_i
	&\leq  \epsi +2R_i
	& = \delta_i.
	\end{array}
	\end{equation*}

	From this and from the fact that $C_l,C_r\in U_i$, it follows that 
	a shortest path between $r_{C_l},r_{C_r}$ was added to the spanner $H$ in the interconnection step of phase $i$. Thus $d_G(r_{C_l},r_{C_r}) =d_H(r_{C_l},r_{C_r})$, and $
	d_H(r_{C_l},r_{C_r})\leq d_G(z_l,z_r)+2R_i$.

	As all the vertices in the sub-paths $\pi(x,w_l),\pi(w_r,y)$ are $U^{(i-1)}$ clustered, the induction hypothesis applies to them. Hence
	
\begin{equation*}
	\begin{array}{lllllllllll}
	d_H(x,w_l)&\leq& 
	\left(1+{(30\epsilon(i-1))\cdot\rho^{-1}}\right) \cdot d_G(x,w_l)
	+ 6\sum_{j=1}^{i-1}R_j\cdot 2 ^{i-1-j} \\
	d_H(w_r,y) &\leq&
	\left(1+{(30\epsilon(i-1))\cdot\rho^{-1}}\right) \cdot d_G(w_r,y)
	+ 6\sum_{j=1}^{i-1}R_j\cdot 2 ^{i-1-j}.
\end{array}
\end{equation*}

	Therefore $d_H(x,y)$ is bounded by:

\begin{equation}\label{dhxy}
	\begin{array}{lllll}
	d_H(x,y)&\leq& 
	d_H(x,w_l)+d_H(w_l,r_{C_l})+d_H(r_{C_l},r_{C_r})+d_H(r_{C_r},w_r)+d_H(w_r,y)\\
	&\leq&
	d_H(x,w_l)+2R_i+1 +d_G(z_l,z_r)+2R_i +2R_i+1 +d_H(w_r,y)\\
	&\leq&
	\left(1+{(30\epsilon(i-1))}{\cdot\rho^{-1}}\right)d_H(x,y)	
	+ 6R_i+ 2\cdot 6\sum_{j=1}^{i-1}R_j\cdot 2 ^{i-1-j} \\
	&= &
	\left(1+{(30\epsilon(i-1))}{\cdot\rho^{-1}}\right)d_H(x,y)	
	+ 6\sum_{j=1}^{i}R_j\cdot 2 ^{i-j}.
	\end{array}
	\end{equation}

		\begin{figure}[h]
			\begin{center}
				\includegraphics[scale=0.6]{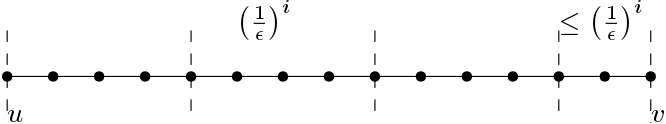}
				\caption{The path between $u,v$ in $G$ is divided into segments of length $\epsi$ (the last segment can be shorter).}
				\label{figure uv path}
			\end{center}
		\end{figure}
	
	Now, consider a pair $u,v\in V$ such that all vertices of $\pi(u,v)$ are $U^{(i)}$ clustered, but $\pi(u,v)$ may be of an arbitrarily large length. We divide the path into segments of length exactly $\epsi$, except for maybe one segment that can be shorter (see Figure \ref{figure uv path}).
	By \cref{dhxy}: 
	
\begin{equation}\label{eq55}
	\begin{array}{lllll}
	d_H(u,v) &\leq& \left(1+\frac{30\epsilon(i-1)}{\rho}\right)d_G(u,v) + \floor*{\frac{d_G(u,v)}{\epsi}}6\sum_{j=1}^{i}R_j\cdot 2 ^{i-j} +
	
	6\sum_{j=1}^{i}R_j\cdot 2 ^{i-j}\\

	&\leq&

	d_G(u,v)\left[
	\left(1+\frac{30\epsilon(i-1)}{\rho}\right)+
	6\epsilon^i\cdot \sum_{j=1}^{i}R_j\cdot 2 ^{i-j} 
	\right]+
	6\sum_{j=1}^{i}R_j\cdot 2 ^{i-j}.\\
	
	\end{array}
	\end{equation}

	It is left to show that 	
	$
	6{\epsilon^i}\cdot \sum_{j=1}^{i}R_j\cdot 2 ^{i-j} \leq
	\frac{30\epsilon}{\rho}
	$.
	
	First, we will provide an upper bound on $6\sum_{j=1}^{i}R_j\cdot 2 ^{i-j}$. Recall that by \cref{eq 4 rho}, $	R_{i}\leq ({4}/{\rho})\cdot\eps{(i-1)}$. 
%
%
	It follows that $
	6\sum_{j=1}^{i}R_j\cdot 2 ^{i-j} 
	\leq 
	6\sum_{j=1}^{i}({4}/{\rho})\cdot\eps{(j-1)}\cdot 2 ^{i-j} 
	=	
	{24}/({\rho\cdot \epsilon^{i-1}\cdot(1-2\epsilon)})$.
	Recall that $\epsilon<\frac{1}{10}$. Thus

\begin{equation}\label{bound 6sum}
	\begin{array}{cllllllllllll}
	6\sum_{j=1}^{i}R_j\cdot 2 ^{i-j} 
	
	&\leq &	
	\frac{24}{\rho\cdot \epsilon^{i-1}\cdot(1-2\epsilon)}
	
	&\leq&	
	\frac{30 }{\rho\cdot \epsilon^{i-1}}.\\
	
	\end{array}
	\end{equation}
	
	This yields: $
	6{\epsilon^i}\cdot \sum_{j=1}^{i}R_j\cdot 2 ^{i-j} 
	\leq {\epsilon^i}\cdot 	{30 }/({\rho\cdot \epsilon^{i-1}})
	=({30\epsilon})/{\rho}$. To conclude,

\begin{equation*}
	\begin{array}{lllll}
	d_H(u,v) 
	&\leq 
	d_G(u,v)\left[1+\frac{30\cdot \epsilon\cdot (i-1)}{\rho} + 
	
	\frac{30\epsilon}{\rho} \right]
	
	+ 6\sum\limits_{j=1}^{i}R_j 2 ^{i-j}
	
	&\leq
	d_G(u,v)\left(1+\frac{30\cdot\epsilon\cdot i}{\rho}\right) 
	
	+ 6\sum\limits_{j=1}^{i}R_j 2 ^{i-j}.
	\end{array}
	\end{equation*}

\end{proof}

Recall that by Corollary \ref{coro partition}, all vertices in $V$ are $U^{(\ell)}$-clustered, thus for any pair of vertices, the condition of Lemma \ref{lemma stretch} holds for $i=\ell$ and for every vertex $v\in V$ there is an index $i$ such that $v$ is $U_i$-clustered. By Lemma \ref{lemma stretch} and \cref{bound 6sum},

\begin{corollary}
	For every pair $u,v\in V$, for $\rho\geq 10\epsilon$, 
$
	d_H(u,v) \leq \left(1+\frac{30\cdot\epsilon\cdot \ell}{\rho}\right)\cdot d_G(u,v) 
	+ 
	\frac{30 }{\rho\cdot \epsilon^{\ell-1}}$.
\end{corollary}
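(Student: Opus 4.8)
The plan is to derive the corollary as an immediate instance of Lemma \ref{lemma stretch} at the top index $i=\ell$, combined with the bound \cref{bound 6sum} on the additive term that has already been established in the course of that lemma's proof.

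First I would invoke Corollary \ref{coro partition}, which states that $U^{(\ell)}=\bigcup_{i=0}^{\ell}U_i$ is a partition of $V$. Consequently, for an arbitrary pair $u,v\in V$ and a fixed shortest path $\pi(u,v)$ between them, every vertex of $\pi(u,v)$ is $U^{(\ell)}$-clustered. This is exactly the hypothesis needed to apply Lemma \ref{lemma stretch} with $i=\ell$; moreover, the standing assumptions $\epsilon\leq\frac{1}{10}$ and $\rho\geq 10\epsilon$ under which the present corollary is stated coincide with those of the lemma. Applying the lemma therefore yields
\[
d_H(u,v)\ \leq\ \left(1+\frac{30\cdot\epsilon\cdot\ell}{\rho}\right)d_G(u,v)\ +\ 6\sum_{j=0}^{\ell}R_j\cdot 2^{\ell-j}.
\]

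It then remains only to bound the additive term by $\frac{30}{\rho\cdot\epsilon^{\ell-1}}$. Since $R_0=0$ by \cref{def_ri+1}, the sum over $j\in[0,\ell]$ equals the sum over $j\in[1,\ell]$, and \cref{bound 6sum} --- which was obtained from the explicit estimate $R_j\leq(4/\rho)\cdot\eps{(j-1)}$ of \cref{eq 4 rho} together with $\epsilon<\frac{1}{10}$ --- gives, with $i=\ell$, that $6\sum_{j=1}^{\ell}R_j\cdot 2^{\ell-j}\leq\frac{30}{\rho\cdot\epsilon^{\ell-1}}$. Plugging this into the displayed inequality gives the claimed bound.

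I do not expect any real obstacle: all the substantive work resides in Lemma \ref{lemma stretch} and in the derivation of \cref{bound 6sum}. The only point meriting care is the index bookkeeping --- verifying that Corollary \ref{coro partition} justifies taking $i=\ell$ uniformly over all pairs $u,v$ (rather than a pair-dependent index), and that $R_0=0$ permits aligning the ranges of the two summations --- but this is routine once made explicit.
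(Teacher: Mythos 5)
Your proposal is correct and follows essentially the same route as the paper: invoke Corollary \ref{coro partition} to justify taking $i=\ell$ in Lemma \ref{lemma stretch}, then bound the additive term via \cref{bound 6sum} (with $R_0=0$ aligning the summation ranges). The only point worth noting is that the corollary as stated lists only $\rho\geq10\epsilon$, but $\epsilon\leq\frac{1}{10}$ follows from that together with $\rho<\frac{1}{2}$, so the hypotheses of the lemma are indeed met.
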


\subsubsection{Rescaling}\label{sec rescale}

In this section, we rescale  $\epsilon$ to obtain a $(1+\epsilon,\beta)$-spanner. Set $\epsilon' = ({30\cdot\epsilon\cdot \ell})/{\rho}$. 

The condition $\epsilon< \frac{1}{10}$ now translates to $({\epsilon'\cdot \rho})/({30\ell})<{1}/{10}$, i.e, $\epsilon'<\frac{3\ell}{\rho}$. We replace it with a much stronger condition, $\epsilon'<1$. 
The condition $\rho\geq  10\epsilon$ translates to $\rho=({30\cdot\epsilon\cdot\ell})/{\epsilon'}> 10\epsilon$, i.e., $\epsilon'< 3\ell$, which holds trivially for $\epsilon'<1$ and $\ell\geq 1$.


It follows that the additive term of the stretch now translates to $
\frac{30 }{\rho\cdot \epsilon^{\ell-1}} 
\leq 	\left(
\frac{30\cdot \ell}{\rho\cdot \epsilon'}
\right)^{\ell}.
$

Denote $\beta = \left(
({30\cdot \ell})/({\rho\cdot \epsilon'})
\right)^{\ell}.
$
Observe that 
\begin{equation}
\label{eq eps beta}
\begin{array}{lcccccccc}
\beta 
&= &	\left(
({30\cdot \ell})/({\rho\cdot \epsilon'})
\right)^{\ell}
&=& \eps{\ell}.
\end{array}
\end{equation}

Thus we obtain stretch 
$
\left(
1+\epsilon', \left(
({30\cdot \ell})({\rho\cdot \epsilon'})
\right)^{\ell}
\right).
$

Recall that $\ell = \floor*{{\log \kappa\rho}}+ \lceil\frac{\kappa+1}{\kappa\rho}\rceil-1 \leq 
\log\kappa\rho +\rho^{-1}+O(1) $. It follows that:

\begin{equation*}
\beta = \left(
\frac{30\cdot \ell}{\rho\cdot \epsilon'}
\right)^{\ell}=
\left(
\frac{30\cdot (\log\kappa\rho +\rho^{-1}+O(1))}{\rho\cdot \epsilon'}
\right)^{\log\kappa\rho +\rho^{-1}+O(1)}.
\end{equation*}

By Corollary \ref{coro size} and \cref{eq eps beta}, the number of edges in the spanner $H$ is:
\begin{equation*}
\begin{array}{lclclclc}
|H|& = O(\nfrac\cdot \eps{\ell}) &= & O(\beta\cdot \nfrac).
\end{array}
\end{equation*}

%
%
%
%
%

By Corollary \ref{coro rt} and \cref{eq eps beta}, the time required to construct the spanner $H$ is: 
\begin{equation*}
\begin{array}{lclclclc}

O\left(n^\rho\cdot\rho^{-1}\eps{\ell}\right) = 
O\left(\beta\cdot n^\rho\cdot\rho^{-1}\right).

\end{array}
\end{equation*}

Denote now $\epsilon = \epsilon'$.
\begin{corollary}
	For any parameters $0<\epsilon\leq 1$, $\kappa\geq 2$, and ${1}/{\kappa}\leq \rho<{1}/{2}$, and any $n$-vertex graph $G=(V,E)$, our algorithm constructs a $(1+\epsilon,\beta)-spanner$ with
	$ O(\beta\cdot \nfrac) $
	edges,
	in 
	$ O\left(\beta\cdot n^\rho\cdot\rho^{-1}\right) $
	deterministic time in the CONGEST model, where

\begin{equation}
	\beta =
	\bbeta.
	\end{equation}

\end{corollary}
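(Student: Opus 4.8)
The final corollary is essentially a bookkeeping statement: it collects the running-time bound (Corollary \ref{coro rt}), the size bound (Corollary \ref{coro size}), and the stretch bound (the corollary following Lemma \ref{lemma stretch}), and then applies the rescaling argument of Section \ref{sec rescale} to convert the ``working'' stretch $\left(1+\tfrac{30\epsilon\ell}{\rho},\ \tfrac{30}{\rho\epsilon^{\ell-1}}\right)$ into the desired form $(1+\epsilon,\beta)$. So the plan is to assemble these pieces in order rather than to prove anything genuinely new.

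First I would fix the target parameters $0<\epsilon\le 1$, $\kappa\ge 2$, ${1}/{\kappa}\le\rho<{1}/{2}$ and run the construction of Section \ref{sec construction of spanners} with an internal accuracy parameter $\epsilon_0$ chosen so that $\tfrac{30\epsilon_0\ell}{\rho}=\epsilon$, i.e.\ $\epsilon_0 = \tfrac{\rho\epsilon}{30\ell}$, where $\ell=\floor*{\log\kappa\rho}+\lceil\tfrac{\kappa+1}{\kappa\rho}\rceil-1$. I would check that this $\epsilon_0$ satisfies the two standing assumptions used throughout the analysis, namely $\epsilon_0\le\tfrac{1}{10}$ and $\rho\ge 10\epsilon_0$: the first follows from $\epsilon\le 1$, $\rho<\tfrac12$ and $\ell\ge 1$ (indeed $\epsilon_0\le\tfrac{\rho}{30\ell}<\tfrac{1}{60}$), and the second is exactly $\epsilon<3\ell$, which is immediate. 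This is the verification already sketched in Section \ref{sec rescale}.

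Next I would invoke Corollary \ref{coro rt}, Corollary \ref{coro size} and the stretch corollary with parameter $\epsilon_0$ in place of $\epsilon$: the algorithm is deterministic and runs in the \congestmo\ in time $O(n^\rho\rho^{-1}\eps[\epsilon_0]{\ell})$, produces a subgraph $H$ with $O(\nfrac\cdot\eps[\epsilon_0]{\ell})$ edges, and guarantees $d_H(u,v)\le\bigl(1+\tfrac{30\epsilon_0\ell}{\rho}\bigr)d_G(u,v)+\tfrac{30}{\rho\epsilon_0^{\,\ell-1}}$ for every pair $u,v$ (using Corollary \ref{coro partition} to know all vertices are $U^{(\ell)}$-clustered). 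By the choice of $\epsilon_0$ the multiplicative term is exactly $1+\epsilon$, and the key algebraic identity $\eps[\epsilon_0]{\ell}=\bigl(\tfrac{30\ell}{\rho\epsilon}\bigr)^{\ell}=\beta$ from \cref{eq eps beta} rewrites both the additive term (since $\tfrac{30}{\rho\epsilon_0^{\,\ell-1}}\le\bigl(\tfrac{30\ell}{\rho\epsilon}\bigr)^{\ell}=\beta$) and the size/time bounds in terms of $\beta$. Substituting $\ell\le\log\kappa\rho+\rho^{-1}+O(1)$ into $\beta=\bigl(\tfrac{30\ell}{\rho\epsilon}\bigr)^{\ell}$ gives the closed form in \cref{eq beta}. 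Renaming $\epsilon_0$-free quantities and setting the displayed $\epsilon$ to be the target $\epsilon$ completes the statement.

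There is no real obstacle here — the only things that require care are (i) making sure the rescaled $\epsilon_0$ still lies in the regime where Lemmas \ref{lemma stretch}, \ref{lemma ri bound not explicit} and the subsequent $R_i$-bound \cref{eq 4 rho} are valid, and (ii) tracking that the $O(\cdot)$ in $\ell=\log\kappa\rho+\rho^{-1}+O(1)$ is absorbed consistently into both the base and the exponent of $\beta$ so that the final expression matches \cref{eq beta} exactly. Everything else is the mechanical substitution already carried out in Section \ref{sec rescale}.
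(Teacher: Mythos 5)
Your proposal is correct and follows essentially the same rescaling argument as the paper's Section~\ref{sec rescale}: you phrase it forward (fix the target $\epsilon$, set the internal $\epsilon_0 = \rho\epsilon/(30\ell)$ so the multiplicative error is exactly $1+\epsilon$), while the paper phrases it backward (start from the internal parameter, define $\epsilon' = 30\epsilon\ell/\rho$, then rename $\epsilon'$ to $\epsilon$), but these are the same bookkeeping. Your checks that $\epsilon_0 \le 1/10$ and $\rho \ge 10\epsilon_0$, the bound $30/(\rho\epsilon_0^{\ell-1}) \le \beta$, and the substitution of $\beta = \epsilon_0^{-\ell}$ into Corollaries~\ref{coro rt} and~\ref{coro size} all match the paper's computation.
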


\bibliographystyle{alpha}
\bibliography{cite}

	\clearpage
\begin{appendix}

	\section{Detecting Popular Clusters} \label{section Detecting popular clusters}
	
	Given a graph $G=(V,E)$, a set of clusters $P_i$ centered around the vertices in $S_i$, parameters $\delta_i,deg_i$, we say that the cluster $C$ and its center $r_C$ are \textit{popular} if $|\Gamma^{(\delta_i)}(r_C)\cap S_i|\geq deg_i$, that is, if the center $r_C$ has at least $deg_i$ cluster centers within distance $\delta_i$ from it. This section provides a procedure that allows each cluster center $r_C$ for $C\in P_i$ to know if it is \textit{popular}.

	The procedure runs a modified BFS exploration from each vertex $r_C\in S_i$. In this exploration, 
	each vertex $u\in V$ maintains a list of all the cluster centers it learned about and the shortest known distance to them. The algorithm runs for $\delta_i$ phases. We note that the phases of this algorithm are different from the phases of the main algorithm described in Section \ref{sec construction of spanners}. Phase $0$ consists of a single round. Each phase $i>0$ consists of $deg_i$ rounds. Intuitively, the phases can be thought of as single rounds, in which vertices can send $deg_i$ messages.
	In round $0$ (also called phase $0$), each vertex $r_C\in S_i$ sends the message $\langle r_C,0\rangle$ to all its neighbors. The first part of the message is the ID of the original sender. The second part is the distance that the message has traversed so far. 
	In each phase $i>0$, each vertex $u\in V$ increments the distance on each message it received in the previous phase, and forwards these messages. Note that in each phase $j>0$, i.e., rounds $deg_i\cdot (j-1) +1$ to $deg_i\cdot j$ for $1\leq j\leq \delta_i$, each vertex sends messages that traversed exactly $j-1$ edges before reaching it.
	If a vertex $v\in V$ received messages of the form $\langle r_C,j\rangle$ regarding more than $deg_i$ centers, it will arbitrarily choose $deg_i$ of these messages and forward them. All other messages are discarded, i.e., they will never be sent.

	If a center $r_C\in S_i$ receives messages regarding at least $deg_i$ other clusters, it joins $A$. 
	Each vertex $u$ maintains a list of all the first $deg_i$ vertices it has learned about. The vertex $u$ maintains the shortest known distance to them, and its neighbor from which the message regarding each such center arrived. This is in order to trace back a shortest path from $u$ to vertices it has learned about.
	The pseudo-code of the algorithm (Algorithm \ref{Alg number of near neighbors}) is provided below.

	\begin{algorithm}[H]
		\caption{Number of near neighbors}
		\label{Alg number of near neighbors}
		\begin{algorithmic}[1]
			\State \textbf{Input:} graph $G=(V,E)$, a set of clusters $P_i$, parameters $deg_i,\delta_i$
			\State \textbf{Output:} a set $A$.
			\State Each vertex $v\in V$ initializes a list of centers it learned about and their distance from it as an empty list.
			\State Each $r_C\in S_i$ sends $\langle r_C.ID,0\rangle$ 
			\For {$j=1 \ to \ \delta_i$ }
			\For{$deg_i$ rounds}
			\If {$v$ received at most $deg_i$ messages $\langle r_C,j-1\rangle$}
			\State For each received messages $\langle r_C,j-1\rangle$, $v$ sends $\langle r_C,j\rangle$
			\EndIf
			\If {$v$ received more than $deg_i$ messages $\langle r_C,j-1\rangle$}
			\State {For arbitrary $deg_i$ received messages $\langle r_C,j-1\rangle$, $v$ sends $\langle r_C,j\rangle$}
			\EndIf
			\EndFor
			\EndFor 
			\State Each $r_C\in S_i$ that has learned about at least $deg_i$ other centers joins $A$. 
		\end{algorithmic}
	\end{algorithm}

	Next, we show that when the algorithm terminates, each vertex $u\in V$ maintains information regarding some of the centers in $S_i$ that are close to it. That is, $u$ knows the IDs of these vertices, its distance to them, and its neighbor that informed it of them.

	\begin{lemma}\label{lemma induc know neighbors}
		When Algorithm \ref{Alg number of near neighbors} terminates, each vertex $u\in V$ knows at least $\min\{deg_i,|\Gamma^{(\delta_i)}(u) \cap S_i| \}$ centers from $S_i$ with distance at most $\delta_i$ from $u$. 
	\end{lemma}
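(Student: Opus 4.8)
**The plan is to prove Lemma~\ref{lemma induc know neighbors} by induction on the phase index $j$ of Algorithm~\ref{Alg number of near neighbors}**, establishing the stronger statement that after phase $j$ completes, each vertex $u$ knows at least $\min\{deg_i, |\Gamma^{(j)}(u)\cap S_i|\}$ centers of $S_i$ lying within distance $j$ of $u$, together with the correct distance to each and the neighbor from which the corresponding message arrived. Setting $j=\delta_i$ at the end then yields the claim. First I would set up the base case $j=0$ (round~$0$): every $r_C\in S_i$ has sent $\langle r_C.ID,0\rangle$, so each vertex $u$ that is itself a center, or a neighbor of centers, records those; more precisely $u$ learns about all of $\Gamma^{(0)}(u)\cap S_i = \{u\}\cap S_i$ from its own broadcast (and about its center-neighbors), and since $|\Gamma^{(0)}(u)\cap S_i|\le 1 \le deg_i$ when $u\in S_i$, the bound $\min\{deg_i,|\Gamma^{(0)}(u)\cap S_i|\}$ is met. (One should be slightly careful here about what ``knows'' means in round $0$ versus what is forwarded in phase $1$; the cleanest framing is that after phase $j$, $u$'s list reflects everything that reached $u$ via paths of length $\le j$ subject to the per-vertex truncation to $deg_i$ distinct centers.)

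For the inductive step, assume the statement holds after phase $j-1$ and consider phase $j$. Fix $u$ and consider the set $\Gamma^{(j)}(u)\cap S_i$. If $|\Gamma^{(j)}(u)\cap S_i| \ge deg_i$, I want to show $u$ ends up knowing at least $deg_i$ of them; otherwise I want $u$ to know all of them. The key observation is that a center $r_C$ with $d_G(u,r_C)\le j$ has a neighbor $x$ of $u$ with $d_G(x,r_C)\le j-1$; by the induction hypothesis applied to $x$, either $x$ knows $r_C$ after phase $j-1$, or $x$ already knows $deg_i$ other centers all within distance $j-1$ (hence within distance $j$ of $u$). In the first case $x$ forwards $\langle r_C, j\rangle$ to $u$ in phase $j$ unless $x$'s outgoing set of $deg_i$ messages is already saturated with other centers $r_{C'}$, each of which satisfies $d_G(x,r_{C'})\le j-1$, so $d_G(u,r_{C'})\le j$ and those $r_{C'}$ are themselves valid witnesses that $u$ receives. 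So in every case, for each ``missing'' center within distance $j$ of $u$ there is a substitute center within distance $j$ of $u$ that $u$ does receive in phase $j$; combined with what $u$ already knew after phase $j-1$ (all within distance $j-1\le j$), a counting argument shows $u$'s list contains at least $\min\{deg_i,|\Gamma^{(j)}(u)\cap S_i|\}$ members of $S_i$, all within distance $\le j$. The distances and predecessor-neighbor pointers are maintained correctly because each message carries its hop count and $u$ records, for each center, the first arriving (hence shortest) such message and its sender.

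**The main obstacle I expect is the substitution/saturation bookkeeping in the inductive step** — namely, arguing rigorously that the per-vertex truncation to $deg_i$ forwarded messages never causes $u$ to fall below the $\min\{deg_i,|\Gamma^{(j)}(u)\cap S_i|\}$ threshold. The subtlety is that a center $r_C$ might be ``dropped'' at several different neighbors of $u$ simultaneously, and one must ensure that whenever $r_C$ is dropped somewhere, the messages that displaced it are themselves centers within range of $u$, and that these displacements do not over-count. I would handle this by a clean extremal/counting argument: if $u$ knows fewer than $deg_i$ centers after phase $j$, then \emph{no} neighbor $x$ of $u$ ever discarded an in-range center when forwarding to $u$ (since discarding only happens at a vertex sending $deg_i$ distinct in-range centers, all of which would then reach $u$), so every center within distance $j$ of $u$ is forwarded to $u$ by some neighbor and recorded; hence $u$ knows \emph{all} of $\Gamma^{(j)}(u)\cap S_i$, giving $|\Gamma^{(j)}(u)\cap S_i| < deg_i$ and the bound $\min\{deg_i,|\Gamma^{(j)}(u)\cap S_i|\} = |\Gamma^{(j)}(u)\cap S_i|$ is attained. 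This dichotomy avoids tracking which specific centers survive and keeps the proof manageable. I would also note that the total number of rounds is $\delta_i + \delta_i\cdot(deg_i-1) = O(deg_i\cdot\delta_i)$, matching the running-time claim in Theorem~\ref{theorem popular}.
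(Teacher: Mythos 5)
Your proof takes essentially the same route as the paper's: induction on the phase index $j$ with the strengthened invariant that after phase $j$ each vertex knows $\min\{deg_i,|\Gamma^{(j)}(u)\cap S_i|\}$ centers within distance $j$, and a case split that is the same as the paper's (either some neighbor of $u$ is saturated and therefore forwards $deg_i$ in-range messages to $u$, or every neighbor forwards all centers at distance exactly $j-1$ from it and $u$ inherits $\Gamma^{(j)}(u)\cap S_i$ up to the cap). Your "no neighbor ever discarded an in-range center when forwarding to $u$" dichotomy is a reformulation of the paper's Case 2 condition; the two arguments coincide.
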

	
	\begin{proof}
		We will prove by induction on the index of the phase $j$, that by the end of phase $j$, for all $j\geq 0$, each vertex $u$ knows at least $\min\{deg_i,|\Gamma^{(j)}(u) \cap S_i| \}$ centers from $S_i$ with distance at most $j$ from $u$. 
		
		For $j=0$ the claim is trivial since $\Gamma^{(0)}(u) = \{u\}$.
		
		For $j>0$, assume inductively that by the end of phase $j-1$, each vertex $v\in V$ knows $\min\{deg_i,|\Gamma^{(j-1)}(v) \cap S_i| \}$ centers from $S_i$ with distance at most $j-1$ from it. 
		
		Let $j>0$ and let $u$ be a vertex. The immediate neighbors of $u$ in $G$ have exactly $|\Gamma^{(j)}(u) \cap S_i|$ distinct vertices from $S_i$ within distance at most $j-1$ from them. By the induction hypothesis, each vertex $v$ which is a neighbor of $u$ knows at least 
		$\min\{deg_i,|\Gamma^{(j-1)}(v) \cap S_i| \}$ centers from $S_i$ within distance at most $j-1$ from it. 
		
		If one of these neighbors has at least $deg_i$ vertices from $S_i$ within distance exactly $j-1$ from it, then it has sent $deg_i$ messages to $u$ in phase $j$. Thus by the end of phase $j$, $u$ knows at least $deg_i$ vertices from $S_i$ within distance at most $j$ from it and the claim holds. Otherwise, in phase $j$, each vertex $v$ which is a neighbor of $u$, sent to $u$ messages regarding all vertices in $S_i$ within distance exactly $j-1$ from $v$. Moreover, by the induction hypothesis, at the end of phase $j-1$, the vertex $u$ already knows 
		$\min\{deg_i,|\Gamma^{(j-1)}(u) \cap S_i| \}$ centers from $S_i$ with distance at most $j-1$ from $u$. It follows that by the end of phase $j$, the vertex $u$ knows $\min\{deg_i,|\Gamma^{(j)}(u) \cap S_i| \}$ centers from $S_i$ with distance at most $j$ from it.
	\end{proof}

	We are now ready to prove Theorem \ref{theorem popular}. For convenience, it is stated here again:
	
	\textit{\textbf{Theorem \ref{theorem popular}} 	Given a graph $G=(V,E)$, a collection of clusters $P_i$ centered around vertices $S_i$ and parameters $\delta_i,\ \deg_i$, Algorithm \ref{Alg number of near neighbors} returns a set $A$ in $O(deg_i\cdot\delta_i)$ time such that: 
		\begin{enumerate}
			\item $A$ is the set of all centers of popular clusters from $P_i$.
			\item Every cluster center $r_C\in S_i$ that did not join $A$ knows the identities of all the centers $r_{C'}\in S_i$ such that $d_G(r_C,r_{C'})\leq \delta_i$. Furthermore, for each pair of such centers $r_C,r_{C'}$, there is a shortest path $\pi$ between them such that all vertices on $\pi$ know their distance from $r_{C'}$.
		\end{enumerate} 
	}

	\begin{proof}

		(1) Let $C\in P_i$ be a popular cluster, i.e., $r_C$ has at least $deg_i$ other centers in $S_i$ within distance $\delta_i$ from it. By Lemma \ref{lemma induc know neighbors}, when the algorithm terminates, $r_C$ knows at least $deg_i$ other cluster centers from $S_i$. Thus it joins $A$. 
		
		Let $r_C\in A$. Observe that $r_C$ joined $A$ because it learned about at least $deg_i$ clusters within distance $\delta_i$ from it. Thus it is a popular cluster.
		
		(2) Let $r_C\in S_i\backslash A$. Since $r_C$ did not join $A$, it has learned about less than $deg_i$ other cluster centers from $S_i$. Therefore, by Lemma \ref{lemma induc know neighbors}, the vertex $u$ has learned about $\min\{deg_i,|\Gamma^{(\delta_i)}(r_C) \cap S_i| \} =|\Gamma^{(\delta_i)}(r_C) \cap S_i|$ vertices in $S_i$ with distance at most $\delta_i$ from it. Hence it knows the identities of all the centers $r_{C'}\in S_i$ such that $d_G(r_C,r_{C'})\leq \delta_i$.
		
		Let $r_C$ be a center that did not join $A$, and let $r_{C'}$ be a center $d_G(r_{C'},r_C)\leq \delta_i$. The messages regarding $r_{C'}$ reached $r_C$ along some shortest $(r_{C'},r_C)$ path $\pi$. Since each vertex $v\in V$ maintains a list with all (up to $deg_i$) the centers it has learned about and the distances to them, all the vertices on the shortest path $\pi$ know their distance to $r_{C'}$.

		The running time of Algorithm \ref{Alg number of near neighbors} is trivially $O(deg_i\cdot\delta_i)$, as it is executed for 
		$\delta_i$ phases, each of which lasts up to $deg_i$ rounds. 
	\end{proof}
	
	\clearpage
	\section{Previous Results}\label{sec append prev res}
	
	In this section, we review relevant previous results for near-additive spanners.

	\begin{table}[H]
		\centering
		\begin{adjustbox}{width=1\textwidth}
			
			\begin{tabular}{|l|l|l|l|l|}
				\hline
				\textbf{authors}&
				\textbf{model} &	 
				\textbf{stretch}	& 	
				\textbf{size} &	
				\textbf{running time}\\ 
				\hline

				\hline
				\cite{ElkinP01}	&		
				centralized, deterministic &	 
				$(1+\epsilon,4)$	& 	
				$O(\epsilon^{-1}n^{\frac{4}{3}})$ &	
				$O(mn^\frac{2}{3})$\\

				\hline
				\cite{ElkinP01}	&		
				centralized, deterministic &	 
				$\left(1+\epsilon,\beta\right)$	& 	
				$O\left(\beta \nfrac\right)$ &	
				$\tilde{O}(m\cdot n)$\\
				
				&		
				&	 
				$\beta = \left(\frac{{\log \kappa}}{\epsilon}\right)^{\log \kappa}$	& 	
				&	
				\\

				&		
				 &	 
				$\left(1+\epsilon,\beta'\right)$	& 	
				$O\left(\beta' \nfrac\right)$ &	
				$\tilde{O}(n^{2+\rho})$\\
				
				&		
				&	 
				$\beta' = {\max\{\left(\frac{{\log \kappa}}{\epsilon}\right)^{\log \kappa},\kappa^{-{\log \rho}} \}}$	& 	
				&	
				\\
				
				\hline
				\cite{Elkin05}	&
				\congest, deterministic&		
				$(1+\epsilon,\beta)$& 	
				$O(\nfrac)$&	
				$O(n^{1+\frac{1}{2\kappa}})$\\ 
				&		
				&	 
				$\beta= \left(\frac{\kappa}{\epsilon}\right)^{O({\log \kappa})}\cdot\frho^{\frho} $& &	
				\\ 
				&		
				&	 
				$\frac{1}{2\kappa}<\rho<\frac{1}{2\kappa}+\frac{1}{3}$& &	
				\\

				\hline
				\cite{ElkinZ06}	&		
				\congest, randomized&	 
				$(1+\epsilon,\beta)$, 
				& 	
				$O(\nfrac)$&	
				$O(n^\rho)$\\ 
				&		
				&	 
				$\beta= \left(\frac{\kappa}{\epsilon}\right)^{O({\log \kappa})}\cdot\frho^{\frho} $ Or  & 	
				&	\\ 
				&		
				&	 
				$\beta= \left(\frac{\kappa{\log \kappa}}{\epsilon}\right)^{O({\log \kappa})}\cdot\frho^{\frho} $   & 	
				&	
				\\ 
				&		
				&	 
				$\frac{1}{2\kappa}<\rho<\frac{1}{2\kappa}+\frac{1}{3}$& 	
				&	
				\\

				\hline
				\cite{ThorupZ06}	&		
				centralized, randomized&	 
				$\left( 1+\epsilon, \left(\frac{O(1)}{\epsilon}\right)^\kappa \right) $& 	
				
				$O(n^{1+\frac{1}{\kappa}})$&	
				$O(mn^{1/\kappa})$ 
				\\
				
				\hline
				\cite{DerbelGP07} 	& 
				\local, deterministic &	 
				$(1+\epsilon,8{\log n})$	& 	
				$O(n^\frac{3}{2})$ &	 
				$O(\frac{1}{\epsilon}{\log n})$\\
				
				\hline
				\cite{derbelGPV08} 	& 
				\local, deterministic &	 
				$(1+\epsilon,2)$	& 	
				$O(\epsilon^{-1}n^\frac{3}{2})$ &	 
				$O(\epsilon^{-1})$\\

				\hline
				\cite{DerbelGPV09}	&
				\local, deterministic		&
				$(1+\epsilon,O(\frac{1}{\epsilon})^{\kappa-2})$&	 
				$O(\epsilon^{-\kappa+1}\nfrac)$ & 	
				$O(1)$
				\\
				&
				&
				$(1+\epsilon,\beta)$, $\beta= \left(\frac{{\log \kappa}}{\epsilon}\right)^{O({\log \kappa})}$&	 
				$O(\beta\nfrac)$ & 	
				$ O(\beta\cdot 2^{O(\sqrt{{\log n}})}) $
				\\

				&
				&
				$(1+\epsilon,\beta)$, $\beta= 
				\left(
				\frac{{\log n}}{\epsilon}
				\right)^{O({\log {\log n}})}
				$&	 
				$O(\beta n)$ & 	
				$ O(\beta\cdot {\log n}) $
				\\

				\hline
				\cite{Pettie09} 	& 
				centralized, randomized &	 
				$\left( 1+\epsilon,O\left(  
				\epsilon^{-1}\cdot{\log{\log n}}
				\right)^{\log {\log n}}\right)$	& 	
				$O\left(n{\log{\log (\epsilon^{-1}{\log{\log n) }} }} \right)$ &	
				NA\\
				
				\hline
				\cite{Pettie09} 	& 
				centralized, randomized &	 
				$\left( 1+\epsilon,O\left(  
				\epsilon^{-1}\cdot{\log{\log n}}
				\right)^{\log {\log n}}\right)$	& 	
				$O\left((1+\epsilon)n \right)$ &	
				NA\\

				\hline
				\cite{Pettie10} 	& 
				\congest, randomized &	 
				$\left( 1+\epsilon,
				O\left(\frac{{\log \kappa}+\rho^{-1}}{\epsilon}\right)^
				{\log_{\phi}\kappa+\rho^{-1}} \right)$	& 	
				$O\left(\nfrac \left( \frac{{\log \kappa}}{\epsilon}\right)^\phi \right)$ &	
				$\tilde{O}(n^\rho)$\\
				& 
				&	 
				$\phi=\frac{1+\sqrt{5}}{2}$	& 	
				&	
				\\
				
				\hline
				\cite{AbboudBP17}&
				centralized, randomized &	 
				$(1+\epsilon,O(\frac{\log\kappa}{\epsilon})^{\log\kappa-1})$	& 	
				$O( (\frac{\log\kappa}{\epsilon})^{h_\kappa}\log\kappa n ^{1+\frac{1}{\kappa}})$, $h_\kappa<\frac{3}{4}$ &	
				NA\\
				
				\hline

				\cite{ElkinN17}
				&	\congest, randomized &	 
				$\left(1+\epsilon,
				\beta\right)$ 	& 	
				$O(n^{1+\frac{1}{\kappa}})$ &	$O(n^\rho\cdot\rho^{-1}\cdot \beta\cdot {\log n})$\\

				&	 &	 
				$\beta=
				O\left(
				\frac{{\log \kappa}+\rho^{-1}}
				{\epsilon}\right)^{\log(\kappa)+\rho^{-1}}$ 	& 	
				&
				\\ 
				
				\hline
				\hline
				\textbf{New}&
				\congest, deterministic &	 
				$\left(1+\epsilon,\beta\right)$	& 	
				$ O\left(\beta\cdot\nfrac\right)$ &	
				$O\left(\beta\cdot n^\rho\cdot\rho^{-1}\right)$\\

				&
				&	 
				$\beta=			 
				\bbeta $	& 	
				
				&	
				\\

				\hline
			\end{tabular}
		\end{adjustbox}
		\caption{Previous results for near-additive spanners} \label{table near-additive}	
	\end{table}

\end{appendix}

\end{document}